\documentclass[12pt]{article}
\usepackage[authoryear]{natbib}
\usepackage{graphicx,bm,multirow}
\usepackage{amsthm,amsfonts,amssymb,color,comment}
\usepackage{amsmath}
\usepackage{comment}
\usepackage[hidelinks]{hyperref} 
\usepackage{booktabs, siunitx}
\usepackage[a4paper, total={6.5in, 9in}]{geometry}
\usepackage{setspace}
\usepackage{algorithm}
\usepackage{algpseudocode}
\usepackage{multirow}
\usepackage{enumitem} 
\usepackage{caption}
\captionsetup{compatibility=false}
\usepackage{xcolor}
\usepackage{rotating}
\usepackage[affil-it]{authblk}
\usepackage{adjustbox}

\newtheorem{Theorem}{Theorem}
\newtheorem{Lemma}{Lemma}

\newtheorem{Corollary}{Corollary}

\newtheorem{Assumption}{Assumption}
\newtheorem{Remark}{Remark}

\usepackage{tikz}


\usepackage{subcaption}

\numberwithin{equation}{section}

\usepackage{chngcntr}

\definecolor{Red}{rgb}{.9,0,0}

\begin{document}
\title{A Bayesian Approach to Joint Estimation of Multiple Graphical Models}

\author{Peyman Jalali, Kshitij Khare\footnote{Kshitij Khare was partially supported by NSF grant DMS-119545} \ and George Michailidis}
\affil{Department of Statistics and Informatics Institute\\
University of Florida}  \date{}

\maketitle

\onehalfspacing

\begin{abstract}
The problem of joint estimation of multiple graphical models from high dimensional data has been studied in the statistics and machine learning literature, due to its importance in diverse fields including molecular biology, neuroscience and the social sciences. 
This work develops a Bayesian approach that decomposes the model parameters across the
multiple graphical models into shared components across subsets of models and edges, and idiosyncratic ones. Further, it leverages a novel multivariate 
prior distribution, coupled with a pseudo-likelihood that enables fast computations through a robust and efficient Gibbs sampling scheme.
We establish strong posterior consistency for model selection, as well as estimation of model parameters under high dimensional scaling with the number of variables 
growing exponentially with the sample size. The efficacy of the proposed approach is illustrated on both synthetic and real data.

 \vspace{0.2cm}
  \noindent {\bf Keywords:}  Pseudo-likelihood, Gibbs sampling, posterior consistency, Omics data
\end{abstract}

\section{Introduction}
The problem of {\em joint estimation} of multiple {\em related} Gaussian graphical models has attracted a lot of interest in the statistics and machine learning due to its wide application in biomedical studies involving Omics data -e.g. \cite{pierson2015sharing} and \cite{kling2015efficient}-, as well as text mining and roll call voting (\cite{guo2011joint}). The key idea is to ``borrow strength" across the related models and thus enhance the ``effective" sample size used for estimation of the model parameters, which is achieved primarily by the use of various penalty functions. Specifically, \cite{guo2011joint}, who first formulated the
problem, modeled the elements of each inverse covariance matrix as a product of a {\em common} across all models component and an {\em idiosyncratic} (model specific) component and imposed an $\ell_1$ penalty on each one; thus, when the penalty sets the common component to zero, the corresponding edge is absent
across all models, whereas if the common component is not zero, edges can be absent because the penalty sets the idiosyncratic one to zero for selected models. Another set of approaches aims to achieve a certain amount of ``fusing" across all models under consideration, thus focusing both of the presence
of common edges, as well as their absence across {\em all} models simultaneously. Examples of such approaches include \cite{danaher2014joint} that employed a group lasso and/or a fused lasso penalty on each edge parameter across all models and \cite{cai2016joint} that used a mixed $\ell_1/\ell_{\infty}$
norm for the same task. Variants of the above approaches with modifications to the penalties have also been explored (\cite{zhu2014structural}, \cite{majumdar2018joint}).

However, in many application settings, shared connectivity patterns across models occurs only for a subset of edges, while the remaining ones exhibit different connectivity patterns in each model. In other settings, subsets of edges share common connectivity patterns across only a subset of models. In both
instances, the previously mentioned approaches will exhibit a rather poor performance in discovering these more complex patterns. To address this issue, \cite{ma2016joint} proposed a {\em supervised} approach based on fusing through a group lasso penalty, wherein the various connectivity patterns across 
subsets of edges and subsets of models are {\em a priori} known. An alternative supervised approach \cite{saegusa2016joint} employed a similarity graph penalty for fusing across models, coupled with an $\ell_1$ penalty for obtaining sparse model estimates. The similarity graph is assumed to be a priori
known. A Bayesian variant of the latter approach was introduced in  \cite{peterson2015bayesian}, wherein  a Markov random field prior distribution was used to capture model similarity, followed by a spike-and-slab prior distribution on the edge model parameters. 
Another Bayesian approach was recently developed in \cite{tan2017bayesian} which, similar to \cite{peterson2015bayesian}, uses $G$-Wishart prior distributions on the group-wise precision matrices given the sparsity patterns/networks in each group,  and then employs a multiplicative model based hierarchical prior distribution on these networks to induce similarity/dependence.

Most of the frequentist approaches reviewed above come with performance guarantees in the form of high probability errors bounds for the model parameters as a function of the number of variables (nodes), sparsity level and sample size. Some recent work has focused on constructing confidence intervals
for the difference in the magnitude of the edge parameter across two models that are estimated {\em separately} using an $\ell_1$ penalization. 
On the other hand, theoretical guarantees based on high-dimensional posterior consistency results are not available for the Bayesian approaches mentioned above. Also, these approaches can suffer from 
computational scalability/efficiency issues in moderate/high dimensional settings, say in the presence of $p > 30$ nodes/variables. 

The key objective of this paper is to develop a {\it scalable} approach to jointly estimate multiple related Gaussian graphical models that exhibit complex edge connectivity patterns across models for different {\em subsets of edges}. 
To that end, we introduce a novel {\em Subset Specific ($\mathcal{S}^2$) prior} that for each edge aims to select the 
subset of models it is common to. We couple it with the Gaussian pseudo-likelihood used in \cite{khare2015convex} for estimating a single Gaussian graphical model, that leads to an easy to implement and scalable Gibbs sampling scheme for exploring the posterior distribution. Finally, we establish strong posterior model selection
consistency results that can be leveraged for construction of credible intervals for the edge parameters. Intuitively, the proposed framework achieves the objectives set forth in the \cite{ma2016joint} work, {\em without} requiring a priori specification of the shared edge connectivity patterns; thus, the approach
can be considered as fully unsupervised.

The remainder of the paper is organized as follows. Section 2 formulates the problem, while Section 3 introduces the 
$\mathcal{S}^2$ prior. In Section 4, we combine this prior with a Gaussian pseudo-likelihood to obtain a (pseudo) 
posterior distribution of the model parameters, and discuss how to sample from the posterior distribution. Section 5 
establishes high-dimensional consistency properties for this posterior distribution. Section 6 presents extensive numerical 
results based on synthetic data for the framework's performance and comparisons with existing approaches in the literature. 
Section 7 presents an application to metabolomics data from a case-control study on Inflammatory Bowel Disease.

\section{Framework for Joint Estimation}

\noindent
Suppose we have data from $K$ {\it a priori} defined groups. For each group $k$ ($k=1,2,...,K$), let $\mathcal{Y}_k := 
\left\{ \mathbf{y}_{i:}^k\right\}_{i=1}^{n_k}$ denote $p$-dimensional {\it i.i.d} observations from a multivariate normal 
distribution, with mean $\boldsymbol{0}$ and covariance matrix $\left(\mathbf{\Omega}^k\right)^{-1}$, which is specific to 
group $k$. Based on the discussion in the introductory section, the $K$ precision matrices 
$\left\{ \mathbf{\Omega}^k \right\}_{k=1}^K$ can share common patterns across subsets of the $K$ models, as delineated 
next. Our goal is to account for these shared structures. 

Let $\mathcal{P}\left(K\right)$ denote the power set of $\{1,\cdots,K$\} and for $k=1, ..., K$, define $\vartheta_k$ as follows:
\begin{equation}
\vartheta_k = \left\{ r \in \mathcal{P}\left(K\right)\setminus \{0\}: k\in r \right\}, \quad \quad k = 1, ..., K.
\end{equation}
It is easy to check that each $\vartheta_k$ is the collection of subsets which contain $k$, and has 
$\sum_{i=0}^{K-1}
\begin{pmatrix}
K-1 \\ i
\end{pmatrix} 
= 2^{K-1}$ 
members. 
Denote by $\mathbf{\Psi}^{r}$ the matrix that contains common patterns amongst precision matrices 
$\left\{ \mathbf{\Omega}^j \right\}_{j \in r}$. Specifically, for any singleton set $r =\{k\}$, the matrix $\mathbf{\Psi}^{r}$ contains a pattern that is unique to  group $k$, while for any other set $r$ containing more than a single
element, $\mathbf{\Psi}^{r}$ captures edge connectivity patterns (and their magnitudes) that are common across all members in $r$. For example, $\mathbf{\Psi}^{123} := \mathbf{\Psi}^{\left\{1,2,3\right\}}$ contains shared structures in $\mathbf{\Omega}^1$, $\mathbf{\Omega}^2$, and $\mathbf{\Omega}^3$.

Therefore, each precision matrix $\mathbf{\Omega}^k$ can be decomposed as
\begin{equation}\label{eq21}
\mathbf{\Omega}^{k} = \sum\limits_{r \in \mathop \vartheta \nolimits_k } { \mathbf{\Psi}^{r}}, \quad \quad \quad k=1, ..., K,
\end{equation}
where $\sum\limits_{r \in \mathop \vartheta \nolimits_k } { \mathbf{\Psi}^{r}}$ accounts for all the structures in $\mathbf{\Omega}^{k}$ which are either unique to group $k$ (i.e. $\mathbf{\Psi}^k$) or are shared exclusively between group $k$ and some combination of other groups (i.e. $\sum\limits_{r \in \vartheta_k\setminus \{k\}  } { \mathbf{\Psi}^{r}}$).
We further assume that $\mathbf{\Psi}^k \in \mathbb{M}_{p}^{+}$ for $k=1,2,...,K$, where $\mathbb{M}_{p}^{+}$ denotes the space of all $p \times p$ matrices with positive diagonal entries. Finally, the diagonal entries of every joint matrix $\mathbf{\Psi}^r$, with $r \in \cup_{k=1}^{K}\left(\vartheta_k\setminus\{k\}\right)$ are set to zero; in other words, the diagonals entries of $\mathbf{\Omega}^k$ are contained in the corresponding $\mathbf{\Psi}^{k}$.

To illustrate the notation, consider the case of $K=3$ groups. Then, each precision matrix is decomposed as 
\begin{equation*}
\begin{split}
\mathbf{\Omega}^1 &= \mathbf{\Psi}^1 + \mathbf{\Psi}^{12} + \mathbf{\Psi}^{13} + \mathbf{\Psi}^{123}\\
\mathbf{\Omega}^2 &= \mathbf{\Psi}^2 + \mathbf{\Psi}^{12} + \mathbf{\Psi}^{23} + \mathbf{\Psi}^{123}\\
\mathbf{\Omega}^3 &= \mathbf{\Psi}^3 + \mathbf{\Psi}^{13} + \mathbf{\Psi}^{23} + \mathbf{\Psi}^{123}
\end{split}
\end{equation*}
where the $\mathbf{\Psi}^1$, $\mathbf{\Psi}^2$, and $\mathbf{\Psi}^3$ matrices contain group specific patterns, the $\mathbf{\Psi}^{12}$, $\mathbf{\Psi}^{13}$, $\mathbf{\Psi}^{23}$ matrices contain patterns shared across pairs of models (for subsets of the edges) and finally matrix $\mathbf{\Psi}^{123}$ contains patterns
shared across all models/groups.

\subsection{Identifiability Considerations} \label{Identifiability}

\noindent
A moment of reflection shows that the model decomposition  (\ref{eq21}) is not unique.  For example, for any arbitrary matrix $\mathbf{X}$, the model (\ref{eq21}) is equivalent to $\mathbf{\Omega}^{k} = \sum\limits_{r \in \mathop \vartheta \nolimits_k } {\mathop \mathbf{\Phi}^r }$ with $\mathbf{\Phi}^r = \mathbf{\Psi}^r + \mathbf{X}$ and $\mathbf{\Phi}^k = \mathbf{\Psi}^k - \frac{1}{2^{K-1} - 1}\mathbf{X}$. Hence, without imposing appropriate identifiability constraints, meaningful estimation of the model parameters is not feasible.

In order to address this issue, we first rewrite the element-wise representation of model (\ref{eq21}):
\begin{equation}\label{eq22}
{\omega}^{k}_{ij} = \sum\limits_{r \in \vartheta_k } {{\psi}^{r}_{ij} }, \quad \quad 1 \le i < j \le p, 1 \leq k \leq K,
\end{equation}
where ${\omega}^{k}_{ij}$, and ${\psi}^r_{ij}$ are the $ij^{\text{th}}$ coordinates of the matrices $\mathbf{\Omega}^k$ and $\mathbf{\Psi}^r$, respectively. We only consider the upper off-diagonal entries due to the symmetry of the precision matrix and
thus define vectors $\boldsymbol{\theta}_{ij}$ for every $1\leq i < j \leq p$, as follows
\begin{equation}\label{theta_ij}
\theta_{ij} = \left\{ \psi_{ij}^r \right\}_{r \in \mathcal{P}\left(K\right) \setminus\{0\}},
\end{equation}
where each $\boldsymbol{\theta}_{ij}$ has $2^{K}-1$ distinct parameters. For {\em identifiability purposes} we require that each vector $\boldsymbol{\theta}_{ij}$ has at most {\em one non-zero element}. Note that under this constraint, if an edge $(i,j)$ is shared amongst
many groups, the non-zero element will be allocated to the maximal set $s \in \cup_{k=1}^{K}\left(\vartheta_k\setminus\{k\}\right)$, while {\em all subsets} of $s$ will be allocated a zero value. Further, the magnitude of {\em all} $\{\omega_{ij}^k\}_{k\in s}$ will be the same. As an example, consider again the case of $K=3$ groups and an edge $(i,j)$ shared amongst all three groups. In this case, the edge will be allocated to the $\mathbf{\Psi}^{123}$ component and not to any other components, such as 
$\mathbf{\Psi}^{12}$ or $\mathbf{\Psi}^{13}$. Hence $\mathbf{\Psi}^{123}_{ij}$ will be non-zero, but 
$$
\mathbf{\Psi}^{12}_{ij} = \mathbf{\Psi}^{13}_{ij} = \mathbf{\Psi}^{23}_{ij} = \mathbf{\Psi}^{1}_{ij} = \mathbf{\Psi}^{2}_{ij} = 
\mathbf{\Psi}^{3}_{ij} = 0. 
$$

\noindent
Next, we construct a novel prior distribution that respects the introduced identifiability constraint.

\section{Subset Specific ($\mathcal{S}^2$) Prior Distribution}

\noindent
For any generic symmetric $p\times p$ matrix $\mathbf{A}$, define
\begin{equation*}
\begin{split}
\underline{\boldsymbol{a}} = \left( a_{12}, a_{13}, ..., a_{p-1p}\right), 
\boldsymbol{\delta}_{\mathbf{A}} = \left(a_{11}, ..., a_{pp}\right),
\end{split}
\end{equation*}
where due to the symmetric nature of $\mathbf{A}$, the vector $\underline{\boldsymbol{a}}$ contains all the off-diagonal elements, while $\boldsymbol{\delta}_{\mathbf{A}}$ the diagonal ones. In particular, ${\underline{\boldsymbol{\psi}}}^r$ is the vectorized version of the off-diagonal elements of $\mathbf{\Psi}^r$. 

Using the above notation, define $\mathbf{\Theta}$ to be the vector obtained by combining the vectors $\left\{ {\underline{\boldsymbol{\psi}}}^r, r\in \mathcal{P}\left(K\right) \setminus \{0\}\right\}$. To illustrate, for $K=3$ groups, $\mathbf{\Theta}$ is given by 
\begin{equation}\label{theta}
{\mathbf{\Theta}} = ( {{\underline{\boldsymbol{\psi}}}^{123}}, {{\underline{\boldsymbol{\psi}}}^{23}}, {{\underline{\boldsymbol{\psi}}}^{13}}, {{\underline{\boldsymbol{\psi}}}^{12}},{{\underline{\boldsymbol{\psi}}}^{3}},{{\underline{\boldsymbol{\psi}}}^{2}},{{\underline{\boldsymbol{\psi}}}^{1}})'.
\end{equation}
In view of (\ref{theta_ij}), it can be easily seen that that $\mathbf{\Theta}$ is a rearrangement of the vector $(\boldsymbol{\theta}_{12}, \boldsymbol{\theta}_{13}, ...,$ $ \boldsymbol{\theta}_{p-1p})'$. Thus, according to the location of the zero coordinates in $\boldsymbol{\theta}_{ij}$ ($2^{K}$ possibilities), 
there are $2^{\frac{Kp(p-1)}{2}}$ possible sparsity patterns across the $K$ groups that $\mathbf{\Theta}$ can have. Let $\mathbf{\ell}$ be a generic sparsity pattern for $\mathbf{\Theta}$ and denote the set of all the $2^{\frac{Kp(p-1)}{2}}$ sparsity patterns by $\mathcal{L}$. To illustrate, consider $K=2$ groups and $p=3$ variables. In this case, each matrix has 3 off-diagonal edges ($\{ij: 1\leq i < j \leq p\} = \{12, 13, 23\}$). Now, assume edge $12$ is shared between the two groups, edge 13 is unique to group 2, and edge 23 is absent in both groups. In this case, $\mathbf{\Theta}$ is given by 
$
\mathbf{\Theta} = \left( \left(\psi_{12}^{12}, 0,0\right), \left(0,\psi^2_{13},0\right), \left(0, 0,0\right) \right)';
$
and the sparsity pattern extracted from $\mathbf{\Theta}$ is as follows: 
\begin{equation*}
\boldsymbol{\ell} = \left( \left( 1, 0, 0\right), \left(0, 1, 0\right), \left(0, 0, 0\right)\right)'.
\end{equation*}
For every sparsity pattern $\boldsymbol{\ell}$, let $d_{\boldsymbol{\ell}}$ be the density (number of non-zero entries) of 
$\boldsymbol{\ell}$, and $\mathcal{M}_{\boldsymbol{\ell}}$ be the space where $\mathbf{\Theta}$ varies, when 
restricted to follow the sparsity pattern $\boldsymbol{\ell}$. We specify the hierarchical prior distribution ${S}^2$ as follows:
\begin{equation}\label{hierarchical prior 1}
\pi \left( \mathbf{\Theta} | \boldsymbol{\ell}\right) = \frac{|\mathbf{\Lambda}_{\boldsymbol{\ell}\boldsymbol{\ell}}|^{\frac{1}{2}}}{\left(2\pi \right)^{\frac{d_{\boldsymbol{\ell}}}{2}}} \exp \left(-\frac{\mathbf{\Theta}'\mathbf{\Lambda}\mathbf{\Theta}}{2} \right)I_{\left( \mathbf{\Theta} \in \mathcal{M}_{\ell}\right)},
\end{equation}

\begin{equation}\label{hierarchical prior 2}
\pi(\boldsymbol{\ell}) \propto \begin{cases}
q_1^{d_{\boldsymbol{\ell}}}(1-q_1)^{\binom{p}{2} - d_{\boldsymbol{\ell}}} & d_{\boldsymbol{\ell}} \leq \tau, \\
q_2^{d_{\boldsymbol{\ell}}}(1-q_2)^{\binom{p}{2} - d_{\boldsymbol{\ell}}} & d_{\boldsymbol{\ell}} > \tau,
\end{cases}
\end{equation}
where $\mathbf{\Lambda}$ is a diagonal matrix whose entries determine the amount of shrinkage imposed on the 
corresponding elements in $\mathbf{\Theta}$, $\mathbf{\Lambda}_{\boldsymbol{\ell}\boldsymbol{\ell}}$ is a sub-matrix of $
\mathbf{\Lambda}$ obtained after removing the rows and columns corresponding to the zeros in $\mathbf{\Theta}\in 
\mathcal{M}_{\boldsymbol{\ell}}$, and $q_1$ and $q_2$ are edge inclusion probabilities, respectively, for the case of sparse 
($d_{\boldsymbol{\ell}} \leq \tau$) and dense ($d_{\boldsymbol{\ell}} > \tau$) $\mathbf{\Theta}$. Later in our theoretical 
analysis, we specify values for $q_1$, $q_2$, and the threshold $\tau$. Let $\mathbf{\Theta}_{\boldsymbol{\ell}}$ be the 
vector containing the non-zero coordinates of $\mathbf{\Theta}\in \mathcal{M}_{\boldsymbol{\ell}}$. Then, the prior in 
(\ref{hierarchical prior 1}) corresponds to putting an independent normal prior on each entry of 
$\mathbf{\Theta}_{\boldsymbol{\ell}}$. 

Using the prior distribution posited in (\ref{hierarchical prior 1}) and (\ref{hierarchical prior 2}), we derive the marginal prior distribution on $\mathbf{\Theta}$, as follows
\begin{equation}\label{prior on theta}
\begin{split}
\pi\left( \mathbf{\Theta} \right)  = \sum\limits_{ \boldsymbol{\ell} \in \mathcal{L}}\pi\left(\mathbf{\Theta} | \boldsymbol{\ell}\right)\pi\left(\boldsymbol{\ell}\right)
&=\sum\limits_{ \boldsymbol{\ell} \in \mathcal{L}} \left\{ \frac{|\mathbf{\Lambda}_{\boldsymbol{\ell}\boldsymbol{\ell}}|^{\frac{1}{2}}}{\left(2\pi \right)^{\frac{d_{\boldsymbol{\ell}}}{2}}} \exp \left(-\frac{\mathbf{\Theta}'\mathbf{\Lambda}\mathbf{\Theta}}{2} \right)I_{\left( \mathbf{\Theta} \in \mathcal{M}_{\ell}\right)} \right\} \pi \left( \boldsymbol{\ell} \right)\\
&= \sum\limits_{ \boldsymbol{\ell} \in \mathcal{L}} u_\ell \frac{|\mathbf{\Lambda}_{\boldsymbol{\ell}\boldsymbol{\ell}}|^{\frac{1}{2}}}{\left(2\pi \right)^{\frac{d_{\boldsymbol{\ell}}}{2}}}  \exp \left(-\frac{\mathbf{\Theta}_\ell'\mathbf{\Lambda}_{\ell\ell}\mathbf{\Theta}_\ell}{2} \right) I_{\left( \mathbf{\Theta}_\ell \in \mathbb{R}^{d_\ell}\right)}, 
\end{split}
\end{equation}

\noindent
where 
\begin{equation}\label{mixing probs}
u_{\ell} =  \frac{ q_1^{d_{\boldsymbol{\ell}}}(1-q_1)^{\binom{p}{2} - d_{\boldsymbol{\ell}}} I_{\left\{d_{\boldsymbol{\ell}}\leq \tau\right\}} + q_1^{d_{\boldsymbol{\ell}}}(1-q_1)^{\binom{p}{2} - d_{\boldsymbol{\ell}}} I_{\left\{d_{\boldsymbol{\ell}}\leq \tau\right\}}}{\sum\limits_{\ell \in \mathcal{L}} \left[ q_1^{d_{\boldsymbol{\ell}}}(1-q_1)^{\binom{p}{2} - d_{\boldsymbol{\ell}}} I_{\left\{d_{\boldsymbol{\ell}}\leq \tau\right\}} + q_1^{d_{\boldsymbol{\ell}}}(1-q_1)^{\binom{p}{2} - d_{\boldsymbol{\ell}}} I_{\left\{d_{\boldsymbol{\ell}}\leq \tau\right\}}\right]}. 
\end{equation}

\noindent
In other words, $\pi (\mathbf{\Theta})$ can be regarded as a mixture of $2^{\frac{Kp(p-1)}{2}}$ multivariate normal 
distributions of dimensions $d_\ell$ that is obtained after projecting a larger dimension ($\frac{p(p-1)}{2}(2^{K}-1)$) multivariate normal 
distribution  into the union of all the subspaces $\mathcal{M}_\ell$; namely, 
$\bigcup\limits_{ \boldsymbol{\ell} \in \mathcal{L}}\mathcal{M}_{\boldsymbol{\ell}}$. Note that the $\mathcal{S}^2$ prior 
induces sparsity on $\mathbf{\Theta}$, which will be helpful for model selection purposes. Further, the prior respects the 
identifiability constraint by forcing {\it at least} $\frac{p(p-1)}{2}(2^{K}-2)$  parameters to be exactly equal to zero. In addition to 
forcing sparsity, the diagonal entries of $\mathbf{\Lambda}_{\ell\ell}$ enforce shrinkage to the corresponding elements in 
$\mathbf{\Theta}_{\ell}$. We shall discuss the selection of these shrinkage parameters later in 
Section \ref{sec:select}. 

Note that the vector $\mathbf{\Theta}$ only incorporates the off-diagonal entries of $\mathbf{\Psi}$ matrices. Regarding the 
diagonal entries, for every $k \in \{1, ...,K\}$, we let $\boldsymbol{\delta}_{\mathbf{\Psi}^k}$ be the vector 
comprising of the diagonal elements of the matrix ${\mathbf{\Psi}^k}$ and define $\mathbf{\Delta}$ to be the vector of all 
diagonal vectors $\boldsymbol{\delta}_{\mathbf{\Psi}^K}$, i.e.
\begin{equation}\label{delta}
\mathbf{\Delta} = \left( \boldsymbol{\delta}_{\mathbf{\Psi}^1}, ..., \boldsymbol{\delta}_{\mathbf{\Psi}^K}\right).
\end{equation}

\noindent
We assign an independent Exponential($\gamma$) prior on each coordinate of $\mathbf{\Delta}$ (diagonal element of the matrices $\mathbf{\Psi}_{k}$, $k=1, ...,K$), i.e,
\begin{equation}\label{prior on delta}
\pi\left( \mathbf{\Delta} \right) \propto \exp \left( -\gamma \boldsymbol{1}'\mathbf{\Delta} \right) I_{\mathbb{R}_{+}^{Kp}}\left( \mathbf{\Delta} \right).
\end{equation}

\noindent
The selection of the hyperparameter $\gamma$ is also discussed in Section \ref{sec:select}. Since the diagonal entries of 
every joint matrix $\mathbf{\Psi}^r$, with $r \in \cup_{k=1}^{K}\left(\vartheta_k\setminus\{k\}\right)$ are set to zero, the 
specification of the prior is now complete.

\section{The Bayesian Joint Network Selector (BJNS)}

\noindent
Estimation of the model parameters $(\mathbf{\Theta}, \mathbf{\Delta})$ is based on a pseudo-likelihood approach. The 
pseudo-likelihood, which is based on the regression interpretation of the entries of $\Omega$, can be regarded as a weight 
function and as long as the product of the pseudo-likelihood and the prior density is integrable over the parameter space, 
one can construct a (pseudo) posterior distribution and carry out  Bayesian inference. The main advantage of using a 
pseudo-likelihood, as opposed to a full Gaussian likelihood, is that it allows for an easy to implement sampling scheme 
from the posterior distribution and in addition provides more robust results under deviations from the Gaussian assumption, 
as illustrated by work in the frequentist domain \cite{khare2015convex,peng2009partial}. Note that the pseudo-likelihood 
approach does not respect the positive definite constraint on the precision matrices under consideration, but since our 
primary goal is estimating the skeleton of the underlying graphs this mitigates this issue. Further, accurate estimation of 
the magnitude of the estimated edges can be accomplished through a refitting step of the model parameters restricted to 
the skeleton, as shown in \cite{ma2016joint}. We will also establish high-dimensional sparsity selection and estimation 
consistency for our procedure later in Section \ref{sec:consistency}. 

Let $\mathbf{S}^k$ denote the sample covariance matrix of the observations in the $k^{th}$ group. Based on the above 
discussion, we employ the CONCORD pseudo-likelihood introduced in \cite{khare2015convex}, 
\begin{equation*} 
\begin{split}
 \exp \left\{ n\sum\limits_{j=1}^{p}{\text{log}\omega^k_{jj}} - \frac{n}{2} \text{tr} \left[  \left( \mathbf{\Omega}^k\right)^2 \mathbf{S}^k\right]\right\}, \quad \quad k = 1, ..., K,
\end{split}
\end{equation*}
and the model specification (\ref{eq21}) to construct the joint pseudo-likelihood function for $K$ precision matrices, as follows,
\begin{equation}\label{joint pseudo likelihood}
\prod\limits_{k=1}^{K}{\exp \left\{n\sum\limits_{j=1}^{p}{\text{log}\psi^{k}_{jj}} - \frac{n}{2}\text{tr}\left[ \left(\sum\limits_{r \in \mathop \vartheta \nolimits_k } { \mathbf{\Psi}^{r} }\right)^2 \mathbf{S}^k \right]\right\}}.
\end{equation}

\noindent
Since we have parametrized the $\mathcal{S}^2$ prior in terms of $\left( \mathbf{\Theta}, \mathbf{\Delta}\right)$, we will rewrite the above pseudo-likelihood function in terms of $\left( \mathbf{\Theta}, \mathbf{\Delta}\right)$, as well. Some straightforward algebra shows that

\begin{equation}\label{Theta'UpsilonTheta}
\begin{split}
\text{tr}\left[ \left( \sum\limits_{r \in \mathop \vartheta \nolimits_k } { \mathbf{\Psi}^{r} }\right)^2 \mathbf{S}^k \right]  =\left( {\begin{array}{*{20}{c}}
{\mathbf{\Theta}'} &
{\mathbf{\Delta}'} \\
\end{array} } \right) \left( {\begin{array}{*{20}{c}}
{\mathbf{\Upsilon}} & {\mathbf{A}} \\
{\mathbf{A}'} & {\mathbf{D}} \\
\end{array} } \right) \left( {\begin{array}{*{20}{c}}
{\mathbf{\Theta}} \\
{\mathbf{\Delta}} \\
\end{array} } \right),
\end{split}
\end{equation}

where, $\mathbf{\Upsilon}$ is a $\frac{p(p-1)(2^K-1)}{2}\times\frac{p(p-1)(2^K-1)}{2}$ symmetric matrix whose entries are either zero or a linear combination of $\left\{ s_{ij}^k \right\}_{1 \leq i < j \leq p}^{1 \leq k \leq K}$; $\mathbf{D}$ is a $Kp \times Kp$ diagonal matrix with entries $\left\{ s_{ii}^k \right\}_{1 \leq i \leq p}^{1 \leq k \leq K}$; $\boldsymbol{a}$ is a $\frac{p(p-1)(2^K-1)}{2}\times 1$ vector whose entries are depend on $\mathbf{\Delta}$ and $\left\{ s_{ij}^k \right\}_{1 \leq i < j \leq p}^{1 \leq k \leq K}$; and finally $\mathbf{A}$ is a $\frac{p(p-1)(2^K-1)}{2}\times Kp$ matrix such that $\mathbf{A}\mathbf{\Delta} = \boldsymbol{a}$. To see the algebraic details of the equality in (\ref{Theta'UpsilonTheta}), structures of $\mathbf{\Upsilon}$ and $\boldsymbol{a}$ we refer the reader to Supplemental section S1. 

Now, letting $\mathcal{Y} :=\left(\{\mathbf{y}_{i:}^1\}_{i=1}^{n_1}, ..., \{\mathbf{y}_{i:}^K\}_{i=1}^{n_K}\right)$ and by applying Bayes' rule, the posterior distribution of $\left(\mathbf{\Theta}, \mathbf{\Delta} \right)$ is given by

\begin{equation}\label{posterior theta and delta}
\begin{split}
\pi \left\{ \left(\mathbf{\Theta}, \mathbf{\Delta} \right)| \mathcal{Y}\right\} &\propto \exp \left\{ n \boldsymbol{1}'\log \left( \mathbf{\Delta}\right) - \frac{n}{2}\left[ \left( {\begin{array}{*{20}{c}}
{\mathbf{\Theta}'} &
{\mathbf{\Delta}'} \\
\end{array} } \right) \left( {\begin{array}{*{20}{c}}
{\mathbf{\Upsilon}} & {\mathbf{A}} \\
{\mathbf{A}'} & {\mathbf{D}} \\
\end{array} } \right) \left( {\begin{array}{*{20}{c}}
{\mathbf{\Theta}} \\
{\mathbf{\Delta}} \\
\end{array} } \right) \right] \right\} \\
 &\times \exp \left(-\frac{\mathbf{\Theta}'\mathbf{\Lambda}\mathbf{\Theta}}{2} \right) \sum\limits_{ \boldsymbol{\ell} \in \mathcal{L}} \left\{ \frac{|\mathbf{\Lambda}_{\boldsymbol{\ell}\boldsymbol{\ell}}|^{\frac{1}{2}}}{\left(2\pi \right)^{\frac{d_{\boldsymbol{\ell}}}{2}}} I_{\left( \mathbf{\Theta} \in \mathcal{M}_{\ell}\right)}\right. \\
&\times \left. \left[ q_1^{d_{\boldsymbol{\ell}}}(1-q_1)^{\binom{p}{2} - d_{\boldsymbol{\ell}}} I_{\left\{d_{\boldsymbol{\ell}}\leq \tau\right\}} + q_1^{d_{\boldsymbol{\ell}}}(1-q_1)^{\binom{p}{2} - d_{\boldsymbol{\ell}}} I_{\left\{d_{\boldsymbol{\ell}}\leq \tau\right\}}\right]\right\}\\
&\times \exp \left( -\gamma \boldsymbol{1}'\mathbf{\Delta} \right).
\end{split}
\end{equation}
Moreover, the conditional posterior distribution of $\mathbf{\Theta}$ given $\mathbf{\Delta}$ is given by
\begin{equation}\label{posterior theta given delta}
\begin{split}
&\pi \left\{ \mathbf{\Theta} | \mathbf{\Delta}, \mathcal{Y}\right\} \propto \exp \left\{ - \frac{1}{2}\left[\mathbf{\Theta}'\left( n\mathbf{\Upsilon} + \mathbf{\Lambda}\right) \mathbf{\Theta} + 2n\mathbf{\Theta}'\boldsymbol{a}\right] \right\} \times\\
&\sum\limits_{ \boldsymbol{\ell} \in \mathcal{L}} \left\{ \frac{|\mathbf{\Lambda}_{\boldsymbol{\ell}\boldsymbol{\ell}}|^{\frac{1}{2}}}{\left(2\pi \right)^{\frac{d_{\boldsymbol{\ell}}}{2}}} I_{\left( \mathbf{\Theta} \in \mathcal{M}_{\boldsymbol{\ell}}\right)}  \left[ q_1^{d_{\boldsymbol{{\boldsymbol{\ell}}}}}(1-q_1)^{\binom{p}{2} - d_{\boldsymbol{\ell}}} I_{\left\{d_{\boldsymbol{\ell}}\leq \tau\right\}} + q_1^{d_{\boldsymbol{\ell}}}(1-q_1)^{\binom{p}{2} - d_{\boldsymbol{\ell}}} I_{\left\{d_{\boldsymbol{\ell}}\leq \tau\right\}}\right]\right\},
\end{split}
\end{equation}
while that of $\mathbf{\Delta}$ given $\mathbf{\Theta}$ by
\begin{equation}\label{posterior delta given theta}
\begin{split}
\pi \left\{ \mathbf{\Delta} | \mathbf{\Theta}, \mathcal{Y}\right\} \propto \prod\limits_{k=1}^K \prod\limits_{i=1}^p \left( \psi^k_{ii}\right)^n \exp \left\{ -\frac{n}{2} s^k_{ii}\left( \psi^k_{ii}\right)^2 - \left(\gamma + n\sum_{j \ne i}{\omega^k_{ij}s^k_{ij}}\right) \psi^k_{ii}\right\},
\end{split}
\end{equation}
where $\omega^k_{ij} = \sum\limits_{r\in\vartheta_k} \psi^r_{ij}$, for $1 \leq i < j \leq p$ and $1 \leq k \leq K$.

Next, we discuss a Gibbs sampling algorithm to generate approximate samples from the above posited posterior distribution.

\subsection{Gibbs Sampling Scheme for BJNS}

Generating exact samples from the multivariate distribution in (\ref{posterior theta and delta}) is not computationally feasible. 
We instead generate approximate samples from (\ref{posterior theta and delta}), by computing the full 
conditional distribution of the vectors $\boldsymbol{\theta}_{ij}$s, ($1 \le i < j \le p$) and that of the diagonal entries 
$\psi^{k}_{ii}$ ($1 \leq i \leq p$, $1 \leq k \leq K$). 

As discussed earlier, each $\boldsymbol{\theta}_{ij}$ contains $2^{K}-1$ elements of which at most one is non-zero. For ease of exposition, let $\theta_{l,ij}$ denote the $l^{\text{th}}$ element of $\boldsymbol{\theta}_{ij}$ for $l=1, ..., 2^{K}-1$ (based on (\ref{theta_ij}), every $\theta_{l,ij}$ represents a $\psi^{r}_{ij}$, for some $r \in \mathcal{P}(K)$). Using the same notation for the shrinkage parameters (diagonal elements of $\mathbf{\Lambda}$), let $\lambda_{l,ij}$ be the shrinkage parameter corresponding to $\theta_{l,ij}$. Since there are $2^{K}$ possibilities for the location of the zeros in each $\boldsymbol{\theta}_{ij}$, one can see $\boldsymbol{\theta}_{ij}$ as an element in one of the disjoint spaces $\mathbb{M}_0$, $\mathbb{M}_1$, ..., $\mathbb{M}_{2^{K}-1}$, where $\mathbb{M}_0$ is the singleton set consisting of the zero vector of length $2^{K}-1$ and $\mathbb{M}_{l}$ ($l=1, ..., 2^{K}-1$) is the space spanned by the $l^{\text{th}}$ unit vector of length $2^{K}-1$. 

Denote by $\mathbf{\Theta}_{-(ij)}$ the sub vector of $\mathbf{\Theta}$ obtained by removing $\boldsymbol{\theta}_{ij}$. Define $\mathbf{\Upsilon}_{(ij)(ij)}$ to be the sub-matrix of $\mathbf{\Upsilon}$ obtained by removing the rows and columns with indices corresponding to the zero elements of $\boldsymbol{\theta}_{ij}$ inside $\mathbf{\Theta}$. Similarly, let $\mathbf{\Upsilon}_{(ij)(-(ij))}$ be the sub-matrix of $\mathbf{\Upsilon}$ obtained by removing the rows with indices corresponding to zero elements of $\boldsymbol{\theta}_{ij}$ inside $\mathbf{\Theta}$ and columns with indices corresponding to zero elements of $\mathbf{\Theta}_{-(ij)}$ inside $\mathbf{\Theta}$.
Further,  let $\boldsymbol{a}_{ij}$ be the sub-vector of $\boldsymbol{a}$ whose indices match those of $\boldsymbol{\theta}_{ij}$ inside $\mathbf{\Theta}$. Then, the conditional posterior distribution of $\boldsymbol{\theta}_{ij}$ given $ \mathbf{\Theta}_{-(ij)}$ and $\mathbf{\Delta}$ is given by
\begin{equation*}
\begin{split}
\pi &\left\{ \boldsymbol{\theta}_{ij}| \mathbf{\Theta}_{-(ij)}, \mathbf{\Delta} , \mathcal{Y} \right\} \\
&\propto \exp\left\{ -\frac{1}{2} \left[ \boldsymbol{\theta}_{ij}' \left(n\mathbf{\Upsilon} + \mathbf{\Lambda}\right)_{(ij) (ij)} \boldsymbol{\theta}_{ij} + 
2 n \boldsymbol{\theta}_{ij}' \left( \boldsymbol{a}_{ij} + \mathbf{\Upsilon}_{(ij) (-(ij))}\mathbf{\Theta}_{-(ij)} \right)\right]\right\} I_{\bigcup\limits_{{l} = 0}^{2^K - 1} \mathbb{M}_{l} }( \boldsymbol{\theta}_{ij}).
\end{split}
\end{equation*}
Note that since $\boldsymbol{\theta}_{ij}$ has at most one non-zero element, all cross products in $\boldsymbol{\theta}_{ij}' \left(n\mathbf{\Upsilon} + \mathbf{\Lambda}\right)_{(ij) (ij)} \boldsymbol{\theta}_{ij}$ are equal to zero, i.e.
\begin{equation*}
\boldsymbol{\theta}_{ij}' \left(n\mathbf{\Upsilon} + \mathbf{\Lambda}\right)_{(ij) (ij)} \boldsymbol{\theta}_{ij} = \sum_{{l} = 1}^{2^K - 1}\theta_{l,ij}^2 \left[ \left(n\mathbf{\Upsilon} + \mathbf{\Lambda}\right)_{(ij) (ij)}\right]_{{l}{l}} = 
\sum_{{l} = 1}^{2^K - 1} \theta_{l,ij}^2\{n \left[ \mathbf{\Upsilon}_{(ij) (ij)}\right]_{{l}{l}} + \lambda_{l,ij}\}
\end{equation*}
where $\left[ \mathbf{\Upsilon}_{(ij) (ij)}\right]_{{l}{l}}$  is the $l^{\text{th}}$ diagonal element of matrix $\mathbf{\Upsilon}_{(ij) (ij)}$, for $l=1, ..., 2^K-1$. Hence, denoting the univariate normal probability density function by $\phi$, we get
\begin{equation*}
\begin{split}
\pi &\left\{ \boldsymbol{\theta}_{ij}| \mathbf{\Theta}_{-(ij)}, \mathbf{\Delta} , \mathcal{Y} \right\} \\
&\propto \exp\left\{ -\frac{1}{2} \sum_{{l} = 1}^{2^K - 1}\left(\theta_{l,ij}^2 \{n \left[ \mathbf{\Upsilon}_{(ij) (ij)}\right]_{{l}{l}} + \lambda_{l,ij}\} + 
2 n \theta_{l,ij}\left[ \boldsymbol{a}_{ij} + \mathbf{\Upsilon}_{(ij) (-(ij))}\mathbf{\Theta}_{-(ij)} \right]_{{l}}\right)\right\} I_{\bigcup\limits_{{l} = 0}^{2^K - 1} \mathbb{M}_{l} }( \boldsymbol{\theta}_{ij})\\
& = I_{{\boldsymbol \theta}_{ij} \in \mathbf{M}_0} + \sum_{{l} =1}^{2^K-1} c_{{l}, ij} \phi\{\theta_{l,ij} ;\left( \mu_{{l}, ij}, \nu_{{l}, ij}^{2}\right)\}I_{{\boldsymbol \theta}_{ij} \in \mathbf{M}_l}, \quad 1 \le i < j \le p,
\end{split}
\end{equation*}
with
\begin{equation}\label{mixture probs}
\begin{split}
\mu_{{l},ij}&= - \frac{n\left( \boldsymbol{a}_{ij} + \mathbf{\Upsilon}_{(ij) (-(ij))}\mathbf{\Theta}_{-(ij)} \right)_{{l}}}{n \left[ \mathbf{\Upsilon}_{(ij) (ij)}\right]_{{l}{l}} + \lambda_{l,ij}}, \\
\nu_{{l},ij}^{2}&= \frac{1}{n \left[ \mathbf{\Upsilon}_{(ij) (ij)}\right]_{{l}{l}} + \lambda_{l,ij}},\\
c_{{l}, ij} &= \sqrt{2\pi\nu_{{l} ,ij}^{2}} \exp\left\{\frac{\mu_{{l},ij}^{2}}{2\nu_{{l},ij}^{2}} \right\}, 
\end{split}
\end{equation}
for ${l} = 0, ..., 2^K-1$, and $1 \le i < j \le p$. Hence, for $1 \le i < j \le p$ we can write,
\begin{equation}\label{posterior density of theta_ij}
\begin{split}
\pi &\left\{ \boldsymbol{\theta}_{ij}| \mathbf{\Theta}_{-(ij)}, \mathbf{\Delta}, \mathcal{Y} \right\} =\frac{ I_{{\boldsymbol \theta}_{ij} \in \mathbf{M}_0} + \sum_{{l} =1}^{2^K-1} c_{{l}, ij} \phi\{\theta_{l,ij} ;\left( \mu_{{l}, ij}, \nu_{{l}, ij}^{2}\right)\}I_{{\boldsymbol \theta}_{ij} \in \mathbf{M}_l}}{ 1 + \sum_{{l} =1}^{2^K-1} c_{{l}, ij} },
\end{split}
\end{equation}
The above density is a mixture of univariate normal densities and the cost of generating samples from this density is comparable to that of generating from a univariate normal distribution. 

In view of (\ref{posterior delta given theta}), one can also easily see that conditional on $\mathbf{\Theta}$, the diagonal entries $\psi^k_{ii}$ ($1 \le k \le K$, $1 \le i \le p$) are a posteriori independent and their conditional posterior density given $\mathbf{\Theta}$ is as follows,
\begin{equation}\label{posterior density of psi_ii}
\pi \left\{ \psi^k_{ii} | \mathbf{\Theta} , \mathcal{Y} \right\} \propto \exp \left\{n \log\left(  \psi^k_{ii}\right) -\frac{n}{2} s_{ii}^k \left(  \psi^k_{ii} \right)^2 - b_i^k  \psi^k_{ii}\right\},
\end{equation}
where $b_i^k = \gamma+ n\sum_{j \ne i}{\left( \sum\limits_{r\in\vartheta_k} \psi^r_{ij} \right)s^k_{ij}}$, for $1 \leq i < j \leq p$ and $1 \leq k \leq K$. 

Note that the density in (\ref{posterior density of psi_ii}) is not a standard density but an efficient algorithm to sample from this density is provided in Supplemental section S2.

\subsection{Selection of Shrinkage Parameters} \label{sec:select}

\noindent
Let $\theta$ and $\delta$ be generic elements of $\mathbf{\Theta}$ and $\mathbf{\Delta}$ and let $\lambda$ and $\gamma$ be their corresponding shrinkage parameters. Selecting appropriate values for the latter is an important task. In other Bayesian analysis of high dimensional models, 
shrinkage parameters are usually generated based on an appropriate prior distribution; see \cite{park2008bayesian,kyung2010penalized,hans2009bayesian}) for regression analysis and \cite{wang2012bayesian} for graphical models. We assign independent gamma priors on each shrinkage parameter $\lambda$ or $\gamma$;
specifically, $\lambda, \gamma \sim \text{Gamma}(r, s)$, for some hyper-parameters $r$ and $s$. The amount of shrinkage imposed on each element $\theta$ and $\delta$ can be calculated by considering the posterior distribution of $\lambda$ and $\gamma$, given $\left(\mathbf{\Theta}, \mathbf{\Delta}\right)$. Straightforward algebra shows that 
\begin{equation*}
\begin{split}
\lambda| \left(\mathbf{\Theta}, \mathbf{\Delta}\right) &\sim \text{Gamma}(r + 0.5, 0.5\theta^2+ s),\\
\gamma| \left(\mathbf{\Theta}, \mathbf{\Delta}\right) &\sim \text{Gamma}(r + 1, |\delta | + s).
\end{split}
\end{equation*}
Thus, we shrink $\theta$ and $\gamma$ on average by $\mathbb{E}\{\lambda| \left(\mathbf{\Theta}, \mathbf{\Delta}\right)\} = \frac{r + 0.5}{0.5\ \theta^2+ s}$ and $\mathbb{E}\{\gamma| \left(\mathbf{\Theta}, \mathbf{\Delta}\right)\} = \frac{r + 1}{|\delta | + s}$, respectively. That is, our approach selects the shrinkage parameters $\lambda$ and $\gamma$ based on the current values of $\theta$ and $\delta$ in a way that larger (smaller) entries are regularized more (less). The selection of the hyper-parameters $r$ and $s$ is also an important task and can significantly affect performance.
Based on numerical evidence from synthetic data, we set $r=10^{-2}$ and $s=10^{-6}$; a similar suggestion is also made in  \cite{wang2012bayesian}.

Finally, the construction of the Gibbs sampler proceeds as follows:  matrices $\{\mathbf{\Psi}^k\}_{k=1}^K$ are initialized as the identity matrix, while $\{\mathbf{\Psi}^r\}_{\{r: r \in \mathcal{P}\left(K\right), \&|r|>1\}}$ at zero. Then, in each iteration of the Markov Chain Monte Carlo chain, we update the vectors $\boldsymbol{\theta}_{ij}$ in (\ref{theta_ij}) and the diagonal entries $\psi^k_{ii}$, one at a time, using their full conditional posterior densities given in (\ref{posterior density of theta_ij}) and (\ref{posterior density of psi_ii}), respectively. Algorithm \ref{BlockGibbsBCONCORD} describes one iteration of the resulting Gibbs Sampler.
\begin{algorithm}[htp]
\caption{Gibbs Sampler for BJNS}\label{BlockGibbsBCONCORD}
\begin{algorithmic}[1]
\Procedure{BJNS}{}\Comment{Input $\mathcal{Y}, \mathbf{\Theta}, \mathbf{\Delta}$}
\For{$i=1, ..., p-1$}
\For{$j = i+1, ..., p$}
\For{$l = 1, ..., 2^K-1$}
\State $\lambda \gets \text{Gamma}(r + 0.5, 0.5(\theta_{l,ij}^2 + s))$ 
\State $\mu \gets - \frac{n\left( \boldsymbol{a}_{ij} + \mathbf{\Upsilon}_{(ij) (-(ij))}\mathbf{\Theta}_{-(ij)} \right)_{{l}}}{n \left[ \mathbf{\Upsilon}_{(ij) (ij)}\right]_{{l}{l}} + \lambda}$
\State $\nu^2 \gets \frac{1}{n \left[ \mathbf{\Upsilon}_{(ij) (ij)}\right]_{{l}{l}} + \lambda}$
\State $c_{l} \gets \sqrt{2\pi\nu^{2}} \exp\left\{\frac{\mu^{2}}{2\nu^{2}} \right\}$
\EndFor
\State $\boldsymbol{\theta}_{ij} \gets 0_{(2^K-1)\times1}$
\State ${l} \gets \text{sample}\left(1, \left\{0, 1, ..., 2^K-1\right\}, \text{probs} \propto \left\{1, c_{1,ij} ..., c_{2^K-1, ij}\right\} \right)$ 
\If{${l} \ne 0$}
\State $\boldsymbol{\theta}_{{l}, ij} \gets N\left(\mu, \nu^{2} \right)$
\EndIf
\EndFor
\For{$k=1, ..., K$}
\State $\gamma \gets \text{Gamma}(r + 1, |\psi^k_{ii} | + s)$
\State $b \gets \gamma+ n\sum_{j \ne i}{\left( \sum\limits_{r\in\vartheta_k} \psi^r_{ij} \right)s^k_{ij}}$
\State Update $\psi^k_{ii}$ using Algorithm 1 in Supplemental section S2 
\EndFor
\EndFor
\For{$k=1, ..., K$}
\State $\gamma \gets \text{Gamma}(r + 1, |\psi^k_{pp} | + s)$
\State $b \gets \gamma+ n\sum_{j \ne p}{\left( \sum\limits_{r\in\vartheta_k} \psi^r_{pj} \right)s^k_{pj}}$
\State Update $\psi^k_{pp}$ using Algorithm 1 in Supplemental section S2 
\EndFor
\State \textbf{return} $ \mathbf{\Theta}, \mathbf{\Delta}$\Comment{Return the set of updated parameters}
\EndProcedure
\end{algorithmic}
\end{algorithm}

\begin{Remark}
Note that although BJNS is a completely unsupervised approach, available prior knowledge on shared patterns across the groups can be easily incorporated by removing redundnant components $\mathbf{\Psi}^r$.
Further, prior information could be incorporated through appropriate specification of  the edge selection probabilities $q_1, q_2$.
\end{Remark}
\begin{Remark}
The Gibbs sampler described in Algorithm \ref{BlockGibbsBCONCORD} does not involve any matrix inversion which is critical for its computational efficiency.
\end{Remark}

\subsection{Procedure for Sparsity Selection}

\noindent
Note that the conditional posterior probability density of the off-diagonal elements of $\boldsymbol{\theta}_{ij}$ is a mixture 
density that puts all of its mass on the events $\left\{\boldsymbol{\theta}_{ij}: | \boldsymbol{\theta}_{ij} | \leq 1 \right\}$, where 
$| \boldsymbol{\theta}_{ij} |$ is the number of non-zero coordinates of $\boldsymbol{\theta}_{ij}$. This property of BJNS 
allows for model selection, in the sense that in every iteration of the Gibbs sampler one can check whether $
\boldsymbol{\theta}_{ij} = \boldsymbol{0}$ or which element of $\boldsymbol{\theta}_{ij}$ (there could be at most one non-
zero element) is non-zero. Finally, in the end of the procedure, we choose the event with the highest frequency during 
sampling. In addition, credible intervals can be constructed, by using the empirical quantiles of the values corresponding to 
the frequency distribution during sampling. 

\section{High Dimensional Sparsity Selection Consistency and Convergence Rates} \label{sec:consistency}

Next, we establish theoretical properties for BJNS.
Let $\left\{ \mathbf{\Psi}^{r,0}, r\in \bigcup_{k=1}^{K}\vartheta_k\right\}$ be the true values of the matrices in (\ref{eq21}), so that $\mathbf{\Omega}^{k,0} = \sum\limits_{r\in \vartheta_k}\mathbf{\Psi}^{r,0}$ corresponds to the true decomposition of each precision matrix $k=1, ..., K$.
Using a similar ordering as in (\ref{theta}), we define $\mathbf{\Theta}^0$ to be the vectorized version of the off-diagonal elements of the true matrices $\left\{ \mathbf{\Psi}^{r,0}, r\in \bigcup_{k=1}^{K}\vartheta_k\right\}$. 

The following assumptions are made to obtain the results.

\begin{Assumption}\label{Assumption1: Accurate Diagonal estimates}
(Accurate Diagonal estimates) There exist estimates $\left\{ \hat{\psi}^k_{ii}\right\}_{1 \leq i \leq p}^{1 \leq k \leq K}$, such that for any $\eta>0$, there exists a constant $C>0$, such that 
\begin{equation} \label{accurate-estimation}
\max_{\substack{1 \leq i \leq p \\ 1 \leq k \leq K}} \| \hat{\psi}^k_{ii} - \psi^k_{ii}\| \leq C \left( \sqrt{\frac{\log p}{n}}\right),
\end{equation}
with probability at least $1-O(n^{-\eta})$.
\end{Assumption}

\noindent
Note that our main objective is to accurately estimate the off-diagonal entries of all the matrices present in the model 
(\ref{eq21}). Hence, as commonly done for pseudo-Likelihood based high-dimensional consistency proofs -see 
\cite{khare2015convex,peng2009partial}- we assume the existence of accurate estimates for the diagonal elements through 
Assumption \ref{accurate-estimation}. One way to get the accurate estimates of the diagonal entries is discussed in 
Lemma 4 of \cite{khare2015convex}. Denote the resulting estimates of the vectors $\mathbf{\Delta}$ and $\boldsymbol{a}$  by $\hat{\mathbf{\Delta}}$ and  $\hat{\boldsymbol{a}}$, respectively. 
We now consider the accuracy of the estimates of the off-diagonal entries obtained after running the BJNS procedure with 
the diagonal entries fixed at $\hat{\mathbf{\Delta}}$. 

\begin{Assumption}\label{Assumption2: d_t} 
$d_{\boldsymbol{t}} \sqrt{\frac{\log p}{n}} \to 0, \quad \quad \text{as} \quad n \to \infty.
$
\end{Assumption}

\noindent
This assumption essentially states that the number of variables $p$ has to grow slower than 
$e^{(\frac{n}{d_{\boldsymbol{t}}^2})}$. Similar assumptions have been made in other high dimensional covariance estimation 
methods e.g. \cite{banerjee2014posterior}, \cite{banerjee2015bayesian}, \cite{bickel2008regularized}, and \cite{xiang2015high}.

\begin{Assumption}\label{Assumption4: Sub-Gaussianity} There exists $c>0$, independent of $n$ and $K$ such that 
\begin{equation*}
\mathbb{E}\left[ \exp \left( {\boldsymbol{\alpha}'\mathbf{y_{i:}^k}} \right) \right] \leq \exp \left(c \boldsymbol{\alpha}'\boldsymbol{\alpha} \right).
\end{equation*}
\end{Assumption}

\noindent
While building our method, we assumed that the data in each group comes from a multivariate normal distribution. The 
above assumption allows for deviations from Gaussianity. Hence, Theorem \ref{theorem (strong consistency)} below 
will show that the BJNS procedure is robust (in terms of consistency) under misspecification of the data generating distribution, as long as its tails 
are sub-Gaussian. 

\begin{Assumption}\label{Assumption3: Bounded eigenvalues}
(Bounded eigenvalues). There exists $\tilde{\varepsilon_{0}} > 0$, independent of $n$ and $K$, such that for all $1\leq k \leq K$,
\begin{equation*}
\tilde{\varepsilon_{0}} \leq \text{eig}_{\min}\left( {\mathbf{\Sigma}}^k\right) \leq \text{eig}_{\max}\left( {\mathbf{\Sigma}}^k\right) \leq \frac{1}{\tilde{\varepsilon_{0}}}.
\end{equation*}
\end{Assumption}

\noindent
This is a standard assumption in high dimensional analysis to obtain consistency results, see for example  
\cite{buhlmann2011statistics}.

Henceforth, we let $\varepsilon_{0} = \frac{c^2\tilde{\varepsilon_{0}}}{K}$, $a_1 = \frac{\varepsilon_0^3}{768 K}$, $a_2 = \frac{8c^2}{K\varepsilon_0}$, $a_3 = \frac{16c^2}{\lambda}$.

\begin{Assumption}\label{Assumption5: Signal Strength}
(Signal Strength). Let $s_n$ be the smallest non-zero entry (in magnitude) in the vector $\mathbf{\Theta}_0$. We assume 
$\frac{\frac{1}{2}\log n + a_2d_{\boldsymbol{t}}\log p}{a_1 n s_n^2} \to 0$.
\end{Assumption}

\noindent
This is again a standard assumption. Similar assumptions on the appropriate signal size can be found in 
\cite{khare2015convex,peng2009partial}.

\begin{Assumption}\label{Assumption6: Decay rate} (Decay rate of the edge probabilities).
Let $q_{1} = p^{-a_2 d_{\boldsymbol{t}}} $, $q_{2} = p^{-a_3 n}$, and $\tau_n = \frac{\varepsilon_0}{4c}\sqrt{\frac{n}{\log p}}$, for some constant $c$.
\end{Assumption}

This can be interpreted as a priori penalizing matrices with too many non-zero entries. We have faster rate $q_{2}$ for the 
case of super dense matrices. Similar assumptions are common in the literature, see for example 
\cite{narisetty2014bayesian} and \cite{cao2016posterior}.

We now establish the main posterior consistency result. In particular, we show that the
posterior mass assigned to the true model converges to one in probability (under the true model).

\begin{Theorem}\label{theorem (strong consistency)}
(Strong Selection Consistency) Based on the joint posterior distribution given in (\ref{posterior theta and delta}), and under Assumptions 1-6, the following holds,
\begin{equation}\label{main result}
\pi \left\{ \mathbf{\Theta}\in\mathcal{S}_{\boldsymbol{t}} | \hat{\mathbf{\Delta}},\mathcal{Y}\right\} \xrightarrow{\text{$\mathbb{P}_0$}} 1, \quad \quad \text{as} \quad n \to \infty.
\end{equation}
\end{Theorem}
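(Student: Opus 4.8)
The plan is to prove the equivalent statement that the posterior odds against every incorrect sparsity pattern sum to zero in $\mathbb{P}_0$-probability. Writing $\boldsymbol{t}$ for the true pattern and $m_{\boldsymbol{\ell}}$ for the marginal (pseudo-)likelihood of pattern $\boldsymbol{\ell}$ obtained by integrating the Gaussian prior on $\mathbf{\Theta}_{\boldsymbol{\ell}}$ against the conditional posterior (\ref{posterior theta given delta}) evaluated at $\mathbf{\Delta}=\hat{\mathbf{\Delta}}$, the identity $\pi\{\mathbf{\Theta}\in\mathcal{S}_{\boldsymbol{t}}\mid\hat{\mathbf{\Delta}},\mathcal{Y}\}^{-1}=1+\sum_{\boldsymbol{\ell}\neq\boldsymbol{t}}R_{\boldsymbol{\ell}}$ shows it suffices to establish
\begin{equation*}
\sum_{\boldsymbol{\ell}\neq\boldsymbol{t}} R_{\boldsymbol{\ell}}\xrightarrow{\mathbb{P}_0}0,\qquad R_{\boldsymbol{\ell}}:=\frac{\pi(\boldsymbol{\ell})}{\pi(\boldsymbol{t})}\,\frac{m_{\boldsymbol{\ell}}}{m_{\boldsymbol{t}}}.
\end{equation*}
Because the exponent in (\ref{posterior theta given delta}) is quadratic in $\mathbf{\Theta}$, completing the square and integrating over $\mathbb{R}^{d_{\boldsymbol{\ell}}}$ gives the closed form
\begin{equation*}
m_{\boldsymbol{\ell}}\;\propto\;\frac{|\mathbf{\Lambda}_{\boldsymbol{\ell}\boldsymbol{\ell}}|^{1/2}}{|n\mathbf{\Upsilon}_{\boldsymbol{\ell}\boldsymbol{\ell}}+\mathbf{\Lambda}_{\boldsymbol{\ell}\boldsymbol{\ell}}|^{1/2}}\exp\!\left\{\tfrac{n^2}{2}\,\hat{\boldsymbol{a}}_{\boldsymbol{\ell}}'\big(n\mathbf{\Upsilon}_{\boldsymbol{\ell}\boldsymbol{\ell}}+\mathbf{\Lambda}_{\boldsymbol{\ell}\boldsymbol{\ell}}\big)^{-1}\hat{\boldsymbol{a}}_{\boldsymbol{\ell}}\right\},
\end{equation*}
with $\hat{\boldsymbol{a}}=\mathbf{A}\hat{\mathbf{\Delta}}$ and all subscripts denoting restriction to the active coordinates of $\boldsymbol{\ell}$. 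This reduces everything to controlling determinant ratios, the exponential ``explained-variation'' factor, and the prior ratio.

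I would next fix a high-probability deterministic regime on which all bounds are carried out. Using the sub-Gaussian tails of Assumption \ref{Assumption4: Sub-Gaussianity}, each sample covariance $\mathbf{S}^k$ concentrates entrywise around $\mathbf{\Sigma}^{k,-1}$ at rate $\sqrt{\log p/n}$ on an event of probability $1-O(n^{-\eta})$; together with the bounded-eigenvalue condition (Assumption \ref{Assumption3: Bounded eigenvalues}) this controls the eigenvalues of $\mathbf{\Upsilon}_{\boldsymbol{\ell}\boldsymbol{\ell}}$ uniformly in $\boldsymbol{\ell}$, so that $n\mathbf{\Upsilon}_{\boldsymbol{\ell}\boldsymbol{\ell}}+\mathbf{\Lambda}_{\boldsymbol{\ell}\boldsymbol{\ell}}$ is invertible with eigenvalues of order $n$. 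On the same event the plug-in error $\hat{\mathbf{\Delta}}-\mathbf{\Delta}_0$ is $O(\sqrt{\log p/n})$ by Assumption \ref{Assumption1: Accurate Diagonal estimates}, so $\hat{\boldsymbol{a}}$ is close to its population analogue and the quadratic forms in the exponent are well approximated by population quantities governed by $\mathbf{\Theta}^0$.

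I would then split $\{\boldsymbol{\ell}\neq\boldsymbol{t}\}$ into under-fitting patterns (those whose support fails to contain the true support, including the case of a truly shared edge allocated to a proper, non-maximal subset) and over-fitting patterns (those containing the true support plus spurious active coordinates), handling super-dense patterns with $d_{\boldsymbol{\ell}}>\tau_n$ separately. For an under-fitting $\boldsymbol{\ell}$, dropping or mis-allocating a true edge of magnitude at least $s_n$ lowers the explained-variation exponent by a term of order $n s_n^2$; after absorbing the per-edge determinant factor of order $\sqrt n$ and the prior factor $q_1^{-1}=p^{a_2 d_{\boldsymbol{t}}}$, Assumption \ref{Assumption5: Signal Strength} is exactly the statement that $a_1 n s_n^2$ dominates $\tfrac12\log n+a_2 d_{\boldsymbol{t}}\log p$, so each such $R_{\boldsymbol{\ell}}$ is of order $\exp(-a_1 n s_n^2)$ and the combinatorially-bounded number of omissions sums to zero. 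For an over-fitting $\boldsymbol{\ell}$ with $m:=d_{\boldsymbol{\ell}}-d_{\boldsymbol{t}}$ spurious coordinates, the determinant and spurious-variation factors contribute at most a fixed power of $p$ per extra edge, while the prior ratio supplies $(q_1/(1-q_1))^m=p^{-a_2 d_{\boldsymbol{t}}m}$ from Assumption \ref{Assumption6: Decay rate}; since the number of patterns with $m$ extra coordinates is at most $\big(\binom{p}{2}(2^K-1)\big)^m$, the induced geometric series in $m$ is summable and vanishes once $a_2 d_{\boldsymbol{t}}$ exceeds the polynomial count, with the super-dense regime dispatched by the far stronger penalty $q_2=p^{-a_3 n}$ for $d_{\boldsymbol{\ell}}>\tau_n$.

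The main obstacle is obtaining these bounds \emph{uniformly} over the exponentially many incorrect patterns while simultaneously controlling the random matrix $\mathbf{\Upsilon}$ and vector $\hat{\boldsymbol{a}}$: the over-fitting count grows like $p^{2m}$, so the decay rates $q_1,q_2$ and the threshold $\tau_n$ must be shown to beat both this count and the likelihood's tendency to reward extra parameters, which is precisely the calibration built into Assumption \ref{Assumption6: Decay rate} together with the growth restriction of Assumption \ref{Assumption2: d_t}. The additional subtlety, absent in the single-graph CONCORD analysis of \cite{khare2015convex}, is the subset-allocation dimension of the model space: I must verify that assigning a shared edge to a wrong (non-maximal) member of $\cup_{k=1}^{K}(\vartheta_k\setminus\{k\})$ incurs the same $\exp(-a_1 n s_n^2)$ penalty as dropping it, which is where the identifiability constraint and the explicit structure of $\mathbf{\Upsilon}$ and $\boldsymbol{a}$ from Supplemental Section S1 must be invoked.
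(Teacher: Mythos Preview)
Your overall architecture matches the paper's: derive the closed-form Bayes factor $R_{\boldsymbol{\ell}}$, work on a high-probability event where $\mathbf{S}^k$ concentrates and the eigenvalues of $\mathbf{\Upsilon}_{\boldsymbol{\ell}\boldsymbol{\ell}}$ are uniformly controlled, then bound $R_{\boldsymbol{\ell}}$ case by case and sum. The marginal likelihood formula, the concentration step, and the over-fitting and super-dense arguments are all essentially as in the paper.

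The gap is in your under-fitting case. You define ``under-fitting'' as $\boldsymbol{t}\not\subseteq\boldsymbol{\ell}$ and claim that each such $R_{\boldsymbol{\ell}}$ is of order $\exp(-a_1 n s_n^2)$, after which ``the combinatorially-bounded number of omissions sums to zero.'' But a pattern $\boldsymbol{\ell}$ that omits a single true coordinate can \emph{simultaneously} activate arbitrarily many spurious coordinates; the number of such patterns is not bounded by the number of ways to choose the omissions, it is bounded by that number times roughly $\big(\binom{p}{2}(2^K-1)\big)^m$ for each choice of $m$ spurious additions. A single factor $\exp(-a_1 n s_n^2)$ cannot beat this count, and nothing in your under-fitting paragraph provides the additional per-spurious-edge decay needed. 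The mis-allocation case you flag at the end is exactly an instance of this non-inclusive situation (one omission plus one addition), and you correctly sense it needs a separate argument, but you do not supply one.

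The paper closes this gap by the multiplicative factorization through the intersection $\boldsymbol{h}=\boldsymbol{\ell}\cap\boldsymbol{t}$:
\[
PR(\boldsymbol{\ell},\boldsymbol{t})=PR(\boldsymbol{\ell},\boldsymbol{h})\cdot PR(\boldsymbol{h},\boldsymbol{t}),
\]
where $\boldsymbol{h}\subset\boldsymbol{t}$ is pure under-fitting (so the signal-strength bound applies, scaling as $f_n^{d_{\boldsymbol{t}}-d_{\boldsymbol{h}}}$) and $\boldsymbol{h}\subset\boldsymbol{\ell}$ is pure over-fitting (so the prior-penalty bound applies, scaling as $f_n^{d_{\boldsymbol{\ell}}-d_{\boldsymbol{h}}}$). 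This yields the uniform bound $PR(\boldsymbol{\ell},\boldsymbol{t})\le f_n^{D(\boldsymbol{\ell},\boldsymbol{t})}$ with $D$ the symmetric-difference size, after which the paper sums over $j=D(\boldsymbol{\ell},\boldsymbol{t})$ using $\sum_j \binom{p}{2}^j f_n^j\le p^2 f_n/(1-p^2 f_n)\to 0$. You need either this factorization or an equivalent device that produces decay in both the number of omissions and the number of spurious additions simultaneously; without it the under-fitting sum is not controlled.
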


\noindent
Our next result establishes estimation consistency of the BJNS procedure for $\mathbf{\Theta}$ and also provides 
a corresponding rate of convergence. 
\begin{Theorem}\label{theorem (Estimation Consistency and Convergence Rate)}
(Estimation Consistency Rate) Let $R_n$ be the maximum value (in magnitude) in the vector $\mathbf{\Theta}_0$ and 
assume that $R_n$ can not grow at a rate faster than $\sqrt{n \log p}$. Then, based on the joint posterior distribution given in (\ref{posterior theta 
and delta}), and under Assumptions 1-6, there exists a large enough constant $G$ (not depending on $n$), such that the 
following holds,
\begin{equation}
\mathbb{E}_{0} \left[P \left(\| \mathbf{\Theta} - \mathbf{\Theta}^0\|_2 > G \sqrt{\frac{d_{\boldsymbol{t}} \log p}{n}} \mid 
\hat{\mathbf{\Delta}}, \mathcal{Y}\right) \right] \quad \to 0 \quad \text{as} \quad n \to \infty.
\end{equation}
\end{Theorem}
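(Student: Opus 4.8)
The plan is to leverage the strong selection consistency already established in Theorem \ref{theorem (strong consistency)} to reduce the contraction-rate statement to the analysis of an explicit Gaussian law. First I would write, for the set $B=\{\|\mathbf{\Theta}-\mathbf{\Theta}^0\|_2 > G\sqrt{d_{\boldsymbol{t}}\log p/n}\}$,
$$P(\mathbf{\Theta}\in B\mid\hat{\mathbf{\Delta}},\mathcal{Y}) = P(\mathbf{\Theta}\in B,\ \mathbf{\Theta}\in\mathcal{S}_{\boldsymbol{t}}\mid\hat{\mathbf{\Delta}},\mathcal{Y}) + P(\mathbf{\Theta}\in B,\ \mathbf{\Theta}\notin\mathcal{S}_{\boldsymbol{t}}\mid\hat{\mathbf{\Delta}},\mathcal{Y}).$$
The second term is at most $P(\mathbf{\Theta}\notin\mathcal{S}_{\boldsymbol{t}}\mid\hat{\mathbf{\Delta}},\mathcal{Y})$, whose $\mathbb{P}_0$-expectation tends to $0$ by Theorem \ref{theorem (strong consistency)}, so it suffices to control the first term. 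On $\{\mathbf{\Theta}\in\mathcal{S}_{\boldsymbol{t}}\}$ the support of $\mathbf{\Theta}$ coincides with that of $\mathbf{\Theta}^0$, all inactive coordinates being exactly zero under both, so $\|\mathbf{\Theta}-\mathbf{\Theta}^0\|_2 = \|\mathbf{\Theta}_{\boldsymbol{t}}-\mathbf{\Theta}^0_{\boldsymbol{t}}\|_2$, where the subscript $\boldsymbol{t}$ denotes restriction to the active coordinates. Conditioning the posterior (\ref{posterior theta given delta}) on $\mathcal{S}_{\boldsymbol{t}}$ removes the mixture over $\boldsymbol{\ell}$ and leaves a single Gaussian factor restricted to $\mathcal{M}_{\boldsymbol{t}}$; hence $\mathbf{\Theta}_{\boldsymbol{t}}\mid\mathbf{\Theta}\in\mathcal{S}_{\boldsymbol{t}},\hat{\mathbf{\Delta}},\mathcal{Y}$ is exactly $N(\widehat{\mathbf{\Theta}}_{\boldsymbol{t}},\ (n\mathbf{\Upsilon}_{\boldsymbol{t}\boldsymbol{t}}+\mathbf{\Lambda}_{\boldsymbol{t}\boldsymbol{t}})^{-1})$ with posterior mean $\widehat{\mathbf{\Theta}}_{\boldsymbol{t}} = -n(n\mathbf{\Upsilon}_{\boldsymbol{t}\boldsymbol{t}}+\mathbf{\Lambda}_{\boldsymbol{t}\boldsymbol{t}})^{-1}\hat{\boldsymbol{a}}_{\boldsymbol{t}}$.

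Next I would split by the triangle inequality $\|\mathbf{\Theta}_{\boldsymbol{t}}-\mathbf{\Theta}^0_{\boldsymbol{t}}\|_2 \le \|\mathbf{\Theta}_{\boldsymbol{t}}-\widehat{\mathbf{\Theta}}_{\boldsymbol{t}}\|_2 + \|\widehat{\mathbf{\Theta}}_{\boldsymbol{t}}-\mathbf{\Theta}^0_{\boldsymbol{t}}\|_2$ into a posterior-spread term and a bias term, and bound each by a constant multiple of $\sqrt{d_{\boldsymbol{t}}\log p/n}$. For the spread, the concentration of the entries of $\mathbf{S}^k$ (Assumption \ref{Assumption4: Sub-Gaussianity}) together with the eigenvalue bounds (Assumption \ref{Assumption3: Bounded eigenvalues}) guarantee, on a $\mathbb{P}_0$-event of probability $1-O(n^{-\eta})$, that $\mathrm{eig}_{\min}(n\mathbf{\Upsilon}_{\boldsymbol{t}\boldsymbol{t}}+\mathbf{\Lambda}_{\boldsymbol{t}\boldsymbol{t}}) \ge c_0 n$ for some $c_0>0$ depending only on $\varepsilon_0$; hence the posterior covariance has operator norm $O(1/n)$. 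Treating $\mathbf{\Theta}_{\boldsymbol{t}}-\widehat{\mathbf{\Theta}}_{\boldsymbol{t}}$ as a centered Gaussian and applying a standard chi-squared tail bound yields $P(\|\mathbf{\Theta}_{\boldsymbol{t}}-\widehat{\mathbf{\Theta}}_{\boldsymbol{t}}\|_2 > C_1\sqrt{d_{\boldsymbol{t}}\log p/n}\mid\hat{\mathbf{\Delta}},\mathcal{Y})\to 0$, the $\log p$ factor providing the room needed to beat the dimension-$d_{\boldsymbol{t}}$ fluctuation.

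The bias term is where the real work lies. I would expand
$$\widehat{\mathbf{\Theta}}_{\boldsymbol{t}}-\mathbf{\Theta}^0_{\boldsymbol{t}} = -(n\mathbf{\Upsilon}_{\boldsymbol{t}\boldsymbol{t}}+\mathbf{\Lambda}_{\boldsymbol{t}\boldsymbol{t}})^{-1}\Big[n\big(\hat{\boldsymbol{a}}_{\boldsymbol{t}}+\mathbf{\Upsilon}_{\boldsymbol{t}\boldsymbol{t}}\mathbf{\Theta}^0_{\boldsymbol{t}}\big)+\mathbf{\Lambda}_{\boldsymbol{t}\boldsymbol{t}}\mathbf{\Theta}^0_{\boldsymbol{t}}\Big]$$
and bound the two bracketed pieces separately, using $\|(n\mathbf{\Upsilon}_{\boldsymbol{t}\boldsymbol{t}}+\mathbf{\Lambda}_{\boldsymbol{t}\boldsymbol{t}})^{-1}\|\le 1/(c_0 n)$. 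The first piece is $n$ times the pseudo-likelihood score evaluated at the truth with the diagonal plugged in at $\hat{\mathbf{\Delta}}$; each of its $d_{\boldsymbol{t}}$ coordinates is a linear combination of the centered deviations $s^k_{ij}-\sigma^k_{ij}$ plus the diagonal-estimation error controlled by Assumption \ref{Assumption1: Accurate Diagonal estimates}, so sub-Gaussian concentration bounds each coordinate by $O(\sqrt{\log p/n})$ and the vector by $O(\sqrt{d_{\boldsymbol{t}}\log p/n})$; multiplying by $n$ and then by $1/(c_0 n)$ returns the target rate. The second piece is the prior-induced shrinkage $\mathbf{\Lambda}_{\boldsymbol{t}\boldsymbol{t}}\mathbf{\Theta}^0_{\boldsymbol{t}}$, of norm at most $\|\mathbf{\Lambda}\|\sqrt{d_{\boldsymbol{t}}}\,R_n$; dividing by $c_0 n$ and invoking the hypothesis $R_n\lesssim\sqrt{n\log p}$ again yields $O(\sqrt{d_{\boldsymbol{t}}\log p/n})$. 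This last step is exactly where the growth restriction on $R_n$ is consumed, and the main obstacle is controlling the score piece, i.e.\ keeping track of the interplay between the $\mathbf{S}^k$-concentration, the plugged-in $\hat{\mathbf{\Delta}}$, and the $\ell_2$ aggregation over the $d_{\boldsymbol{t}}$ active coordinates, uniformly on a high-probability event.

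Finally I would choose $G$ larger than $C_1+C_2$, work on the intersection of the good $\mathbb{P}_0$-events for the spread and bias bounds, and take $\mathbb{E}_0$. On that intersection, conditioning on $\mathcal{S}_{\boldsymbol{t}}$ turns $B$ into the Gaussian tail event $\{\|\mathbf{\Theta}_{\boldsymbol{t}}-\widehat{\mathbf{\Theta}}_{\boldsymbol{t}}\|_2 > (G-C_2)\sqrt{d_{\boldsymbol{t}}\log p/n}\}$, whose posterior probability tends to $0$ by the spread bound; meanwhile the $\mathbb{P}_0$-probabilities of the complementary events, and of $\{\mathbf{\Theta}\notin\mathcal{S}_{\boldsymbol{t}}\}$ via Theorem \ref{theorem (strong consistency)}, also vanish. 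Combining these contributions gives $\mathbb{E}_0[P(\mathbf{\Theta}\in B\mid\hat{\mathbf{\Delta}},\mathcal{Y})]\to 0$, as required.
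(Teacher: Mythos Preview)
Your proposal is correct and follows essentially the same route as the paper: split off the event $\{\mathbf{\Theta}\notin\mathcal{S}_{\boldsymbol{t}}\}$ via Theorem~\ref{theorem (strong consistency)}, identify the conditional posterior on $\mathcal{S}_{\boldsymbol{t}}$ as the Gaussian $N\bigl(-n(n\mathbf{\Upsilon}_{\boldsymbol{t}\boldsymbol{t}}+\mathbf{\Lambda}_{\boldsymbol{t}\boldsymbol{t}})^{-1}\hat{\boldsymbol{a}}_{\boldsymbol{t}},\,(n\mathbf{\Upsilon}_{\boldsymbol{t}\boldsymbol{t}}+\mathbf{\Lambda}_{\boldsymbol{t}\boldsymbol{t}})^{-1}\bigr)$, and then bound separately the spread $\|\mathbf{\Theta}_{\boldsymbol{t}}-\widehat{\mathbf{\Theta}}_{\boldsymbol{t}}\|_2$ and the bias $\|\widehat{\mathbf{\Theta}}_{\boldsymbol{t}}-\mathbf{\Theta}^0_{\boldsymbol{t}}\|_2$, the latter via the same expansion into a score piece (handled by Lemma~\ref{Lemma 4}) and a prior-shrinkage piece $\mathbf{\Lambda}_{\boldsymbol{t}\boldsymbol{t}}\mathbf{\Theta}^0_{\boldsymbol{t}}$ (where the $R_n\lesssim\sqrt{n\log p}$ hypothesis is used). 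The only cosmetic difference is that for the spread term the paper applies a coordinate-wise union bound with the Mills-ratio inequality rather than a $\chi^2$ tail bound, but both arguments give the required rate.
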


\noindent
The proofs of the above results are provided in the Supplement, section S3.

\section{Simulation Studies}
In this section, we present three simulation studies to evaluate the performance of BJNS. In the first study, we illustrate the performance of BJNS in two scenarios with four precision matrices, each. In the second simulation, we compare the performance of BJNS with other methodologies, such as Glasso, where the Graphical lasso by \cite{friedman2008sparse} is applied to each graphical model separately, joint estimation by \cite{guo2011joint}, denoted by JEM-G, the Group Graphical Lasso denoted by GGL by \cite{danaher2014joint}, and the Joint Structural Estimation Method denoted by JSEM, by \cite{ma2016joint}. In the third simulation, we demonstrate the numerical scalability of BJNS, when the number of precision matrices $K$ is relatively large.

Throughout, for any $K$ precision matrices $\{ \mathbf{\Omega}^k\}_{k=1}^K$ of dimensions $p \times p$, we generate data as follows,
\begin{equation*}
\mathbf{y}_{i:}^k \sim \mathcal{N}_p\left( \mathbf{0}, \left( \mathbf{\Omega}^k \right)^{-1}\right), \quad \quad i=1,...,n_k, k=1, ..., K.
\end{equation*}

We assess the model performance using three well known accuracy measures, specificity (SP), sensitivity (SE) and Matthews Correlation Coefficients (MCC) defined as:
\begin{equation}\label{accuracy measures}
\begin{split}
\text{SP} &= \frac{\text{TN}}{\text{TN} + \text{FP}}, \quad \quad \quad \text{SE} = \frac{\text{TP}}{\text{TP} + \text{FN}}\\
\text{MCC} &= \frac{\text{TP} \times \text{TN} - \text{FP} \times \text{FN}}{\sqrt{(\text{TP} + \text{FP})(\text{TP} + \text{FN})(\text{TN} + \text{FP})(\text{TN} + \text{FN})}}
\end{split}
\end{equation}
where, TP, TN, FP and FN represent the number of true positives, true negatives, false positives
and false negatives, respectively. Larger values of any of the above metrics indicates a better sparsity selection produced by the underlying method.

Recall that BJNS estimates the vectors $\boldsymbol{\theta}_{ij}$ so that only a single coordinate is non-zero; i.e. regardless of the number of networks $K$, BJNS estimates at most $p(p-1)/2$ off-diagonal parameters and enforces at least $(2^{K-1}-1)p(p-1)$ of the remaining off-diagonal parameters to be zero. The final decision for the sparsity of each vector $\boldsymbol{\theta}_{ij}$ in (\ref{theta_ij}), is made based on majority voting. For every experiment, we base our inference on 2000 samples that are generated from the MCMC chain after removing 2000 burn-in samples. We ensure the robustness of the presented results, by repeating each experiment over 100 replicate data sets and taking the average of the accuracy measures across the replicates.

Finally, the implementation of BJNS using the Rcpp and Rcpp Armadillo libraries (\cite{RcppArmadillo}, \cite{Rcpp}) together with avoidance of matrix inversion operations as previously discussed contributes to its computational efficiency.

\subsection{Simulation 1: Four groups ($K=4$)} \label{Simulation 1: Four groups ($K=4$)}

We consider two challenging scenarios to examine the performance of BJNS for the case of simultaneously estimating four inverse covariance matrices ($K=4$).
\begin{enumerate}[label=(\roman*)]
\item Four precision matrices $\mathbf{\Omega}^1$, $\mathbf{\Omega}^2$, $\mathbf{\Omega}^3$, and $\mathbf{\Omega}^4$, with different degrees of shared structures. We let the first matrix $\mathbf{\Omega}^1$ to be an AR(2) model with $\omega^1_{jj} = 1$, for $j=1,...,p$; $\omega^1_{jj+1} = \omega^1_{j+1j} = 0.5$, for $j=1,...,p-1$; and $\omega^1_{jj+2} = \omega^1_{j+2j} = 0.25$, for $j=1,...,p-2$. To construct $\mathbf{\Omega}^2$ we randomly replace $\frac{p}{4}$ non zero edges from $\mathbf{\Omega}^1$ with zeros and replace $\frac{p}{4}$ zero edges, at random, with numbers generated from $[-0.6, -0.4] \cup [0.4, 0.6]$. We construct $\mathbf{\Omega}^3$ by randomly removing $\frac{p}{2}$ edges shared between $\mathbf{\Omega}^1$ and $\mathbf{\Omega}^2$ and then using $[-0.6, -0.4] \cup [0.4, 0.6]$, we randomly add $\frac{p}{2}$ other edges that are present in neither $\mathbf{\Omega}^1$ nor $\mathbf{\Omega}^2$. Finally, we construct $\mathbf{\Omega}^4$ by removing the remaining $2p-3 - \frac{3p}{4}$ edges that are common in $\mathbf{\Omega}^1$ and $\mathbf{\Omega}^2$ and randomly add $2p-3 - \frac{3p}{4}$ edges that are not present in any of the other graphs. The resulting matrix $\mathbf{\Omega}^4$ has nothing in common with the other precision matrices. Therefore, the true relation between the four networks is as follows,
\begin{equation}\label{sim1.1}
\begin{split}
\mathbf{\Omega}^1 &= \mathbf{\Psi}^1 + \mathbf{\Psi}^{12} + \mathbf{\Psi}^{123} \quad \quad \mathbf{\Omega}^3 = \mathbf{\Psi}^3 + \mathbf{\Psi}^{123} \\
\mathbf{\Omega}^2 &= \mathbf{\Psi}^2 + \mathbf{\Psi}^{12} + \mathbf{\Psi}^{123} \quad \quad \mathbf{\Omega}^4 = \mathbf{\Psi}^4, 
\end{split}
\end{equation}
where, $\mathbf{\Psi}^1$, $\mathbf{\Psi}^2$, $\mathbf{\Psi}^3$, and $\mathbf{\Psi}^4$ account for the edges that are unique to their corresponding groups and $\mathbf{\Psi}^{12}$ and $\mathbf{\Psi}^{123}$ contains the edges that are common between  the four groups.
A heat map plot of the true precision matrices with $p=50$ is given figure \ref{4AR_truth}.
\begin{figure}[h]
\centering
\caption{Heat maps of the precision matrices of the four groups}
\label{4AR_truth}
\includegraphics[height=1.1in,width=4in,angle=0]{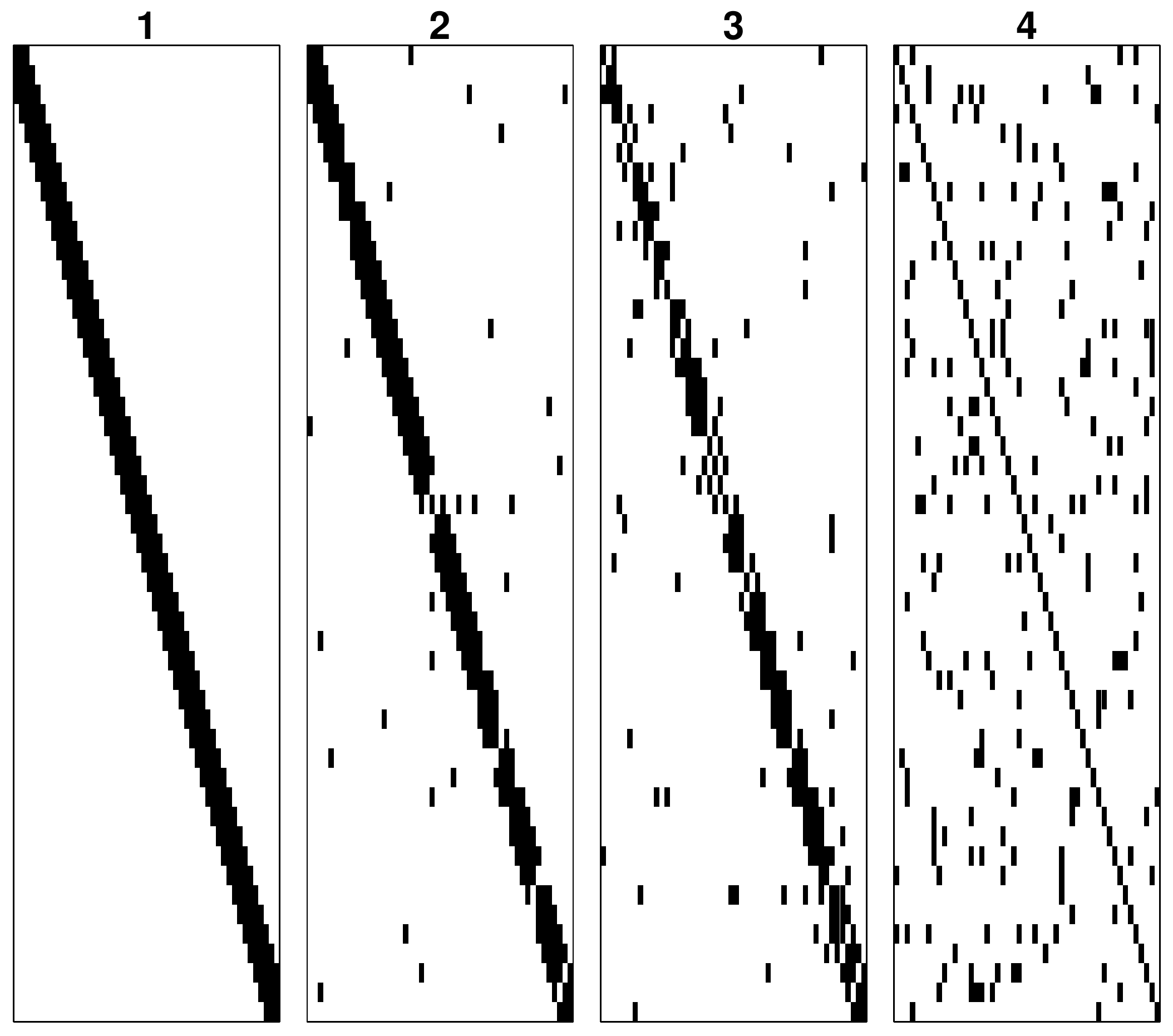} 
\end{figure}
\item In most real data applications, the underlying networks tend to follow sparsity patterns that are completely random and do not necessarily follow a certain structure. For this reason, we would like to examine the performance of BJNS under unstructured settings. We will specifically consider estimating four precision matrices with complete random sparsity patterns with signals generated from $[-0.6, -0.4] \cup [0.4, 0.6]$. We take the level of sparsity in each matrix to be 95\%, half of which is unique to the matrices and the other half is shared between all four of them. The true relationship between the networks is as follows,
\begin{equation}\label{sim1.2}
\begin{split}
\mathbf{\Omega}^k &= \mathbf{\Psi}^k + \mathbf{\Psi}^{1234}, \quad k = 1, 2, 3, 4.
\end{split}
\end{equation}
where, $\mathbf{\Psi}^1$, $\mathbf{\Psi}^2$, $\mathbf{\Psi}^3$, and $\mathbf{\Psi}^4$ account for the edges that are unique to their corresponding groups and $\mathbf{\Psi}^{1234}$ contains the edges that are common between all of the four groups. 
\end{enumerate}

For each of the above settings, we run the following combinations of $p$ and $n$:
\begin{itemize}
\item $p=200$, and $n=50, 100, 150, 200, 250, \text{and}, 300$,
\item $p=500$, and $n=300, 400, \text{and}, 500$.
\end{itemize}
The estimations are based on the full decomposition model, which for the case of $K=4$ is given as,
\begin{equation}\label{full model for k=4}
\begin{split}
\mathbf{\Omega}^1 = \mathbf{\Psi}^1 + \mathbf{\Psi}^{12} + \mathbf{\Psi}^{13} + \mathbf{\Psi}^{14} + \mathbf{\Psi}^{123} + \mathbf{\Psi}^{124} + \mathbf{\Psi}^{134} + \mathbf{\Psi}^{1234},\\
\mathbf{\Omega}^2 = \mathbf{\Psi}^2 + \mathbf{\Psi}^{12} + \mathbf{\Psi}^{23} + \mathbf{\Psi}^{24} + \mathbf{\Psi}^{123} + \mathbf{\Psi}^{124} + \mathbf{\Psi}^{234} + \mathbf{\Psi}^{1234},\\
\mathbf{\Omega}^3 = \mathbf{\Psi}^3 + \mathbf{\Psi}^{13} + \mathbf{\Psi}^{23} + \mathbf{\Psi}^{34} + \mathbf{\Psi}^{123} + \mathbf{\Psi}^{134} + \mathbf{\Psi}^{234} + \mathbf{\Psi}^{1234},\\
\mathbf{\Omega}^4 = \mathbf{\Psi}^4 + \mathbf{\Psi}^{14} + \mathbf{\Psi}^{24} + \mathbf{\Psi}^{34} + \mathbf{\Psi}^{124} + \mathbf{\Psi}^{134} + \mathbf{\Psi}^{234} + \mathbf{\Psi}^{1234}.
\end{split}
\end{equation}
Tables \ref{table_sim1.1} and \ref{table_sim1.2} summarize the average of the accuracy measures for both settings, across 100 repetitions. As can be seen, the values of the accuracy measures tend to be much higher for the joint effects (namely $\mathbf{\Psi}^{12}$ and $\mathbf{\Psi}^{123}$ in table \ref{table_sim1.1} and $\mathbf{\Psi}^{1234}$ in table \ref{table_sim1.2}), which implies that BJNS is borrowing strength across the distinct samples, to provide more robust estimates of the joint edges. In addition, high values of specificity, regardless of the sample size, shows a surprisingly low false positive rate. 

Another key strength of BJNS is its ability to select the correct underlying model from the above full representation. As described in the Gibbs sampler in section 4.1, model selection takes place at the level of $\boldsymbol{\theta}_{ij}$s (for $1 \leq i < j \leq p$) in (\ref{theta_ij}) which in the case of $K=4$, are vectors of length $2^K-1=15$,  
\begin{equation}\label{sim1.3}
\begin{split}
\boldsymbol{\theta}_{ij} = \left( \psi^1_{ij}, \psi^2_{ij} , \psi^3_{ij}, \psi^4_{ij}, \psi^{12}_{ij}, \psi^{13}_{ij}, \psi^{14}_{ij}, \psi^{23}_{ij} ,\psi^{24}_{ij}, \psi^{123}_{ij}, \psi^{124}_{ij}, \psi^{134}_{ij}, \psi^{234}_{ij}, \psi^{1234}_{ij}\right)'.
\end{split}
\end{equation}
for every $1 \leq i < j \leq p$, BJNS aims at detecting whether $\boldsymbol{\theta}_{ij}$ is a zero vector or which of it's components is non-zero (recall that $\boldsymbol{\theta}_{ij}$ has at most one non-zero element). For example, in the case of the first simulation, if an edge $(ij)$ is common across all four networks, then ${\psi}^{1234}_{ij}$ must be the non-zero element in the corresponding ${\boldsymbol{\theta}}_{ij}$. That is, we must have that
\begin{equation*}
\boldsymbol{\theta}_{ij} = {\left(\begin{array}{*{20}{c}}
 \boldsymbol{0}_{14} \\ \psi^{1234}_{ij}
\end{array}
\right)}.
\end{equation*}
Thus, based on the way we estimate $\boldsymbol{\theta}_{ij}$s, high values of the accuracy measures for the matrices that are present in the true models (\ref{sim1.1}) and (\ref{sim1.2}), automatically indicates that the model is performing well in selecting the true sparsity patterns of all $\boldsymbol{\theta}_{ij}$s. In view of the full model (\ref{full model for k=4}), high values of the accuracy measures for the inverse covariance matrices $\{\mathbf{\Omega}^k\}_{k=1}^4$ is another indication of strong selection capability of BJNS. 

\begin{sidewaystable}[htp]
\centering 
\caption{Summary results for simulation i}
\begin{tabular}{
l
S[table-format = 3]
S[table-format = 3]
S[table-format = 3]
S[table-format = 3]
S[table-format = 3]
S[table-format = 3]
S[table-format = 3]
S[table-format = 3]
S[table-format = 3]
S[table-format = 3]
S[table-format = 3]
}
        \toprule
        \multicolumn{2}{c}{} & \multicolumn{6}{c}{$p=200$} & 
        \multicolumn{3}{c}{$p=500$}  \\
        \cmidrule(lr){3-8}  
        \cmidrule(lr){9-11} 
        && {$n=50$}&{$n=100$}&{$n=150$}&{$n=200$}&{$n=250$}&{$n=300$}&{$n=300$}
        &{$n=400$}&{$n=500$}\\
         \cmidrule(lr){1-2}  
        \cmidrule(lr){3-8}  
        \cmidrule(lr){9-11} 
{\multirow{3}{*}{$\mathbf{\Omega}_1$} } 
&{MC\%}&  {60 (0.020)} & {79 (0.016)} & {86 (0.012)} & {88 (0.011)} & {89 (0.010)} & {89 (0.009)} & {95 (0.004)} & {96 (0.005)}& {96 (0.004)} \\
& {SP\%}& {99 (0.001)} & {99 (0.001)} & {99 (0.001)} & {99 (0.001)} & {99 (0.001)} & {99 (0.001)} & {100 (0.000)} & {100 (0.000)}& {100 (0.000)} \\
& {SE\%}& {67 (0.021)} & {90 (0.014)} & {97 (0.009)} & {99 (0.004)} & {100 (0.002)}& {100 (0.001)} & {100 (0.001)} & {100 (0.000)}& {100 (0.000)} \\
\cmidrule(lr){1-2}  
\cmidrule(lr){3-8}  
\cmidrule(lr){9-11}
{\multirow{3}{*}{$\mathbf{\Omega}_2$} } 
&{MC\%}&  {51 (0.021)} & {69 (0.016)} & {78 (0.013)} & {81 (0.014)} & {80 (0.011)} & {82 (0.012)} & {86 (0.007)} & {87 (0.006)}& {88 (0.006)} \\
& {SP\%}& {99 (0.001)} & {99 (0.001)} & {99 (0.001)} & {99 (0.001)} & {99 (0.001)} & {99 (0.001)} & {100 (0.000)} & {100 (0.000)}& {100 (0.000)} \\
& {SE\%}& {59 (0.024)} & {82 (0.018)} & {92 (0.014)} & {95 (0.011)} & {95 (0.010)} & {97 (0.010)} & {97 (0.005)} & {98 (0.004)}& {99 (0.003)} \\
\cmidrule(lr){1-2}  
\cmidrule(lr){3-8}  
\cmidrule(lr){9-11}
{\multirow{3}{*}{$\mathbf{\Omega}_3$} } 
&{MC\%}&  {43 (0.021)} & {64 (0.019)} & {73 (0.015)} & {79 (0.015)} & {80 (0.014)} & {80 (0.014)} & {84 (0.007)} & {86 (0.007)}& {86 (0.007)} \\
& {SP\%}& {98 (0.001)} & {99 (0.001)} & {99 (0.001)} & {99 (0.001)} & {99 (0.001)} & {99 (0.001)} & {100 (0.000)} & {100 (0.000)}& {100 (0.000)} \\
& {SE\%}& {51 (0.024)} & {78 (0.022)} & {89 (0.016)} & {95 (0.015)} & {96 (0.011)} & {96 (0.010)} & {97 (0.006)} & {98 (0.005)}& {99 (0.004)} \\
\cmidrule(lr){1-2}  
\cmidrule(lr){3-8}  
\cmidrule(lr){9-11}
{\multirow{3}{*}{$\mathbf{\Omega}_4$} } 
&{MC\%}&  {19 (0.020)} & {47 (0.020)} & {54 (0.020)} & {72 (0.015)} & {75 (0.016)} & {81 (0.012)} & {80 (0.009)} & {85 (0.006)}& {88 (0.007)} \\
& {SP\%}& {98 (0.001)} & {99 (0.001)} & {99 (0.001)} & {99 (0.001)} & {99 (0.001)} & {99 (0.001)} & {100 (0.000)} & {100 (0.000)}& {100 (0.000)} \\
& {SE\%}& {21 (0.020)} & {53 (0.024)} & {61 (0.023)} & {84 (0.016)} & {87 (0.014)} & {95 (0.010)} & {89 (0.009)} & {96 (0.005)}& {99 (0.003)} \\
\cmidrule(lr){1-2}  
\cmidrule(lr){3-8}  
\cmidrule(lr){9-11}
{\multirow{3}{*}{$\mathbf{\Psi}_1$} } 
&{MC\%}&  {30 (0.048)} & {52 (0.047)} & {66 (0.051)} & {75 (0.043)} & {76 (0.040)} & {81 (0.037} & {87 (0.020)} & {91 (0.017)}& {93 (0.015)} \\
& {SP\%}& {100 (0.000)} & {100 (0.000)} & {100 (0.000)} & {100 (0.000)} & {100 (0.000)} & {100 (0.000)} & {100 (0.000)} & {100 (0.000)}& {100 (0.000)} \\
& {SE\%}& {33 (0.056)} & {57 (0.058)} & {71 (0.067)} & {82 (0.047)} & {85 (0.044)} & {89 (0.040)} & {86 (0.028)} & {91 (0.021)}& {94 (0.020)} \\
\cmidrule(lr){1-2}  
\cmidrule(lr){3-8}  
\cmidrule(lr){9-11}
{\multirow{3}{*}{$\mathbf{\Psi}_2$} } 
&{MC\%}&  {31 (0.051)} & {56 (0.043)} & {71 (0.035)} & {75 (0.037)} & {74 (0.033)} & {76 (0.035)} & {78 (0.021)} & {79 (0.018)}& {79 (0.019)} \\
& {SP\%}& {100 (0.000)} & {100 (0.000)} & {100 (0.000)} & {100 (0.000)} & {100 (0.000)} & {100 (0.000)} & {100 (0.000)} & {100 (0.000)}& {100 (0.000)} \\
& {SE\%}& {37 (0.067)} & {69 (0.067)} & {87 (0.041)} & {92 (0.040)} & {94 (0.038)} & {97 (0.024)} & {95 (0.020)} & {98 (0.013)}& {99 (0.009)} \\
\cmidrule(lr){1-2}  
\cmidrule(lr){3-8}  
\cmidrule(lr){9-11}
{\multirow{3}{*}{$\mathbf{\Psi}_3$} } 
&{MC\%}&  {25 (0.041)} & {58 (0.040)} & {73 (0.033)} & {82 (0.025)} & {84 (0.023)} & {84 (0.022)} & {85 (0.013)} & {87 (0.013)}& {88 (0.012)} \\
& {SP\%}& {100 (0.000)} & {100 (0.000)} & {100 (0.000)} & {100 (0.000)} & {100 (0.000)} & {100 (0.000)} & {100 (0.000)} & {100 (0.000)}& {100 (0.000)} \\
& {SE\%}& {23 (0.038)} & {60 (0.046)} & {79 (0.039)} & {92 (0.028)} & {94 (0.024)} & {96 (0.022)} & {94 (0.013)} & {98 (0.009)}& {99 (0.007)} \\
\cmidrule(lr){1-2}  
\cmidrule(lr){3-8}  
\cmidrule(lr){9-11}
{\multirow{3}{*}{$\mathbf{\Psi}_4$} } 
&{MC\%}&  {24 (0.027)} & {56 (0.023)} & {64 (0.021)} & {81 (0.017)} & {84 (0.013)} & {90 (0.011)} & {87 (0.008)} & {93 (0.006)}& {95 (0.005)} \\
& {SP\%}& {100 (0.000)} & {100 (0.000)} & {100 (0.000)} & {100 (0.000)} & {100 (0.000)} & {100 (0.000)} & {100 (0.000)} & {100 (0.000)}& {100 (0.000)} \\
& {SE\%}& {14 (0.017)} & {42 (0.024)} & {52 (0.022)} & {76 (0.020)} & {80 (0.017)} & {90 (0.015)} & {84 (0.010)} & {93 (0.008)}& {97 (0.005)} \\
\cmidrule(lr){1-2}  
\cmidrule(lr){3-8}  
\cmidrule(lr){9-11}
{\multirow{3}{*}{$\mathbf{\Psi}_{12}$} } 
&{MC\%}&  {36 (0.041)} & {57 (0.034)} & {70 (0.021)} & {78 (0.029)} & {80 (0.030)} & {83 (0.026)} & {86 (0.015)} & {90 (0.014)}& {93 (0.012)} \\
& {SP\%}& {100 (0.000)} & {100 (0.000)} & {100 (0.000)} & {100 (0.000)} & {100 (0.000)} & {100 (0.000)} & {100 (0.000)} & {100 (0.000)}& {100 (0.000)} \\
& {SE\%}& {34 (0.043)} & {59 (0.040)} & {74 (0.038)} & {82 (0.036)} & {83 (0.039)} & {88 (0.029)} & {87 (0.019)} & {92 (0.014)}& {95 (0.014)} \\
\cmidrule(lr){1-2}  
\cmidrule(lr){3-8}  
\cmidrule(lr){9-11}
{\multirow{3}{*}{$\mathbf{\Psi}_{123}$} } 
&{MC\%}&  {50 (0.029)} & {71 (0.022)} & {80 (0.020)} & {88 (0.018)} & {89 (0.014)} & {90 (0.013)} & {92 (0.008)} & {95 (0.006)}& {96 (0.005)} \\
& {SP\%}& {100 (0.000)} & {100 (0.000)} & {100 (0.000)} & {100 (0.000)} & {100 (0.000)} & {100 (0.000)} & {100 (0.000)} & {100 (0.000)}& {100 (0.000)} \\
& {SE\%}& {34 (0.028)} & {60 (0.028)} & {72 (0.029)} & {83 (0.027)} & {85 (0.020)} & {88 (0.022)} & {88 (0.013)} & {93 (0.009)}& {95 (0.009)} \\
\bottomrule
\end{tabular}
\label{table_sim1.1}
\end{sidewaystable}

\begin{sidewaystable}[htp]
\centering 
\caption{Summary results for simulation ii}
\begin{tabular}{
l
S[table-format = 3]
S[table-format = 3]
S[table-format = 3]
S[table-format = 3]
S[table-format = 3]
S[table-format = 3]
S[table-format = 3]
S[table-format = 3]
S[table-format = 3]
S[table-format = 3]
S[table-format = 3]
}
        \toprule
        \multicolumn{2}{c}{} & \multicolumn{6}{c}{$p=200$} & 
        \multicolumn{3}{c}{$p=500$}  \\
        \cmidrule(lr){3-8}  
        \cmidrule(lr){9-11} 
        && {$n=50$}&{$n=100$}&{$n=150$}&{$n=200$}&{$n=250$}&{$n=300$}&{$n=300$}
        &{$n=400$}&{$n=500$}\\
         \cmidrule(lr){1-2}  
        \cmidrule(lr){3-8}  
        \cmidrule(lr){9-11} 
{\multirow{3}{*}{$\mathbf{\Omega}^1$} } 
& {MC\%}&{17 (0.009)}&{36 (0.012)}&{46 (0.012)}&{56 (0.011)}&{60 (0.011)}&{68 (0.009)} &{61 (0.004)}&{67 (0.004)}&{72 (0.004)} \\
& {SP\%}&{88 (0.004)}&{94 (0.003)}&{95 (0.002)}&{95 (0.002)}&{96 (0.002)}&{96 (0.002} &{99 (0.000)}&{99 (0.000)}&{99 (0.000)} \\
& {SE\%}&{39 (0.015)}&{53 (0.013)}&{63 (0.013)}&{75 (0.013)}&{81 (0.011)}&{90 (0.009)} &{47 (0.004)}&{56 (0.004)}&{63 (0.004)} \\
\cmidrule(lr){1-2}  
\cmidrule(lr){3-8}  
\cmidrule(lr){9-11}
{\multirow{3}{*}{$\mathbf{\Omega}^2$} } 
& {MC\%}&{18 (0.011)}&{36 (0.011)}&{49 (0.013)}&{55 (0.010)}&{59 (0.013)}&{66 (0.011)} &{59 (0.004)}&{65 (0.004)}&{70 (0.004)} \\
& {SP\%}&{88 (0.004)}&{94 (0.003)}&{95 (0.002)}&{95 (0.002)}&{96 (0.002)}&{96 (0.002} &{99 (0.000)}&{99 (0.000)}&{99 (0.000)} \\
& {SE\%}&{39 (0.016)}&{53 (0.013)}&{67 (0.013)}&{74 (0.011)}&{78 (0.014)}&{85 (0.011)} &{46 (0.004)}&{54 (0.004)}&{61 (0.004)} \\
\cmidrule(lr){1-2}  
\cmidrule(lr){3-8}  
\cmidrule(lr){9-11}
{\multirow{3}{*}{$\mathbf{\Omega}^3$} } 
& {MC\%}&{18 (0.009)}&{39 (0.011)}&{48 (0.012)}&{56 (0.011)}&{63 (0.010)}&{64 ()0.010} &{61 (0.005)}&{67 (0.005)}&{72 (0.004)} \\
& {SP\%}&{88 (0.004)}&{94 (0.003)}&{95 (0.002)}&{95 (0.002)}&{96 (0.002)}&{96 (0.002} &{99 (0.000)}&{99 (0.000)}&{99 (0.000)} \\
& {SE\%}&{39 (0.014)}&{57 (0.013)}&{66 (0.014)}&{75 (0.013)}&{84 (0.011)}&{85 (0.010)} &{48 (0.005)}&{56 (0.004)}&{63 (0.004)} \\
\cmidrule(lr){1-2}  
\cmidrule(lr){3-8}  
\cmidrule(lr){9-11}
{\multirow{3}{*}{$\mathbf{\Omega}^4$} } 
& {MC\%}&{17 (0.009)}&{36 (0.011)}&{50 (0.013)}&{58 (0.010)}&{61 (0.012)}&{67 (0.009)} &{61 (0.005)}&{68 (0.008)}&{72 (0.004)} \\
& {SP\%}&{88 (0.004)}&{94 (0.002)}&{95 (0.002)}&{95 (0.002)}&{96 (0.002)}&{96 (0.001} &{99 (0.000)}&{99 (0.000)}&{99 (0.000)} \\
& {SE\%}&{38 (0.013)}&{53 (0.014)}&{69 (0.013)}&{78 (0.010)}&{82 (0.012)}&{89 (0.010)} &{47 (0.005)}&{56 (0.004)}&{63 (0.004)} \\
\cmidrule(lr){1-2}  
\cmidrule(lr){3-8}  
\cmidrule(lr){9-11}
{\multirow{3}{*}{$\mathbf{\Psi}^1$} } 
& {MC\%}&{7 (0.013)}&{20 (0.020)}&{32 (0.021)}&{47 (0.023)}&{55 (0.017)}&{67 (0.015)} &{30 (0.008)}&{40 (0.008)}&{48 (0.008)} \\
& {SP\%}&{98 (0.001)}&{99 (0.001)}&{99 (0.001)}&{99 (0.001)}&{99 (0.001)}&{99 (0.001)} &{100 (0.000)}&{100 (0.000)}&{100 (0.000)} \\
& {SE\%}&{9 (0.012)}&{17 (0.018)}&{26 (0.019)}&{41 (0.022)}&{49 (0.019)}&{64 (0.019)} &{11 (0.005)}&{18 (0.006)}&{26 (0.007)} \\
\cmidrule(lr){1-2}  
\cmidrule(lr){3-8}  
\cmidrule(lr){9-11}
{\multirow{3}{*}{$\mathbf{\Psi}^2$} } 
& {MC\%}&{7 (0.012)}&{20 (0.018)}&{37 (0.021)}&{46 (0.019)}&{52 (0.019)}&{65 (0.020)} &{29 (0.008)}&{38 (0.008)}&{46 (0.008)} \\
& {SP\%}&{98 (0.001)}&{99 (0.001)}&{99 (0.001)}&{99 (0.001)}&{99 (0.001)}&{99 (0.001)} &{100 (0.000)}&{100 (0.000)}&{100 (0.000)} \\
& {SE\%}&{9 (0.011)}&{17 (0.015)}&{31 (0.019)}&{40 (0.020)}&{46 (0.020)}&{62 (0.021)} &{11 (0.004)}&{17 (0.006)}&{24 (0.007)} \\
\cmidrule(lr){1-2}  
\cmidrule(lr){3-8}  
\cmidrule(lr){9-11}
{\multirow{3}{*}{$\mathbf{\Psi}^3$} } 
& {MC\%}&{8 (0.013)}&{24 (0.021)}&{37 (0.021)}&{47 (0.021)}&{58 (0.019)}&{62 (0.019)} &{30 (0.008)}&{40 (0.009)}&{49 (0.006)} \\
& {SP\%}&{98 (0.001)}&{99 (0.001)}&{99 (0.001)}&{99 (0.001)}&{99 (0.001)}&{99 (0.001)} &{100 (0.000)}&{100 (0.000)}&{100 (0.000)} \\
& {SE\%}&{9 (0.013)}&{20 (0.019)}&{29 (0.021)}&{40 (0.020)}&{53 (0.023)}&{59 (0.022)} &{11 (0.005)}&{18 (0.006)}&{26 (0.006)} \\
\cmidrule(lr){1-2}  
\cmidrule(lr){3-8}  
\cmidrule(lr){9-11}
{\multirow{3}{*}{$\mathbf{\Psi}^4$} } 
& {MC\%}&{7 (0.014)}&{21 (0.019)}&{38 (0.024)}&{50 (0.022)}&{56 (0.022)}&{66 (0.017)} &{30 (0.009)}&{40 (0.008)}&{49 (0.008)} \\
& {SP\%}&{98 (0.001)}&{99 (0.001)}&{99 (0.001)}&{99 (0.001)}&{99 (0.001)}&{99 (0.001)} &{100 (0.000)}&{100 (0.000)}&{100 (0.000)} \\
& {SE\%}&{9 (0.013)}&{18 (0.017)}&{32 (0.020)}&{43 (0.023)}&{50 (0.022)}&{63 (0.021)} &{11 (0.005)}&{18 (0.006)}&{26 (0.007)} \\
\cmidrule(lr){1-2}  
\cmidrule(lr){3-8}  
\cmidrule(lr){9-11}
{\multirow{3}{*}{$\mathbf{\Psi}^{1234}$} } 
& {MC\%}&{21 (0.020)}&{48 (0.019)}&{63 (0.022)}&{73 (0.015)}&{76 (0.014)}&{85 (0.011)} &{68 (0.007)}&{76 (0.006)}&{82 (0.005)} \\
& {SP\%}&{99 (0.001)}&{100 (0.000)}&{100 (0.000)}&{100 (0.000)}&{100 (0.000)}&{100 (0.000)} &{100 (0.000)}&{100 (0.000)}&{100 (0.000)} \\
& {SE\%}&{16 (0.015)}&{36 (0.019)}&{52 (0.026)}&{66 (0.019)}&{73 (0.019)}&{82 (0.016)} &{49 (0.009)}&{61 (0.009)}&{69 (0.008)} \\
\bottomrule
\end{tabular}
\label{table_sim1.2}
\end{sidewaystable}

Further, note that the model selection properties of BJNS to check convergence of the MCMC algorithm. This is accomplished by tracking the ratio of correctly selected $\boldsymbol{\theta}_{ij}$s, denoted by $\kappa$ and defines as 
\begin{equation*}
\kappa= \frac{ \# \boldsymbol{\theta}_{ij}\text{s} \ \text{that are correctly selected}}{\left( \frac{p(p-1)}{2}\right)}
\end{equation*}
In addition to accuracy assessment, $\kappa$ helps studying the number of iterations that on average it takes for the Gibbs sampler to converge. Figure \ref{MCMC and Selection plot} describes the trace plot and the histogram of $\kappa$ for 4000 iterations for the above two simulations for the $p=200$, $n = 300$ settings. The MCMC trace plots of $\kappa$ show that the Gibbs  sampler converges fairly quickly. Moreover, the histograms of $\kappa$ indicate a high proportion of correctly selected $\boldsymbol{\theta}_{ij}$s.
\begin{figure}[htp] 
\centering
\caption{Convergence and the distribution of $\kappa$ during the MCMC samplings}
\includegraphics[width=0.4\textwidth]{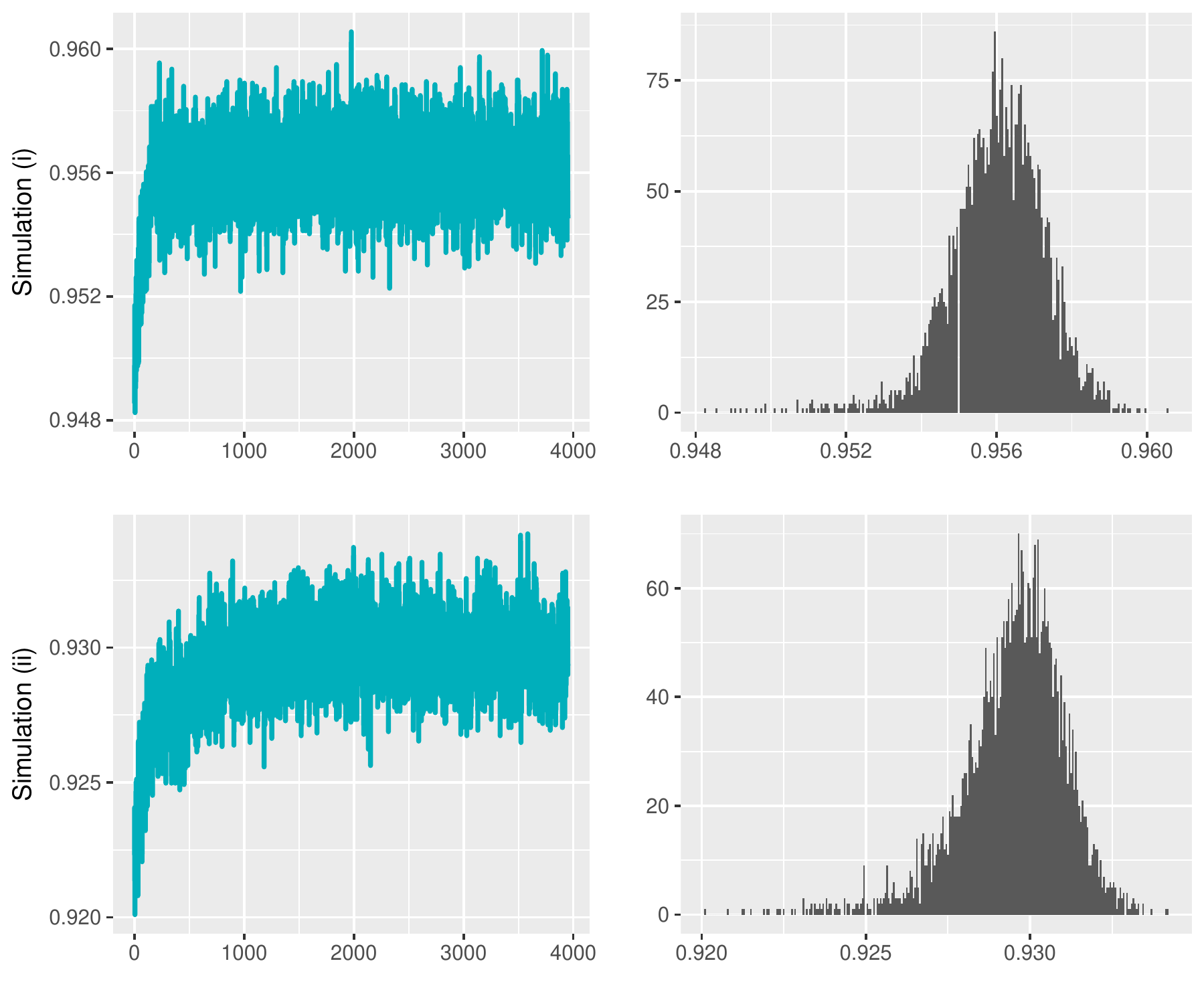} 
\label{MCMC and Selection plot}
\end{figure}

\subsection{Comparison With Existing Methods}

In this section, we compare the performance of BJNS with Glasso, JEM-G, GGL and JSEM, in two settings involving 6 networks. We also discuss a pure computational strategy to scale up BJNS to handle efficiently an even larger number of networks. 

\subsubsection{Comparison With Existing Methods}
We consider a scenario with $K = 6$ graphs each with $p = 200$ variables (see Figure \ref{Comparison 2}), where we first generate the adjacency matrices corresponding to three distinct $p$-dimensional networks, so that the adjacency matrices in each column of the plot in Figure \ref{Comparison 2} are common. Next, we replace the connectivity structure of the bottom right diagonal block of size $p/2$ by $p/2$ in each adjacency matrix with that of an other two distinct $p/2$-dimensional networks, so that graphical models in each column exhibit the same connectivity pattern except in the bottom right diagonal block of their adjacency matrices. The resulting true matrix decompositions of the networks is as follows:
\begin{equation}\label{true decomposition k=6}
\begin{split}
\mathbf{\Omega}^{1} &= \mathbf{\Psi}^{12} +  \mathbf{\Psi}^{135}, \quad \quad \mathbf{\Omega}^{3} = \mathbf{\Psi}^{34} +  \mathbf{\Psi}^{135}, \quad \quad
\mathbf{\Omega}^{5} = \mathbf{\Psi}^{56} +  \mathbf{\Psi}^{135},\\
\mathbf{\Omega}^{2} &= \mathbf{\Psi}^{12} +  \mathbf{\Psi}^{246}, \quad \quad 
 \mathbf{\Omega}^{4} = \mathbf{\Psi}^{34} +  \mathbf{\Psi}^{246}, \quad \quad
\mathbf{\Omega}^{6} = \mathbf{\Psi}^{56} +  \mathbf{\Psi}^{246}.
\end{split}
\end{equation}

Note that by replacing the connectivity structure among the second half of the nodes, the relationships between the first half and the second half of the nodes are also altered. In summary, these sparsity patterns illustrate the common components across different subsets of edges, as well as differences.
 The sparsity level for all networks is set to 92\%, while the proportion of common zeros (no edge present) across all six networks is about 60\%. Given the adjacency matrices, we then construct the inverse covariance matrices with the nonzero off-diagonal entries in each $\mathbf{\Omega}^k$ being uniformly generated from the $[-0.6, -0.4]\cup[0.4, 0.6]$ interval. To implement JESM, we supply the sparsity patterns defined according to the pattern in \ref{Comparison 2}. We also study the effect of mis-specification to JSEM in particular, by adding an additional $\rho = 4\%$ of edges to the networks. Adding $\rho = 4\%$ of edges results in having $60$\% of the information in the sparsity patterns being correct for JSEM. 

At each level of pattern mis-specification, we generated $n_k = 200, 300$ independent samples for each $k = 1,...,K$ and examined the finite sample performance of different methods in identifying the true graphs and estimating the precision matrices at the optimal choice of tuning parameters. Table \ref{comparison2:random sparsity} show the deviance measures between the estimated and the true precision matrices based on 50 replications for varying levels of sample size and pattern mis-specification. 

When $\rho =0$ (no mis-specifications), JSEM, which benefits from knowing 100\% of the information about which sets of edges can be fused across which subsets of networks, achieves a good balance between false positives and false negatives and yields the highest MCC score, as expected. BJNS is also very competitive and its overall performance is significantly better than all other unsupervised methods. Glasso and GGL tend to perform well in controlling false negatives and JEM-G is constantly among the best two models in terms of the SP score. With $\rho =0.04$ and sparsity level of 12\%, naturally the overall performance of all methods decrease. However, JSEM suffers more as it only has correctly specified information about shared patters for only 60\%  of the edges. On the other hand, BJNS proves robust and achieves a good balance between specificity and sensitivity, thus outperforming competing methods. In this setting, Glasso is also competitive due to the overall heterogeneity of the 6 networks.

\begin{figure}[h] 
\centering
\includegraphics[height=2.1in,width=4in,angle=0]{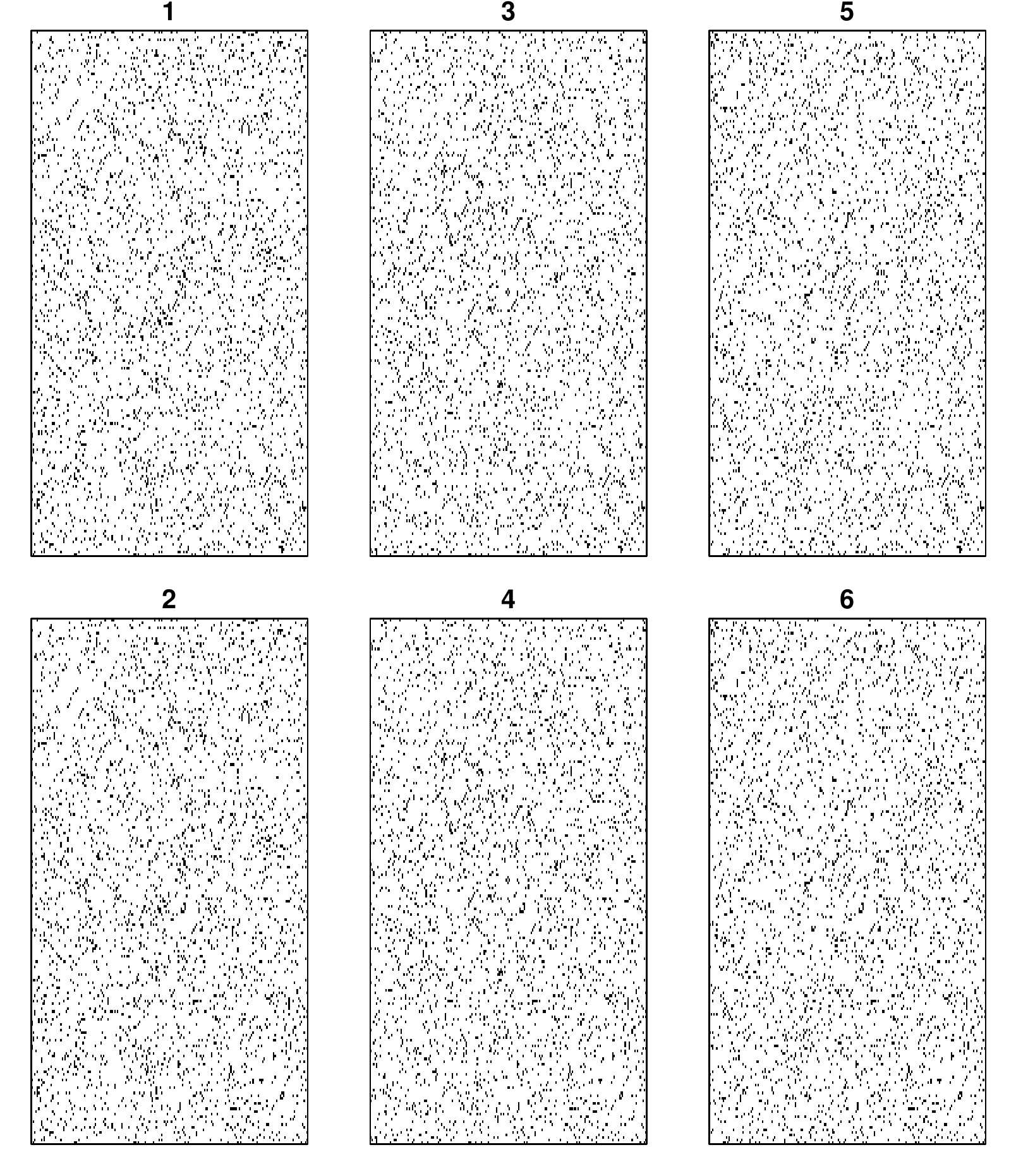} 
\caption{Image plots of the random sparse adjacency matrices from all graphical models. Graphs in the same row share the same sparsity pattern at the bottom right block, whereas graphs in the same column share the same pattern at remaining locations}
\label{Comparison 2}
\end{figure}

\begin{table}[h]
\centering 
\caption{Summary of model comparisons based on average of deviance measures across the 6 networks, for the case of random sparsity patterns}
\begin{tabular}{
l
S[table-format = 3]
S[table-format = 3]
S[table-format = 3]
S[table-format = 3]
S[table-format = 3]
S[table-format = 3]
S[table-format = 3]
S[table-format = 3]
S[table-format = 3]
S[table-format = 3]
S[table-format = 3]
}
\toprule
 & & {Glasso} & {JEM-G} & {GGL} & {JSEM} & {BJNS} \\
\cmidrule(lr){3-7}  
$\rho$ &&&{$n=200$}&&&\\
\midrule
{\multirow{3}{*}{0} } 
& {MC\%}& {47 (0.009)} & {50 (0.010)} & {47 (0.009)} & {\bf 61 (0.009)} & {\bf 57 (0.010)}  \\
& {SP\%}& {94 (0.003)} & {\bf 97 (0.001)} & {93 (0.003)} & {\bf 99 (0.001)} & {\bf 97 (0.001)}  \\
& {SE\%}& {\bf 60 (0.010)} & {48 (0.010)} & {\bf 61 (0.010)} & {46 (0.009)} & {59 (0.010)}  \\
\cmidrule(lr){1-2}  
\cmidrule(lr){3-7} 
{\multirow{3}{*}{0.04} } 
& {MC\%}& {\bf 40 (0.008)} & {35 (0.010)} & {35 (0.008)} & {32 (0.009)} & {\bf 40 (0.010)}  \\
& {SP\%}& {93 (0.003)} & {\bf 97 (0.001)} & {91 (0.003)} & {\bf 99 (0.001)} & {96 (0.002)}  \\
& {SE\%}& {\bf 46 (0.009)} & {30 (0.008)} & {\bf 46 (0.008)} & {18 (0.008)} & {\bf 39 (0.009)}  \\
\midrule
&&&{$n=300$}&&&\\
\midrule
{\multirow{3}{*}{0} } 
& {MC\%}& {54 (0.009)} & {60 (0.010)} & {54 (0.008)} & {\bf 73 (0.008)} & {\bf 70 (0.010)}  \\
& {SP\%}& {93 (0.003)} & {\bf 97 (0.001)} & {92 (0.002)} & {\bf 99 (0.001)} & {\bf 97 (0.001)}  \\
& {SE\%}& {72 (0.009)} & {61 (0.010)} & {\bf 75 (0.009)} & {63 (0.008)} & {\bf 77 (0.010)}  \\
\cmidrule(lr){1-2}  
\cmidrule(lr){3-7}  
{\multirow{3}{*}{0.04} } 
& {MC\%}& {\bf 47 (0.009)} & {43 (0.009)} & {42 (0.008)} & {42 (0.008)} & {\bf 51 (0.010)}  \\
& {SP\%}& {93 (0.003)} & {\bf 97 (0.001)} & {90 (0.003)} & {\bf 99 (0.001)} & {95 (0.002)}  \\
& {SE\%}& {\bf 58 (0.009)} & {39 (0.007)} & {\bf 58 (0.008)} & {29 (0.007)} & {\bf 53 (0.010)}  \\
\bottomrule
\end{tabular}
\label{comparison2:random sparsity}
\end{table}

\subsubsection{A Computational Strategy for Speeding Up BJNS for large $K$}

As presented in Algorithm \ref{BlockGibbsBCONCORD}, the Gibbs sampler updates all the $p(p-1)/2$ vectors $\boldsymbol{\theta}_{ij}$ based on their full conditional distributions. Although the conditional posterior distribution of $\boldsymbol{\theta}_{ij}$s is a mixture of univariate normal densities, 
all $2^K$ mixture probabilities $c_{l,ij}$ given in (\ref{mixture probs}) still need to be calculated, which in turn involves matrix-vector multiplications. Hence with increasing $K$, the computation complexity of the full decomposition  (\ref{eq21}) grows quickly.

Next, we discuss a strategy that starts by examining all {\em pairwise decompositions} to identify inactive pairwise components and the higher matrices $r$ of such pairwise components.
In the first step, ${K}\choose{2}$ pairwise joint models are considered; namely, for any pair $1\leq k_1 < k_2 \leq K$, we examine   
\begin{equation}\label{reduced model}
\begin{split}
\mathbf{\Omega}^{k_1} &= \mathbf{\Psi}^{k_1} +  \mathbf{\Psi}^{k_1k_2}, \\
\mathbf{\Omega}^{k_2} &= \mathbf{\Psi}^{k_2} +  \mathbf{\Psi}^{k_1k_2}.
\end{split}
\end{equation}

Subsequently, we remove all the pairwise matrices $\mathbf{\Psi}^{k_1k_2}$ that are considered  ``inactive", i.e have significantly fewer edges compared to other pairwise matrices. Once the ``inactive" pairwise components are identified, we then remove any higher order component $r$ that contains them. Next, we run the resulting reduced model and count the number of edges present in each component $r$ and calculate the number of edges present in the estimated matrices. We then further reduce the model if any matrix component seem to be inactive (has significantly smaller number of edges) and finally run BJNS one last time with the resulting reduced model.

The proposed purely computational strategy is illustrated on the setting given in Figure \ref{comparison2:random sparsity}, which involves $K=6$ groups with $p=200$ variables and $n_k= 200, 300$ samples per group. Note that the full decomposition would involve $r$-tuple interaction components, $r=2,\cdots,6$, which renders computations expensive, since each vector $\boldsymbol{\theta}_{ij}$ would have length $2^K -1 = 63$. However, as can be seen by the design of the simulation, most components $r$ are ``inactive". For instance, in the case of (\ref{true decomposition k=6}), only $5$ out of the $63$ components in the full model are non-zero (active). 

We start by first studying the ${6}\choose{2}$ pairwise interaction components. Figure \ref{fig:sub1} shows the number of edges in the pairwise matrices $\mathbf{\Psi}^{k_1k_2}$, $1 \leq k_1 < k_2 \leq 6$. 

\begin{figure}
\centering
\begin{subfigure}{.5\textwidth}
  \centering
  \includegraphics[height=2in,width=3in,angle=0]{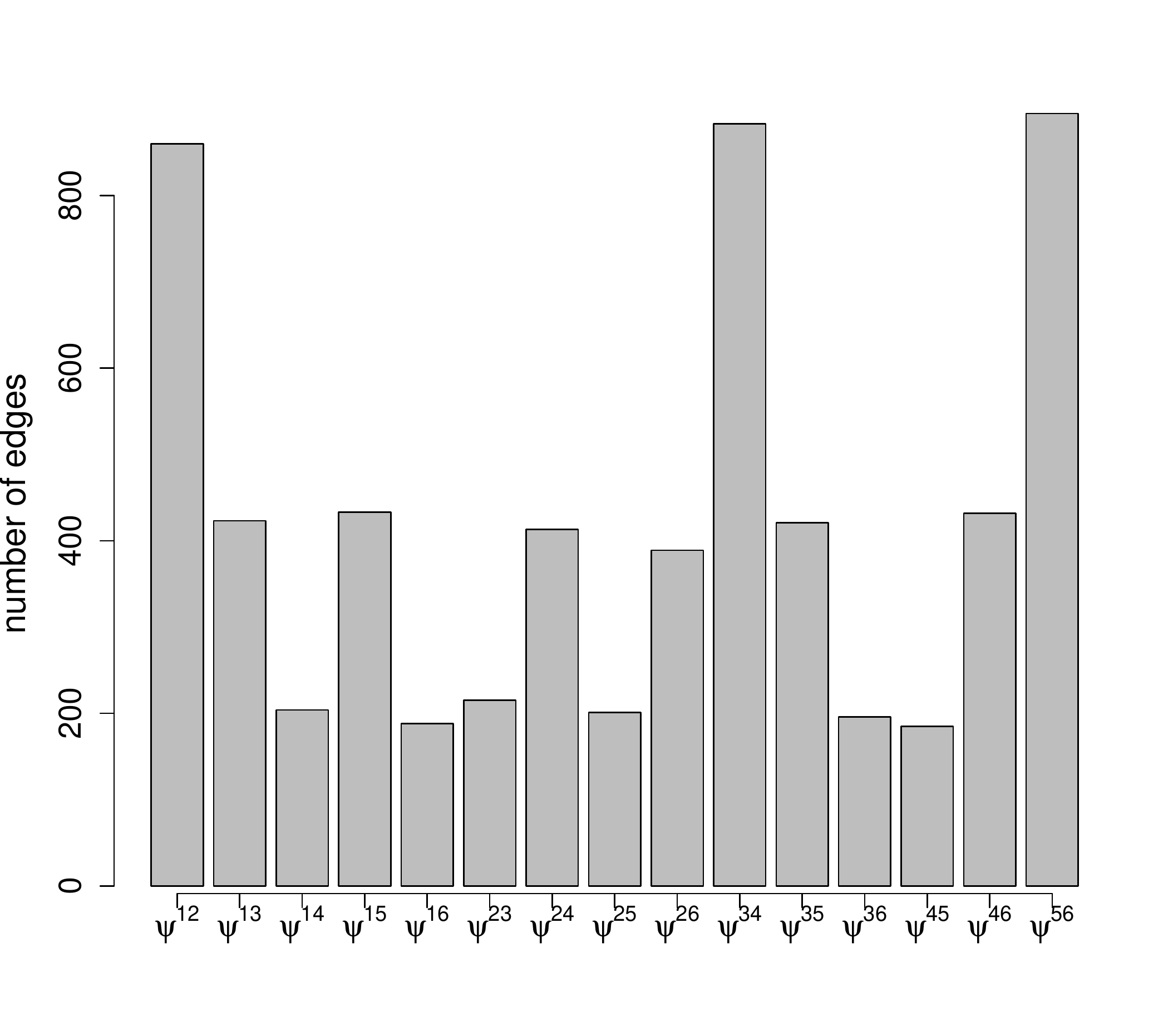}
  \caption{}
  \label{fig:sub1}
\end{subfigure}%
\begin{subfigure}{.5\textwidth}
  \centering
  \includegraphics[height=2in,width=3in,angle=0]{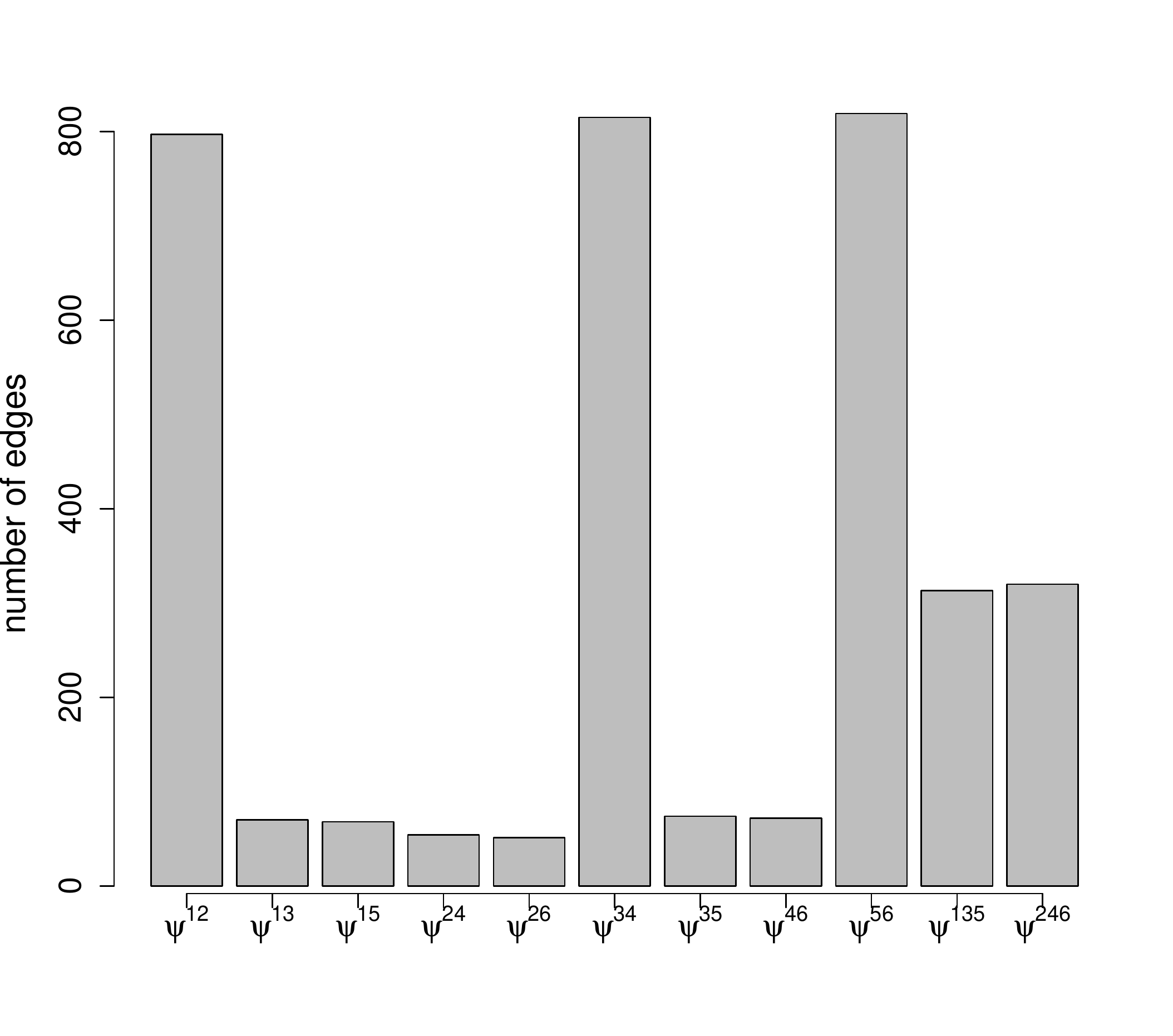}
  \caption{}
  \label{fig:sub2}
\end{subfigure}
\caption{Bar plots of the edge count of the pairwise joint components (a) and components of the reduced model (b)}
\label{fig:test}
\end{figure}

Based on this plot, one can see that matrices $\mathbf{\Psi}^{14}$, $\mathbf{\Psi}^{16}$, $\mathbf{\Psi}^{23}$, $\mathbf{\Psi}^{25}$, $\mathbf{\Psi}^{35}$, and $\mathbf{\Psi}^{45}$ have significantly smaller number of edges compared to other pairwise matrices. Hence, in the next step we remove all of these matrices and any higher order components, whose super scripts contain those of $\mathbf{\Psi}^{14}$, $\mathbf{\Psi}^{16}$, $\mathbf{\Psi}^{23}$, $\mathbf{\Psi}^{25}$, $\mathbf{\Psi}^{35}$, and $\mathbf{\Psi}^{45}$. 

Doing so results in a reduced model with components, $\mathbf{\Psi}^{12}$, $\mathbf{\Psi}^{13}$, $\mathbf{\Psi}^{15}$, $\mathbf{\Psi}^{24}$, $\mathbf{\Psi}^{26}$, $\mathbf{\Psi}^{34}$, $\mathbf{\Psi}^{35}$, $\mathbf{\Psi}^{46}$, $\mathbf{\Psi}^{56}$, $\mathbf{\Psi}^{135}$, and $\mathbf{\Psi}^{246}$. Thus, in the second step, we run BJNS with a decomposition that is based on the above components; the edge count for these components are shown in plot \ref{fig:sub2}. From which, it is clear that matrices $\mathbf{\Psi}^{13}$, $\mathbf{\Psi}^{15}$, $\mathbf{\Psi}^{24}$, $\mathbf{\Psi}^{26}$, $\mathbf{\Psi}^{35}$, and $\mathbf{\Psi}^{46}$ are redundant and should be removed from the model. This further model reduction achieves the true decomposition given in \ref{true decomposition k=6}. The results of BJNS running on the resulting reduced model can be read off from the Table \ref{compare reduced BJNS with JSEM}. As can be seen, BJNS outperforms JSEM, which has been supplied with complete information about the sparsity patterns.

\begin{table}[htp]
\centering 
\caption{Comparions between JSEM and BJNS when employing the step wise computational strategy; the results are based on 50 replications}
\begin{tabular}{
l
S[table-format = 3]
S[table-format = 3]
S[table-format = 3]
S[table-format = 3]
S[table-format = 3]
S[table-format = 3]
S[table-format = 3]
S[table-format = 3]
S[table-format = 3]
S[table-format = 3]
S[table-format = 3]
}
\toprule
 & {JSEM} & {BJNS}& {JSEM} & {BJNS} \\
\cmidrule(lr){2-3}  
\cmidrule(lr){4-5}
 & {$n=200$} & {}& {$n=300$} & {} \\
\cmidrule(lr){1-1}  
\cmidrule(lr){2-3} 
\cmidrule(lr){4-5}  
 {MC\%}&  {61 (0.009)} & {69 (0.011)} &  {73 (0.008)} & {83 (0.007)}  \\
 {SP\%}&  {99 (0.001)} & {99 (0.001)}&  {99 (0.001)} & {100 (0.001)}  \\
{SE\%}&   {46 (0.009)} & {56 (0.010)} &  {63 (0.008)} & {76 (0.010)}  \\ 
\bottomrule
\end{tabular}
\label{compare reduced BJNS with JSEM}
\end{table}

\subsubsection{Computational Cost of BJNS}
Lastly, as presented in Table \ref{computational cost of BJNS}, we investigate the computational cost associated with the above strategy, across varying values of $p$ and $n$. Each experiment was repeated 5 times and all computations were done sequentially using one processor (CPU). Note that around 60\% of the time in each experiment was spent on the first step which is investigating all the ${6}\choose{2}$ pairwise models. Since, the pairwise models are ran independently, one can use parallel computing and reduce the computational time in table \ref{computational cost of BJNS} potentially by 50\%. Finally, as can be seen in the last row of the table, the memory usage of BJNS is not necessarily large and that is due to the fact that the algorithm does not involve any matrix inversions or generation of samples from multivariate distributions. 

\begin{table}[htp]
\centering 
\caption{accuracy and cost of BJNS for varying values of $p$}
\begin{tabular}{
l
S[table-format = 3]
S[table-format = 3]
S[table-format = 3]
S[table-format = 3]
S[table-format = 3]
S[table-format = 3]
S[table-format = 3]
S[table-format = 3]
S[table-format = 3]
S[table-format = 3]
S[table-format = 3]
}
\toprule
{\multirow{4}{*}{}} 
 &{$p=200$}&{$p=500$}&{$p=700$}&{$p=1000$}\\
 &{$n=300$}&{$n=750$}&{$n=1050$}&{$n=1500$}\\

 \midrule
 {MC\%}&  {83} & {84} &  {85} & {85}   \\
 {SP\%}&  {100} & {100} &  {100} & {100}  \\
 {SE\%}&  {76} & {78} &  {78} & {78}   \\
 {hours} &  {2.9h} & {38.9h} &  {83.3h} & {214.5h}   \\
  {GigaBytes}&  {0.25gb} & {0.5gb} &  {0.6gb} & {0.9gb}  \\
\bottomrule
\end{tabular}
\label{computational cost of BJNS}
\end{table}

\section{An Application of BJNS to Metabolomics Data} \label{Omics data}
\noindent
In this section, we employ the proposed methodology to obtain networks across four groups of patients that participated in the Integrative Human Microbiome Project.
The data were downloaded from the Metabolomics Workbench \texttt{www.metabolomicsworkbench.org} (Study ID ST000923) and correspond to measurements of 428 primary and secondary metabolites and lipids from stool samples of  542 subjects, partitioned in the following groups: inflammatory bowel disease (IBD) patients (males $n_1=202$ and females $n_2=208$) and non-IBD controls (males $n_3=72$ and females $n_4=70$), Groups 1-4, respectively. Since there are two factors in the study design, the following model was fitted to the data.
\begin{equation*}
\begin{split}
\mathbf{\Omega}^1 &= \mathbf{\Psi}^1 + \mathbf{\Psi}^{12} + \mathbf{\Psi}^{13} + \mathbf{\Psi}^{1234}, \quad \quad \mathbf{\Omega}^3 = \mathbf{\Psi}^3 + \mathbf{\Psi}^{13} + \mathbf{\Psi}^{34} + \mathbf{\Psi}^{1234}, \\
\mathbf{\Omega}^2 &= \mathbf{\Psi}^2 + \mathbf{\Psi}^{12} + \mathbf{\Psi}^{24} + \mathbf{\Psi}^{1234}, \quad \quad  \mathbf{\Omega}^4 = \mathbf{\Psi}^4 + \mathbf{\Psi}^{24} + \mathbf{\Psi}^{34} + \mathbf{\Psi}^{1234}.
\end{split}
\end{equation*}

The results are shown in the next Table (set1: 289 lipids in red and set2: 139 metabolites in blue, set12: interaction edges between set1 and set2), for both the final estimates of the precision matrices, as well as the components in the proposed decomposition. It is interesting to note that a large number of edges are shared across all groups, indicating common patterns. Further, the component shared between male and female IBD patients has a fairly large number of edges, indicating that the disease status exhibits commonalities across both males and females. 
\begin{table}[H]
\centering
\caption{Number of edges in each matrix}
\begin{tabular}{
l
S[table-format = 3]
S[table-format = 3]
S[table-format = 3]
S[table-format = 3]
S[table-format = 3]
S[table-format = 3]
S[table-format = 3]
S[table-format = 3]
S[table-format = 3]
S[table-format = 3]
S[table-format = 3]
S[table-format = 3]
S[table-format = 3]
S[table-format = 3]
S[table-format = 3]
S[table-format = 3]
}

\toprule
{}& {$\mathbf{\Omega}^1$} & {$\mathbf{\Omega}^2$} & {$\mathbf{\Omega}^3$}  & {$\mathbf{\Omega}^4$} & {$\mathbf{\Psi}^1$} & {$\mathbf{\Psi}^2$} & {$\mathbf{\Psi}^3$}  & {$\mathbf{\Psi}^4$}  & {$\mathbf{\Psi}^{12}$}  & {$\mathbf{\Psi}^{13}$} &  {$\mathbf{\Psi}^{24}$} &  {$\mathbf{\Psi}^{34}$} & {$\mathbf{\Psi}^{1234}$} \\
 \midrule
 {set1} &{941}&{930} & {797}&{799} &{47}&{29}& {18}&{13}&{161}&{81}&{88}&{46}&{652}\\
 {set2} &{264}&{269}&{243}&{244} &{2}&{5}&{2}&{1}&{34}&{14}&{16}&{13}&{214}\\
 {set12}&{154}&{155} &{124}&{124} &{10}&{11}&{5}&{5}&{34}&{13}&{13}&{9}&{97}\\
\bottomrule
\end{tabular}
\label{data edge count table}
\end{table}

The next plot shows a network map of the common component shared across all groups. The primary and secondary metabolites are depicted in red, while the lipids in blue.
Not surprisingly, primary metabolites (those involved in cellular growth, development and reproduction) form a fairly strongly connected network. On the other hand,
the connectivity between lipids (whose functions include storing energy, signaling and acting as structural components of cell membranes) to the metabolites is not
particularly strong. On the other hand, different fairly strongly connected subnetworks amongst lipids are present, including dicylglycerols (DAG) with tricylglycerols (TAG)
that are main constituents of animal and vegetable fat (upper right corner of the plot) and various phospholipids (upper left corner of the plot). In general, the results reveal interesting patterns that can be used to understand progression of IBD disease.

\begin{figure}[H]
 \centering
 \caption{Network plot of the edges shared between the four groups ($\Psi^{1234}$); names of the metabolites and the lipids appear in red and blue, respectively.}
 \includegraphics[height=4in,width=5in,angle=0]{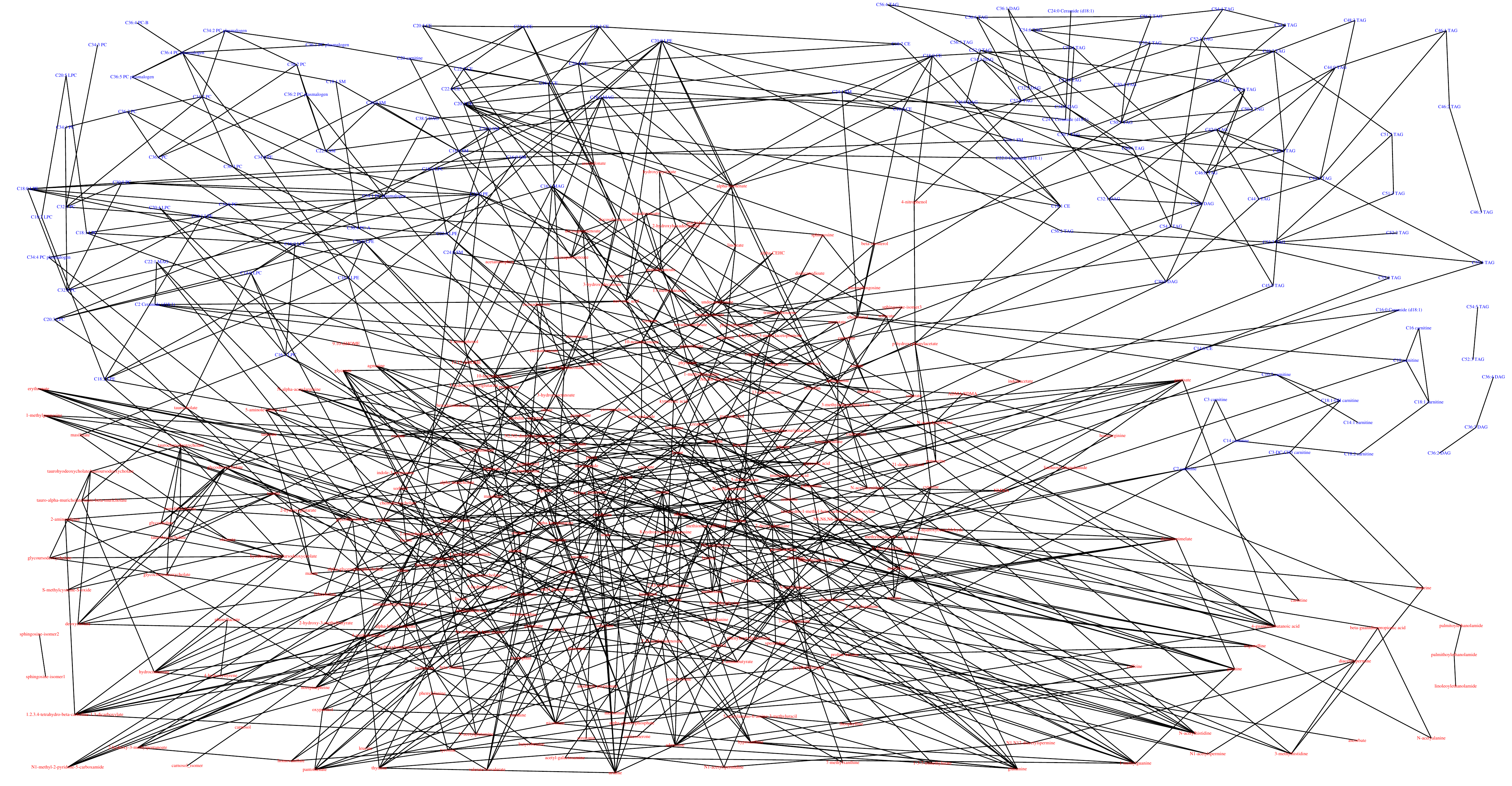} 
\end{figure}

\pagebreak

\addcontentsline{toc}{chapter}{References}
\bibliographystyle{plainnat}
\bibliography{reference}

\begin{thebibliography}{28}
\providecommand{\natexlab}[1]{#1}
\providecommand{\url}[1]{\texttt{#1}}
\expandafter\ifx\csname urlstyle\endcsname\relax
  \providecommand{\doi}[1]{doi: #1}\else
  \providecommand{\doi}{doi: \begingroup \urlstyle{rm}\Url}\fi

\bibitem[Banerjee and Ghosal(2015)]{banerjee2015bayesian}
Sayantan Banerjee and Subhashis Ghosal.
\newblock Bayesian structure learning in graphical models.
\newblock \emph{Journal of Multivariate Analysis}, 136:\penalty0 147--162,
  2015.

\bibitem[Banerjee et~al.(2014)Banerjee, Ghosal, et~al.]{banerjee2014posterior}
Sayantan Banerjee, Subhashis Ghosal, et~al.
\newblock Posterior convergence rates for estimating large precision matrices
  using graphical models.
\newblock \emph{Electronic Journal of Statistics}, 8\penalty0 (2):\penalty0
  2111--2137, 2014.

\bibitem[Bickel and Levina(2008)]{bickel2008regularized}
Peter~J Bickel and Elizaveta Levina.
\newblock Regularized estimation of large covariance matrices.
\newblock \emph{The Annals of Statistics}, pages 199--227, 2008.

\bibitem[B{\"u}hlmann and Van De~Geer(2011)]{buhlmann2011statistics}
Peter B{\"u}hlmann and Sara Van De~Geer.
\newblock \emph{Statistics for high-dimensional data: methods, theory and
  applications}.
\newblock Springer Science \& Business Media, 2011.

\bibitem[Cai et~al.(2016)Cai, Li, Liu, and Xie]{cai2016joint}
T~Tony Cai, Hongzhe Li, Weidong Liu, and Jichun Xie.
\newblock Joint estimation of multiple high-dimensional precision matrices.
\newblock \emph{Statistica Sinica}, 26\penalty0 (2):\penalty0 445, 2016.

\bibitem[Cao et~al.(2016)Cao, Khare, and Ghosh]{cao2016posterior}
Xuan Cao, Kshitij Khare, and Malay Ghosh.
\newblock Posterior graph selection and estimation consistency for
  high-dimensional bayesian dag models.
\newblock \emph{arXiv preprint arXiv:1611.01205}, 2016.

\bibitem[Danaher et~al.(2014)Danaher, Wang, and Witten]{danaher2014joint}
Patrick Danaher, Pei Wang, and Daniela~M Witten.
\newblock The joint graphical lasso for inverse covariance estimation across
  multiple classes.
\newblock \emph{Journal of the Royal Statistical Society: Series B (Statistical
  Methodology)}, 76\penalty0 (2):\penalty0 373--397, 2014.

\bibitem[Eddelbuettel and Fran\c{c}ois(2011)]{Rcpp}
Dirk Eddelbuettel and Romain Fran\c{c}ois.
\newblock {Rcpp}: Seamless {R} and {C++} integration.
\newblock \emph{Journal of Statistical Software}, 40\penalty0 (8):\penalty0
  1--18, 2011.
\newblock \doi{10.18637/jss.v040.i08}.
\newblock URL \url{http://www.jstatsoft.org/v40/i08/}.

\bibitem[Eddelbuettel and Sanderson(2014)]{RcppArmadillo}
Dirk Eddelbuettel and Conrad Sanderson.
\newblock Rcpparmadillo: Accelerating r with high-performance c++ linear
  algebra.
\newblock \emph{Computational Statistics and Data Analysis}, 71:\penalty0
  1054--1063, March 2014.
\newblock URL \url{http://dx.doi.org/10.1016/j.csda.2013.02.005}.

\bibitem[Friedman et~al.(2008)Friedman, Hastie, and
  Tibshirani]{friedman2008sparse}
Jerome Friedman, Trevor Hastie, and Robert Tibshirani.
\newblock Sparse inverse covariance estimation with the graphical lasso.
\newblock \emph{Biostatistics}, 9\penalty0 (3):\penalty0 432--441, 2008.

\bibitem[Guo et~al.(2011)Guo, Levina, Michailidis, and Zhu]{guo2011joint}
Jian Guo, Elizaveta Levina, George Michailidis, and Ji~Zhu.
\newblock Joint estimation of multiple graphical models.
\newblock \emph{Biometrika}, 98\penalty0 (1):\penalty0 1--15, 2011.

\bibitem[Hans(2009)]{hans2009bayesian}
Chris Hans.
\newblock Bayesian lasso regression.
\newblock \emph{Biometrika}, 96\penalty0 (4):\penalty0 835--845, 2009.

\bibitem[Khare et~al.(2015)Khare, Oh, and Rajaratnam]{khare2015convex}
Kshitij Khare, Sang-Yun Oh, and Bala Rajaratnam.
\newblock A convex pseudolikelihood framework for high dimensional partial
  correlation estimation with convergence guarantees.
\newblock \emph{Journal of the Royal Statistical Society: Series B (Statistical
  Methodology)}, 77\penalty0 (4):\penalty0 803--825, 2015.

\bibitem[Kling et~al.(2015)Kling, Johansson, Sanchez, Marinescu, J{\"o}rnsten,
  and Nelander]{kling2015efficient}
Teresia Kling, Patrik Johansson, Jos{\'e} Sanchez, Voichita~D Marinescu,
  Rebecka J{\"o}rnsten, and Sven Nelander.
\newblock Efficient exploration of pan-cancer networks by generalized
  covariance selection and interactive web content.
\newblock \emph{Nucleic acids research}, 43\penalty0 (15):\penalty0 e98--e98,
  2015.

\bibitem[Kyung et~al.(2010)Kyung, Gill, Ghosh, Casella,
  et~al.]{kyung2010penalized}
Minjung Kyung, Jeff Gill, Malay Ghosh, George Casella, et~al.
\newblock Penalized regression, standard errors, and bayesian lassos.
\newblock \emph{Bayesian Analysis}, 5\penalty0 (2):\penalty0 369--411, 2010.

\bibitem[Ma and Michailidis(2016)]{ma2016joint}
Jing Ma and George Michailidis.
\newblock Joint structural estimation of multiple graphical models.
\newblock \emph{The Journal of Machine Learning Research}, 17\penalty0
  (1):\penalty0 5777--5824, 2016.

\bibitem[Majumdar and Michailidis(2018)]{majumdar2018joint}
Subhabrata Majumdar and George Michailidis.
\newblock Joint estimation and inference for data integration problems based on
  multiple multi-layered gaussian graphical models.
\newblock \emph{arXiv preprint arXiv:1803.03348}, 2018.

\bibitem[Narisetty et~al.(2014)Narisetty, He, et~al.]{narisetty2014bayesian}
Naveen~Naidu Narisetty, Xuming He, et~al.
\newblock Bayesian variable selection with shrinking and diffusing priors.
\newblock \emph{The Annals of Statistics}, 42\penalty0 (2):\penalty0 789--817,
  2014.

\bibitem[Park and Casella(2008)]{park2008bayesian}
Trevor Park and George Casella.
\newblock The bayesian lasso.
\newblock \emph{Journal of the American Statistical Association}, 103\penalty0
  (482):\penalty0 681--686, 2008.

\bibitem[Peng et~al.(2009)Peng, Wang, Zhou, and Zhu]{peng2009partial}
Jie Peng, Pei Wang, Nengfeng Zhou, and Ji~Zhu.
\newblock Partial correlation estimation by joint sparse regression models.
\newblock \emph{Journal of the American Statistical Association}, 104\penalty0
  (486):\penalty0 735--746, 2009.

\bibitem[Peterson et~al.(2015)Peterson, Stingo, and
  Vannucci]{peterson2015bayesian}
Christine Peterson, Francesco~C Stingo, and Marina Vannucci.
\newblock Bayesian inference of multiple gaussian graphical models.
\newblock \emph{Journal of the American Statistical Association}, 110\penalty0
  (509):\penalty0 159--174, 2015.

\bibitem[Pierson et~al.(2015)Pierson, Koller, Battle, Mostafavi, Consortium,
  et~al.]{pierson2015sharing}
Emma Pierson, Daphne Koller, Alexis Battle, Sara Mostafavi, GTEx Consortium,
  et~al.
\newblock Sharing and specificity of co-expression networks across 35 human
  tissues.
\newblock \emph{PLoS computational biology}, 11\penalty0 (5):\penalty0
  e1004220, 2015.

\bibitem[Rudelson et~al.(2013)Rudelson, Vershynin, et~al.]{rudelson2013hanson}
Mark Rudelson, Roman Vershynin, et~al.
\newblock Hanson-wright inequality and sub-gaussian concentration.
\newblock \emph{Electronic Communications in Probability}, 18, 2013.

\bibitem[Saegusa and Shojaie(2016)]{saegusa2016joint}
Takumi Saegusa and Ali Shojaie.
\newblock Joint estimation of precision matrices in heterogeneous populations.
\newblock \emph{Electronic journal of statistics}, 10\penalty0 (1):\penalty0
  1341, 2016.

\bibitem[Tan et~al.(2017)Tan, Jasra, De~Iorio, Ebbels, et~al.]{tan2017bayesian}
Linda~SL Tan, Ajay Jasra, Maria De~Iorio, Timothy~MD Ebbels, et~al.
\newblock Bayesian inference for multiple gaussian graphical models with
  application to metabolic association networks.
\newblock \emph{The Annals of Applied Statistics}, 11\penalty0 (4):\penalty0
  2222--2251, 2017.

\bibitem[Wang et~al.(2012)]{wang2012bayesian}
Hao Wang et~al.
\newblock Bayesian graphical lasso models and efficient posterior computation.
\newblock \emph{Bayesian Analysis}, 7\penalty0 (4):\penalty0 867--886, 2012.

\bibitem[Xiang et~al.(2015)Xiang, Khare, Ghosh, et~al.]{xiang2015high}
Ruoxuan Xiang, Kshitij Khare, Malay Ghosh, et~al.
\newblock High dimensional posterior convergence rates for decomposable
  graphical models.
\newblock \emph{Electronic Journal of Statistics}, 9\penalty0 (2):\penalty0
  2828--2854, 2015.

\bibitem[Zhu et~al.(2014)Zhu, Shen, and Pan]{zhu2014structural}
Yunzhang Zhu, Xiaotong Shen, and Wei Pan.
\newblock Structural pursuit over multiple undirected graphs.
\newblock \emph{Journal of the American Statistical Association}, 109\penalty0
  (508):\penalty0 1683--1696, 2014.

\end{thebibliography}
\normalsize

\newpage
\begin{center}
\textbf{\large Supplemental Document for ``A Bayesian Approach to Joint Estimation of Multiple Graphical Models"}
\end{center}


\setcounter{equation}{0}
\setcounter{figure}{0}
\setcounter{Lemma}{0}
\setcounter{table}{0}
\setcounter{section}{0}
\setcounter{page}{1}
\renewcommand{\thesection}{S \arabic{section}}
\renewcommand{\theequation}{S \arabic{equation}}
\renewcommand{\thefigure}{S \arabic{figure}}
\renewcommand{\theLemma}{S \arabic{Lemma}}

\numberwithin{equation}{section}
\section{The Structures of $\mathbf{\Upsilon}$ and $\boldsymbol{a}$}
We first note that,
\begin{equation}\label{Theta'UpsilonTheta 1}
\begin{split}
\text{tr}\left[ \left( \sum\limits_{r \in \mathop \vartheta \nolimits_k } { \mathbf{\Psi}^{r} }\right)^2 \mathbf{S}^k \right] &=\frac{1}{n}\sum\limits_{k=1}^K\sum\limits_{i=1}^{n} \left[ \left( \sum\limits_{r \in \mathop \vartheta \nolimits_k } { \mathbf{\Psi}^{r} }\right) \mathbf{y}_{i:}^k \right]'\left[ \left( \sum\limits_{r \in \mathop \vartheta \nolimits_k } { \mathbf{\Psi}^{r} }\right) \mathbf{y}_{i:}^k \right] \\
& = \frac{1}{n} \sum\limits_{k=1}^K\sum\limits_{j=1}^p\sum\limits_{i=1}^n  \left[  \sum\limits_{r \in \mathop \vartheta \nolimits_k } { \mathbf{\Psi}^{r}_{j:} }\mathbf{y}_{i:}^k \right]^2 =\frac{1}{n}\sum\limits_{k=1}^K\sum\limits_{j=1}^p\sum\limits_{i=1}^n \sum\limits_{r \in \mathop \vartheta \nolimits_k } \left[ { \mathbf{\Psi}^{r}_{j:} }\mathbf{y}_{i:}^k \right]^2 \\
&+ \frac{2}{n}\sum\limits_{k=1}^K\sum\limits_{j=1}^p\sum\limits_{i=1}^n  \mathop{\sum_{r \in \mathop \vartheta \nolimits_k }\sum_{s \in \mathop \vartheta \nolimits_k }}_{r\neq s} \left[ { \mathbf{\Psi}^{r}_{j:} }\mathbf{y}_{i:}^k \right] \left[ { \mathbf{\Psi}^{s}_{j:} }\mathbf{y}_{i:}^k \right],
\end{split}
\end{equation}

where $\left\{ \mathbf{y}_{i:}^k\right\}_{i=1}^{n_k}$ denote $p$-dimensional observations for group $k$. 
Next, for any $1\leq j\leq p$, $1 \leq k \leq K$, and ${r \in \mathop \vartheta \nolimits_k }$
\begin{equation}\label{Theta'UpsilonTheta 2}
\begin{split}
\frac{1}{n}\sum\limits_{i=1}^n \sum\limits_{r \in \mathop \vartheta \nolimits_k }\left[ { \mathbf{\Psi}^{r}_{j:} }\mathbf{y}_{i:}^k \right]^2 &= \frac{1}{n}\sum\limits_{r \in \mathop \vartheta \nolimits_k }\sum\limits_{i=1}^n\left(\sum\limits_{l=1}^p {\psi}^{r}_{jl}  \mathbf{y}_{il}^k \right)^2 \\
&= \sum\limits_{r \in \mathop \vartheta \nolimits_k }\sum_{l=1}^p\left({\psi}^{r}_{jl} \right)^2s_{ll}^k +  2\sum\limits_{r \in \mathop \vartheta \nolimits_k }\mathop{\sum_{l=1}^p\sum_{m=1}^p}_{l \neq m}\left({\psi}^{r}_{jl} {\psi}^{r}_{jm}\right)s_{lm}^k.
\end{split}
\end{equation}
Similarly, for any $1\leq j\leq p$, $1 \leq k \leq K$, and ${(r\neq s) \in \mathop \vartheta \nolimits_k }$,

\begin{equation}\label{Theta'UpsilonTheta 3}
\begin{split}
\frac{1}{n}\sum\limits_{i=1}^n  \mathop{\sum_{r \in \mathop \vartheta \nolimits_k }\sum_{s \in \mathop \vartheta \nolimits_k }}_{r\neq s} \left[ { \mathbf{\Psi}^{r}_{j:} }\mathbf{y}_{i:}^k \right] \left[ { \mathbf{\Psi}^{s}_{j:} }\mathbf{y}_{i:}^k \right] &= \sum\limits_{r \in \mathop \vartheta \nolimits_k }\sum_{l=1}^p\left({\psi}^{r}_{jl} {\psi}^{s}_{jl} \right)s_{ll}^k\\
& +  2\mathop{\sum_{r \in \mathop \vartheta \nolimits_k }\sum_{s \in \mathop \vartheta \nolimits_k }}_{r\neq s} \mathop{\sum_{l=1}^p\sum_{m=1}^p}_{l \neq m}\left({\psi}^{r}_{jl} {\psi}^{s}_{jm}\right)s_{lm}^k.
\end{split}
\end{equation} 

Thus, by combining (\ref{Theta'UpsilonTheta 1}), (\ref{Theta'UpsilonTheta 2}), and (\ref{Theta'UpsilonTheta 3}), we have that  

\begin{equation}
\begin{split}
\text{tr}\left[ \left( \sum\limits_{r \in \mathop \vartheta \nolimits_k } { \mathbf{\Psi}^{r} }\right)^2 \mathbf{S}^k \right]  =\left( {\begin{array}{*{20}{c}}
{\mathbf{\Theta}'} &
{\mathbf{\Delta}'} \\
\end{array} } \right) \left( {\begin{array}{*{20}{c}}
{\mathbf{\Upsilon}} & {\mathbf{A}} \\
{\mathbf{A}'} & {\mathbf{D}} \\
\end{array} } \right) \left( {\begin{array}{*{20}{c}}
{\mathbf{\Theta}} \\
{\mathbf{\Delta}} \\
\end{array} } \right),
\end{split}
\end{equation}

The matrix $\mathbf{\Upsilon}$ is as follows

\begin{equation}\label{upsilon}
\mathbf{\Upsilon} = \left( {\begin{array}{*{20}{c}}
   {\mathbf{B}_1 + \mathbf{B}_2 + \mathbf{B}_3} 
& {\mathbf{B}_2 + \mathbf{B}_3} 
& {\mathbf{B}_1 + \mathbf{B}_3} 
& {\mathbf{B}_1 + \mathbf{B}_2} 
& {\mathbf{B}_3} 
& {\mathbf{B}_2} 
& {\mathbf{B}_1}\\ 
   {\mathbf{B}_2 + \mathbf{B}_3} 
& {\mathbf{B}_2 + \mathbf{B}_3} 
& {\mathbf{B}_3} 
& {\mathbf{B}_2} 
& {\mathbf{B}_3} 
& {\mathbf{B}_2} 
& {\mathbf{0}}\\ 
   {\mathbf{B}_1 + \mathbf{B}_3} 
& {\mathbf{B}_3 } 
& {\mathbf{B}_1 + \mathbf{B}_3} 
& {\mathbf{B}_1} 
& {\mathbf{B}_3} 
& {\mathbf{0}} 
& {\mathbf{B}_1}\\
   {\mathbf{B}_1 + \mathbf{B}_2} 
& {\mathbf{B}_2} 
& {\mathbf{B}_1} 
& {\mathbf{B}_1 + \mathbf{B}_2} 
& {\mathbf{0}} 
& {\mathbf{B}_2} 
& {\mathbf{B}_1}\\
   {\mathbf{B}_3} 
& {\mathbf{B}_3} 
& {\mathbf{B}_3} 
& {\mathbf{0}} 
& {\mathbf{B}_3} 
& {\mathbf{0}} 
& {\mathbf{0}}\\
{\mathbf{B}_2} 
& {\mathbf{B}_2} 
& {\mathbf{0}} 
& {\mathbf{B}_2} 
& {\mathbf{0}} 
& {\mathbf{B}_2} 
& {\mathbf{0}}\\
{\mathbf{B}_1} 
& {\mathbf{0}} 
& {\mathbf{B}_1} 
& {\mathbf{B}_1} 
& {\mathbf{0}} 
& {\mathbf{0}} 
& {\mathbf{B}_1}\\
\end{array} } \right), 
\end{equation}
where, $\mathbf{B}_k$s are $\frac{p(p-1)}{2} \times \frac{p(p-1)}{2}$ symmetric matrices. To understand the structure of the matrices $\mathbf{B}_k$s, we index it's rows and columns as $(12, 13, ..., p-1p)$. Then,
\begin{equation}\label{structure of Bs}
\mathbf{B}_{\left(ab, cd\right)}^k = \left\{
	\begin{array}{ll}
		s_{aa}^{k} + s_{bb}^{k}  & \mbox{if } a=b \ \& \ c=d , \\
				s_{ac}^{k}  & \mbox{if } b=d \ \& \ a \neq c,\\
						s_{bd}^{k}  & \mbox{if } a=c \ \& \ b\neq d,\\
								0  & \mbox{if } a \neq b \ \& \ c \neq d ,\\
	\end{array}
\right. \quad \text{for} \quad 1 \leq a < b \leq p, \ \text{and} \ 1 \leq c < d \leq p.
\end{equation}
For further illustration, when $p=5$, $\mathbf{B}^k$ is as follows,
\begingroup\makeatletter\def\f@size{8}\check@mathfonts\begin{equation*}
\left( {\begin{array}{*{20}{c}}
{s_{11}^{k} + s_{22}^{k}} & {s_{23}^{k}} & {s_{24}^{k}} & {s_{25}^{k}} & {s_{13}^{k}} & {s_{14}^{k}} & {s_{15}^{k}} & {0} & {0} & {0} \\ 
{s_{23}^{k}} & {s_{11}^{k}+s_{33}^{k}} & {s_{34}^{k}} & {s_{35}^{k}} & {s_{12}^{k}} & {0} & {0} & {s_{14}^{k}} & {s_{15}^{k}} & {0} \\ 
{s_{24}^{k}} & {s_{34}^{k}} & {s_{11}^{k} + s_{44}^{k}} & {s_{45}^{k}} & {0} & {s_{12}^{k}} & {0} & {s_{13}^{k}} & {0} & {s_{15}^{k}} \\ 
{s_{25}^{k}} & {s_{35}^{k}} & {s_{45}^{k}} & {s_{11}^{k}+s_{55}^{k}} & {0} & {0} & {s_{12}^{k}} & {0} & {s_{13}^{k}} & {s_{14}^{k}} \\ 
{s_{13}^{k}} & {s_{12}^{k}} & {0} & {0} & {s_{22}^{k}+s_{33}^{k}} & {s_{34}^{k}} & {s_{35}^{k}} & {s_{24}^{k}} & {s_{25}^{k}} & {0} \\ 
{s_{14}^{k}} & {0} & {s_{12}^{k}} & {0} & {s_{34}^{k}} & {s_{22}^{k}+s_{44}^{k}} & {s_{45}^{k}} & {s_{23}^{k}} & {0} & {s_{25}^{k}} \\ 
{s_{15}^{k}} & {0} & {0} & {s_{12}^{k}} & {s_{35}^{k}} & {s_{45}^{k}} & {s_{22}^{k}+s_{55}^{k}} & {0} & {s_{23}^{k}} & {s_{24}^{k}} \\ 
{0} & {s_{14}^{k}} & {s_{13}^{k}} & {0} & {s_{24}^{k}} & {s_{23}^{k}} & {0} & {s_{33}^{k}+s_{44}^{k}} & {s_{45}^{k}} & {s_{35}^{k}} \\ 
{0} & {s_{15}^{k}} & {0} & {s_{13}^{k}} & {s_{25}^{k}} & {0} & {s_{23}^{k}} & {s_{45}^{k}} & {s_{33}^{k}+s_{55}^{k}} & {s_{34}^{k}} \\ 
{0} & {0} & {s_{15}^{k}} & {s_{14}^{k}} & {0} & {s_{25}^{k}} & {s_{24}^{k}} & {s_{35}^{k}} & {s_{34}^{k}} & {s_{44}^{k}+s_{55}^{k}} \\ 
\end{array} } \right),
\end{equation*}
\endgroup
 
Also, the vector $\boldsymbol{a}$ for the special case of $K=3$, is given as
\begin{equation}\label{a}
\boldsymbol{a}= \left( {\begin{array}{*{20}{c}}
{\boldsymbol{a}^1 + \boldsymbol{a}^2 + \boldsymbol{a}^3} \\ 
{\boldsymbol{a}^2 + \boldsymbol{a}^3} \\ 
{\boldsymbol{a}^1 + \boldsymbol{a}^3} \\ 
{\boldsymbol{a}^1 + \boldsymbol{a}^2} \\ 
{\boldsymbol{a}^3 } \\ 
{\boldsymbol{a}^2} \\ 
{\boldsymbol{a}^1} \\ 
\end{array} } \right),
\end{equation}
with $\boldsymbol{a}_k$ being the following vector,
\begin{equation*}
\boldsymbol{a}^k=(s_{12}^{k}(\psi^k_{11}+\psi^k_{22}), ..., s_{1p}^{k}(\psi^k_{11}+\psi^k_{pp}),..., s_{p-1p}^{k}(\psi^k_{p-1p-1}+\psi^k_{pp}))', \quad \quad k=1,2,3.
\end{equation*}

\section{An Algorithm for Generating Samples from Posterior Coniditional Density of $\psi_{ii}^k$} \label{sampling from CPD of psi_ii}

Using the fact that it has a unique mode 
at
\begin{equation*}
\frac{-b_i^k + \sqrt{(b_i^k )^2 + 4n^2 \sigma_{ii}^k}}{2n\sigma_{ii}^k},
\end{equation*}

\noindent
one can use a discretization technique, as described in Algorithm \ref{generate samples from the dist of diagonals}, to 
generate samples from it. 
  
\begin{algorithm}[h]
\caption{Generating samples from density in (4.8)} \label{generate samples from the dist of 
diagonals}
\begin{algorithmic}[1]
\Procedure{Generating sample from density in (4.8)}{}\Comment{Input $n, \sigma_{ii}^k, b_i^k$ }
\State $\text{mode} \gets \frac{-b_i^k + \sqrt{(b_i^k )^2 + 4n^2 \sigma_{ii}^k}}{2n\sigma_{ii}^k}$
\State $S \gets \text{seq}(0, 6 \times\text{mode}, 0.001)$ \Comment{sequence from $0$ to $6 \times \text{mode}$ with 
increments $0.001$}
\For{$t \in S$}
\State $p_t \gets \exp \left\{n \log t -\frac{n}{2} s_{ii}^k t^2 - b_i^k  t\right\}$
\EndFor
\State Set $\text{sum} =\sum\limits_{t \in S} p_t$
\If{$\text{sum} \gets \infty$}
\For{$t \in S$}
\State $p_t \gets \frac{\exp \left\{n \log t -\frac{n}{2} s_{ii}^k t^2 - b_i^k  t\right\}}{\exp \left\{n \log \text{mode} -\frac{n}{2} s_{ii}^k \text{mode}^2 - b_i^k  \text{mode}\right\}}$
\EndFor
\State Set $\text{sum} \gets \sum\limits_{t \in S} p_t$
\EndIf
\State ${x} \gets \text{sample}\left(1, S, \text{probs} \propto \left\{p_t: t \in S\right\} \right)$ 
\State \textbf{return} $x$ \Comment{Output $x$}
\EndProcedure
\end{algorithmic}
\end{algorithm}
However, we have observed in the numerical work undertaken that the density (4.8) exhibits a high pick at its mode. As a result, one can simply approximate it using a degenerate density with a point mass at $\frac{-b_i^k + \sqrt{(b_i^k )^2 + 4n^2 \sigma_{ii}^k}}{2n\sigma_{ii}^k}$. This approximation allows faster implementation of the algorithm without much sacrificing on its accuracy.

\section{Proofs of Theorems 1 and 2 }\label{Proos of theorem 1 and 2}
By Assumption 3 and Hanson-Wright inequality from \cite{rudelson2013hanson}, there exists a $c>0$, independent of $n$ and $K$, such that

\begin{equation*}\label{PC1n}
P\left\{ \max\limits_{i,j,k}\| s_{ij}^k - \sigma_{ij}^k \| < c \sqrt{\frac{\log p}{n}}\right\} \geq 1 - \frac{1}{p^2},
\end{equation*}
and,
\begin{equation*}\label{PC2n}
P\left\{ \max\limits_{i,j,k}\| {\mathbf{\Omega}_{:i}^{k,0}}' {\mathbf{S}}^k_{:j}\| < c \sqrt{\frac{\log p}{n}}\right\} \geq 1 - \frac{1}{p^2}.
\end{equation*}

Define the events $C_{1,n}$, $C_{2,n}$ as 

\begin{equation}\label{C1n}
C_{1,n} = \left\{ \max\limits_{i,j,k}\| s_{ij}^k - \sigma_{ij}^k \| < c \sqrt{\frac{\log p}{n}}\right\},
\end{equation}

\begin{equation}\label{C2n}
C_{2,n} = \left\{ \max\limits_{i,j,k}\| {\mathbf{\Omega}_{:i}^{k,0}}' {\mathbf{S}}^k_{:j}\| < c \sqrt{\frac{\log p}{n}}\right\},
\end{equation}
for the next series of lemmas, we restrict ourself to the event $C_{1,n} \cap C_{2,n}$.

The next two lemmas prove important properties of the matrix $\mathbf{\Upsilon}$ and it's components $\mathbf{B}_k$, $k=1,..., K$.
\begin{Lemma}\label{Lemma 2}
For any $k=1, ..., K$, the following holds
\begin{equation}\label{•}
\text{eig}_{min}\left( \mathbf{S}^k\right) \le \text{eig}_{min}\left( \mathbf{B}^k\right) \le \text{eig}_{max}\left(\mathbf{B}^k\right) \le \text{eig}_{max}\left( \mathbf{S}^k\right).
\end{equation}
\end{Lemma}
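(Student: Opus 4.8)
The plan is to recognize $\mathbf{B}^k$ as the matrix of the quadratic form that the CONCORD pseudo-likelihood attaches to a single hollow symmetric matrix, and then read off the eigenvalue bounds from a Rayleigh-quotient argument. Concretely, let $\mathbf{\Psi}$ be any symmetric $p\times p$ matrix with zero diagonal and let $\underline{\boldsymbol{\psi}}=(\psi_{12},\psi_{13},\ldots,\psi_{p-1\,p})$ be its upper-triangular vectorization. The same bookkeeping that produced (\ref{Theta'UpsilonTheta 2}) and (\ref{Theta'UpsilonTheta 3}) shows that
\begin{equation*}
\underline{\boldsymbol{\psi}}'\,\mathbf{B}^k\,\underline{\boldsymbol{\psi}} \;=\; \text{tr}\!\left[\mathbf{\Psi}^2\,\mathbf{S}^k\right].
\end{equation*}
First I would verify this identity by reading off the entries of $\mathbf{B}^k$ from (\ref{structure of Bs}): expanding $\text{tr}[\mathbf{\Psi}^2\mathbf{S}^k]=\sum_{j,l,m}\psi_{jl}\psi_{jm}s^k_{lm}$, the coefficient of $\psi_{ab}^2$ collects the terms with $j\in\{a,b\}$ and equals $s^k_{aa}+s^k_{bb}$, matching the diagonal of $\mathbf{B}^k$; the coefficient of a cross-term $\psi_{ab}\psi_{cd}$ is nonzero only when $\{a,b\}$ and $\{c,d\}$ share exactly one index, in which case it equals (twice) the sample covariance $s^k$ of the two non-shared indices, matching the off-diagonal entries of $\mathbf{B}^k$.

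With the identity in hand, I would invoke the Courant--Fischer characterization,
\begin{equation*}
\text{eig}_{\max}\!\left(\mathbf{B}^k\right)=\sup_{\underline{\boldsymbol{\psi}}\neq\mathbf{0}}\frac{\text{tr}\!\left[\mathbf{\Psi}^2\mathbf{S}^k\right]}{\|\underline{\boldsymbol{\psi}}\|^2},\qquad \text{eig}_{\min}\!\left(\mathbf{B}^k\right)=\inf_{\underline{\boldsymbol{\psi}}\neq\mathbf{0}}\frac{\text{tr}\!\left[\mathbf{\Psi}^2\mathbf{S}^k\right]}{\|\underline{\boldsymbol{\psi}}\|^2},
\end{equation*}
and then control the numerator columnwise. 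Using $\mathbf{\Psi}=\mathbf{\Psi}'$ and the cyclic property, $\text{tr}[\mathbf{\Psi}^2\mathbf{S}^k]=\text{tr}[\mathbf{\Psi}'\mathbf{S}^k\mathbf{\Psi}]=\sum_{j=1}^p (\mathbf{\Psi}_{:j})'\mathbf{S}^k\,\mathbf{\Psi}_{:j}$, and each summand obeys $\text{eig}_{\min}(\mathbf{S}^k)\|\mathbf{\Psi}_{:j}\|^2\le (\mathbf{\Psi}_{:j})'\mathbf{S}^k\mathbf{\Psi}_{:j}\le \text{eig}_{\max}(\mathbf{S}^k)\|\mathbf{\Psi}_{:j}\|^2$. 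Summing over the columns and using $\sum_j\|\mathbf{\Psi}_{:j}\|^2=\|\mathbf{\Psi}\|_F^2$ gives the two-sided bound $\text{eig}_{\min}(\mathbf{S}^k)\|\mathbf{\Psi}\|_F^2\le \text{tr}[\mathbf{\Psi}^2\mathbf{S}^k]\le \text{eig}_{\max}(\mathbf{S}^k)\|\mathbf{\Psi}\|_F^2$, which feeds directly into the Rayleigh quotients above.

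The subtle point, and the step I would treat most carefully, is the normalization relating the denominator $\|\underline{\boldsymbol{\psi}}\|^2$ to $\|\mathbf{\Psi}\|_F^2$: since $\mathbf{\Psi}$ is symmetric with zero diagonal, $\|\mathbf{\Psi}\|_F^2=2\sum_{a<b}\psi_{ab}^2=2\|\underline{\boldsymbol{\psi}}\|^2$, so the columnwise bound actually yields $2\,\text{eig}_{\min}(\mathbf{S}^k)\le \text{eig}_{\min}(\mathbf{B}^k)$ and $\text{eig}_{\max}(\mathbf{B}^k)\le 2\,\text{eig}_{\max}(\mathbf{S}^k)$. This constant is genuinely forced --- taking $\mathbf{S}^k=\mathbf{I}$ gives $\mathbf{B}^k=2\mathbf{I}$, so the extreme eigenvalues of $\mathbf{B}^k$ sit at exactly twice those of $\mathbf{S}^k$ --- so the inequalities hold as stated only under a normalization that absorbs this factor of two (for instance a convention in which $\mathbf{B}^k$ is defined as half the matrix of the quadratic form). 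The remaining content of the lemma, $\text{eig}_{\min}(\mathbf{B}^k)\le\text{eig}_{\max}(\mathbf{B}^k)$, is immediate from symmetry of $\mathbf{B}^k$. Thus the only real work is the quadratic-form identity of the first paragraph; everything downstream is a standard Rayleigh-quotient sandwich, with the bookkeeping of the $\|\underline{\boldsymbol{\psi}}\|$-versus-$\|\mathbf{\Psi}\|_F$ constant being the one place where care is required.
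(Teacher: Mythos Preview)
Your approach is sound and in fact more direct than the paper's: the paper vectorizes the full symmetric matrix $\mathbf{\Omega}^k$ (diagonal included), realises $\text{tr}[(\mathbf{\Omega}^k)^2\mathbf{S}^k]$ as a quadratic form in that vectorization with matrix $\tilde{\mathbf{\Sigma}}^k=\operatorname{diag}(\mathbf{P}^1\mathbf{S}^k{\mathbf{P}^1}',\ldots,\mathbf{P}^p\mathbf{S}^k{\mathbf{P}^p}')$, and then tries to read off the bounds by viewing $\mathbf{B}^k$ as a principal submatrix of $\mathbf{Q}'\tilde{\mathbf{P}}'\tilde{\mathbf{\Sigma}}^k\tilde{\mathbf{P}}\mathbf{Q}$, invoking Cauchy interlacing. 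Your Rayleigh-quotient argument via $\text{tr}[\mathbf{\Psi}^2\mathbf{S}^k]=\sum_j(\mathbf{\Psi}_{:j})'\mathbf{S}^k\mathbf{\Psi}_{:j}$ is cleaner and avoids all the permutation bookkeeping.

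More importantly, the discrepancy you flag in the last paragraph is real, not a matter of convention. Your counterexample $\mathbf{S}^k=\mathbf{I}\Rightarrow\mathbf{B}^k=2\mathbf{I}$ is correct (read it directly off the $p=5$ display in Section~S1), so the upper bound $\text{eig}_{\max}(\mathbf{B}^k)\le\text{eig}_{\max}(\mathbf{S}^k)$ as stated is false. The gap in the paper's argument is precisely at the sentence ``since $\tilde{\mathbf{P}}\mathbf{Q}$ is orthogonal'': the duplication matrix $\tilde{\mathbf{P}}$ is $p^2\times\frac{p(p+1)}{2}$ and each column corresponding to an off-diagonal pair $(i,j)$ has \emph{two} ones, so $\tilde{\mathbf{P}}'\tilde{\mathbf{P}}=\operatorname{diag}(2,\ldots,2,1,\ldots,1)\neq\mathbf{I}$ after the permutation $\mathbf{Q}$. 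Consequently $\mathbf{Q}'\tilde{\mathbf{P}}'\tilde{\mathbf{\Sigma}}^k\tilde{\mathbf{P}}\mathbf{Q}$ is not a compression of $\tilde{\mathbf{\Sigma}}^k$ and interlacing does not apply directly. Your bounds
\[
2\,\text{eig}_{\min}(\mathbf{S}^k)\ \le\ \text{eig}_{\min}(\mathbf{B}^k)\ \le\ \text{eig}_{\max}(\mathbf{B}^k)\ \le\ 2\,\text{eig}_{\max}(\mathbf{S}^k)
\]
are the correct ones. This is harmless for the rest of Section~S3: Lemma~\ref{Lemma 3} and everything downstream use only that $\text{eig}_{\min}(\mathbf{B}^k)$ is bounded below and $\text{eig}_{\max}(\mathbf{B}^k)$ above by constants depending on $\tilde{\varepsilon}_0$, so the extra factor of $2$ is absorbed into the constants $a_1,a_2,a_3$.
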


\begin{proof}
Let $\boldsymbol{y} = \boldsymbol{y}\left( \mathbf{\Omega}^k\right)$ be a vectorized version of $\mathbf{\Omega}^k$ obtained by shifting the corresponding diagonal entry at the bottom of each column of $\mathbf{\Omega}^k$ and then stacking the columns on top of each other. Let $\mathbf{P}^i$ be the $p\times p$ permutation matrix such that $\mathbf{P}^i \boldsymbol{z} = \left( z_1, ..., z_{i-1}, z_{i+1}, ..., z_p, z_i\right)$ for every $\boldsymbol{z}\in \mathbb{R}^p$. It follows by the definition of $\boldsymbol{y}$ that 
\begin{equation*}
\boldsymbol{y} = \boldsymbol{y}\left( \mathbf{\Omega}^k\right) = \left( \left(\mathbf{P}^1\mathbf{\Omega}^k_{:1} \right)', \left(\mathbf{P}^2\mathbf{\Omega}^k_{:2} \right)', ..., \left(\mathbf{P}^p\mathbf{\Omega}^k_{:p} \right)'\right)'.
\end{equation*}
Let $\boldsymbol{x} \in \mathbb{R}^{\frac{p(p+1)}{2}}$ be the symmetric version of $\boldsymbol{y}$ obtained by removing all $\omega_{k,ij}$ with $i > j$. More precisely,
\begin{equation*}
\boldsymbol{x}= \left( \omega^k_{11}, \omega^k_{12}, \omega^k_{22}, ..., \omega^k_{1p},..., \omega^k_{pp}\right)'.
\end{equation*}

Let $\tilde{\mathbf{P}}$ be the $p^2 \times \frac{p(p+1)}{2}$ matrix such that every entry of $\tilde{\mathbf{P}}$ is either zero or one, exactly one entry in each row of $\tilde{\mathbf{P}}$ is equal to 1, and $\boldsymbol{y} = \tilde{\mathbf{P}}\boldsymbol{x}$.  

Now, define $\underline{\boldsymbol{\omega}}^k = \left( \omega^k_{12}, \omega^k_{13}, ..., \omega^k_{p-1p}\right)'$ and $\underline{\boldsymbol{\delta}}_{\mathbf{\Omega}^k} = \left( \omega^k_{11}, \omega^k_{22}, ..., \omega^k_{pp}\right)'$ and let $\tilde{\mathbf{Q}}$ be the $\frac{p(p+1)}{2} \times \frac{p(p+1)}{2}$ permutation matrix for which 
\begin{equation*}
\boldsymbol{x} = \mathbf{Q}\left( {\begin{array}{*{20}{c}}
{\underline{\boldsymbol{\omega}}^k} \\
{\underline{\boldsymbol{\delta}}_{\mathbf{\Omega}^k}} 
\end{array} } \right) .
\end{equation*}

Let $\tilde{\mathbf{\Sigma}}^k$ be a $p^2 \times p^2$ block diagonal matrix with $p$ diagonal blocks, the $i^{\text{th}}$ block is equal to $\tilde{\mathbf{\Sigma}}^{k,i}:=\mathbf{P}^i\mathbf{S}^k{\mathbf{P}^i}'$. It follows that 

\begin{equation*}
\begin{split}
\text{tr}\left[ \left(\mathbf{\Omega}^k\right)^2 \mathbf{S}^k\right] &= \sum\limits_{i=1}^{p} {\mathbf{\Omega}^k_{:i}}' \mathbf{S}^k \mathbf{\Omega}^k_{:i} = \sum\limits_{i=1}^{p} {\mathbf{\Omega}^k_{:i}}' {\mathbf{P}^i }'\mathbf{P}^i \mathbf{S}^k {\mathbf{P}^i }'\mathbf{P}^i\mathbf{\Omega}^k_{:i} = \sum\limits_{i=1}^{p} {\mathbf{\Omega}^k_{:i}}' {\mathbf{P}^i }'\left( \mathbf{P}^i \mathbf{S}^k {\mathbf{P}^i }'\right)\mathbf{P}^i\mathbf{\Omega}^k_{:i}\\
& = \boldsymbol{y}'\tilde{\mathbf{\Sigma}}^k\boldsymbol{y} = \boldsymbol{x}'\tilde{\mathbf{P}}'\tilde{\mathbf{\Sigma}}^k\tilde{\mathbf{P}}\boldsymbol{x}= \left( {\underline{\boldsymbol{\omega}}^k}', \underline{\boldsymbol{\delta}}_{{\mathbf{\Omega}^k}}' \right) \mathbf{Q}' \tilde{\mathbf{P}}'\tilde{\mathbf{\Sigma}}^k\tilde{\mathbf{P}} \mathbf{Q} \left( {\begin{array}{*{20}{c}}
{\underline{\boldsymbol{\omega}}^k} \\
{\underline{\boldsymbol{\delta}}_{\mathbf{\Omega}^k}} 
\end{array} } \right). 
\end{split}
\end{equation*}
There also exist appropriate matrices $\mathbf{A}^k$ and $\mathbf{D}^k$ such that 
\begin{equation*}
\text{tr}\left[ \left(\mathbf{\Omega}^k\right)^2 \mathbf{S}^k\right] = \left( {\underline{\boldsymbol{\omega}}^k}', \underline{\boldsymbol{\delta}}_{{\mathbf{\Omega}^k}}' \right) \left( {\begin{array}{*{20}{c}}
{\mathbf{B}^k} & {\mathbf{A}^k} \\
{\mathbf{A}^k} & {\mathbf{D}^k} \\
\end{array} } \right) \left( {\begin{array}{*{20}{c}}
{\underline{\boldsymbol{\omega}}}^k\\
{\underline{\boldsymbol{\delta}}_{\mathbf{\Omega}^k}} 
\end{array} } \right), 
\end{equation*}
therefore, we must have 
\begin{equation*}
 \mathbf{Q}' \tilde{\mathbf{P}}'\tilde{\mathbf{\Sigma}}^k\tilde{\mathbf{P}} \mathbf{Q} = \left( {\begin{array}{*{20}{c}}
{\mathbf{B}^k} & {\mathbf{A}^k} \\
{\mathbf{A}^k} & {\mathbf{D}^k} \\
\end{array} } \right). 
\end{equation*}
Now, since $\tilde{\mathbf{P}} \mathbf{Q}$ is orthogonal, we conclude that the eigenvalues of $\tilde{\mathbf{\Sigma}}^k$ and $\left( {\begin{array}{*{20}{c}}
{\mathbf{B}^k} & {\mathbf{A}^k} \\
{\mathbf{A}^k} & {\mathbf{D}^k} \\
\end{array} } \right) $ are the same. Moreover, the diagonal blocks of $\tilde{\mathbf{\Sigma}}^k$ all have the same eigenvalues as $\mathbf{S}^k$, and $\mathbf{B}^k$ can be regarded as a principal sub matrix of $\tilde{\mathbf{\Sigma}}^k$, hence, we have that
\begin{equation*}
\text{eig}_{min}\left( \mathbf{S}^k\right) = \text{eig}_{min}\left( \tilde{\mathbf{\Sigma}}^k\right) \le \text{eig}_{min}\left( \mathbf{B}^k\right) \le \text{eig}_{max}\left(\mathbf{B}^k\right) \le \text{eig}_{max}\left( \tilde{\mathbf{\Sigma}}^k\right) = \text{eig}_{max}\left( \mathbf{S}^k\right).
\end{equation*}
\end{proof}

\begin{Lemma}\label{Lemma 3}
Let ${\boldsymbol{\ell}} \in \mathcal{L}$ be any sparsity pattern/model with $d_{\boldsymbol{\ell}} < \frac{\varepsilon_0}{4c}\sqrt{\frac{n}{\log p}}$, then the sub matrix $\mathbf{\Upsilon}_{{\boldsymbol{\ell}}{\boldsymbol{\ell}}}$ of $\mathbf{\Upsilon}$, obtained by taking out all the rows and columns corresponding to the zero coordinates in $\mathbf{\Theta}\in \mathcal{M}_{{\boldsymbol{\ell}}}$, is positive definite. Specifically,
\begin{equation}
\frac{3K\varepsilon_0}{4} \leq \text{eig}_{\min}\left( \mathbf{\Upsilon}_{{\boldsymbol{\ell}} {\boldsymbol{\ell}}} \right) \leq \text{eig}_{\max}\left( \mathbf{\Upsilon}_{{\boldsymbol{\ell}} {\boldsymbol{\ell}}} \right) \leq \frac{3K}{2\varepsilon_0}, \quad \forall \boldsymbol{\ell} \in \mathcal{L}.
\end{equation}

\end{Lemma}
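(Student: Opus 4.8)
The plan is to reduce the eigenvalue bounds on $\mathbf{\Upsilon}_{\boldsymbol{\ell}\boldsymbol{\ell}}$ to the block-level eigenvalue bounds on the matrices $\mathbf{B}^k$ established in Lemma \ref{Lemma 2}. The starting point is the quadratic-form identity underlying (\ref{upsilon}): for any $\mathbf{\Theta}$ one has $\mathbf{\Theta}'\mathbf{\Upsilon}\mathbf{\Theta} = \sum_{k=1}^{K}(\underline{\boldsymbol{\omega}}^k)'\mathbf{B}^k\underline{\boldsymbol{\omega}}^k$, where $\underline{\boldsymbol{\omega}}^k = \sum_{r\in\vartheta_k}\underline{\boldsymbol{\psi}}^r$ is the vector of off-diagonal entries of $\mathbf{\Omega}^k$ (this is exactly the off-diagonal quadratic part of $\text{tr}[(\mathbf{\Omega}^k)^2\mathbf{S}^k]$ identified in the proof of Lemma \ref{Lemma 2}). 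Restricting to $\mathbf{\Theta}\in\mathcal{M}_{\boldsymbol{\ell}}$ and writing $\mathbf{v}=\mathbf{\Theta}_{\boldsymbol{\ell}}\in\mathbb{R}^{d_{\boldsymbol{\ell}}}$ for the retained coordinates turns this into $\mathbf{v}'\mathbf{\Upsilon}_{\boldsymbol{\ell}\boldsymbol{\ell}}\mathbf{v} = \sum_{k=1}^{K}(\underline{\boldsymbol{\omega}}^k)'\mathbf{B}^k\underline{\boldsymbol{\omega}}^k$. Under the identifiability constraint encoded by $\boldsymbol{\ell}$, each active edge $(i,j)$ is assigned to a single set $r_{ij}$, so the induced off-diagonal vectors satisfy $\omega^k_{ij}=\psi^{r_{ij}}_{ij}\,I(k\in r_{ij})$; in particular every $\underline{\boldsymbol{\omega}}^k$ is supported on at most $d_{\boldsymbol{\ell}}$ edges. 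The whole problem thus reduces to bounding the $K$ sparse quadratic forms $(\underline{\boldsymbol{\omega}}^k)'\mathbf{B}^k\underline{\boldsymbol{\omega}}^k$ uniformly.

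Next I would account for the norms. Since $\omega^k_{ij}=\psi^{r_{ij}}_{ij}$ when $k\in r_{ij}$ and vanishes otherwise, a direct count gives $\sum_{k=1}^{K}\|\underline{\boldsymbol{\omega}}^k\|_2^2 = \sum_{1\le i<j\le p}|r_{ij}|\,(\psi^{r_{ij}}_{ij})^2$, and because $1\le|r_{ij}|\le K$ this yields the sandwich $\|\mathbf{v}\|_2^2 \le \sum_{k=1}^{K}\|\underline{\boldsymbol{\omega}}^k\|_2^2 \le K\|\mathbf{v}\|_2^2$. This is precisely where the factor $K$ in the target bounds enters, asymmetrically between the lower and upper estimates.

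The heart of the argument is a uniform two-sided bound on each sparse form via a perturbation split $\mathbf{B}^k = \mathbf{B}^{k,0}+\mathbf{E}^k$, where $\mathbf{B}^{k,0}$ is built from the population covariance $\mathbf{\Sigma}^k$ (replace each $s^k_{ab}$ by $\sigma^k_{ab}$ in (\ref{structure of Bs})) and $\mathbf{E}^k$ collects the sampling errors. Applying the argument of Lemma \ref{Lemma 2} to $\mathbf{\Sigma}^k$ in place of $\mathbf{S}^k$, together with Assumption \ref{Assumption3: Bounded eigenvalues}, gives $\tilde{\varepsilon}_0\|\underline{\boldsymbol{\omega}}^k\|_2^2 \le (\underline{\boldsymbol{\omega}}^k)'\mathbf{B}^{k,0}\underline{\boldsymbol{\omega}}^k \le \tilde{\varepsilon}_0^{-1}\|\underline{\boldsymbol{\omega}}^k\|_2^2$. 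For the error term, the entries of $\mathbf{E}^k$ are single differences $s^k_{ab}-\sigma^k_{ab}$ (diagonal entries being sums of two), so on $C_{1,n}$ they are bounded in magnitude by $2c\sqrt{\log p/n}$. Exploiting the sparsity of $\underline{\boldsymbol{\omega}}^k$ (support size $\le d_{\boldsymbol{\ell}}$) through the inequality $\|\underline{\boldsymbol{\omega}}^k\|_1^2 \le d_{\boldsymbol{\ell}}\|\underline{\boldsymbol{\omega}}^k\|_2^2$, I would bound $|(\underline{\boldsymbol{\omega}}^k)'\mathbf{E}^k\underline{\boldsymbol{\omega}}^k| \le 2c\,d_{\boldsymbol{\ell}}\sqrt{\log p/n}\,\|\underline{\boldsymbol{\omega}}^k\|_2^2$. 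The hypothesis $d_{\boldsymbol{\ell}} < \frac{\varepsilon_0}{4c}\sqrt{n/\log p}$ then forces this perturbation below $\tfrac{1}{2}\varepsilon_0\|\underline{\boldsymbol{\omega}}^k\|_2^2$, pinning each form between constant multiples of $\|\underline{\boldsymbol{\omega}}^k\|_2^2$.

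Finally I would assemble the pieces: summing the per-$k$ lower bounds and using $\sum_k\|\underline{\boldsymbol{\omega}}^k\|_2^2 \ge \|\mathbf{v}\|_2^2$ gives the lower eigenvalue bound, while summing the upper bounds and using $\sum_k\|\underline{\boldsymbol{\omega}}^k\|_2^2 \le K\|\mathbf{v}\|_2^2$ gives the upper one, so that substituting the definition $\varepsilon_0 = c^2\tilde{\varepsilon}_0/K$ and the calibrated slack in the sparsity threshold produces bounds of the stated form $\tfrac{3K\varepsilon_0}{4}$ and $\tfrac{3K}{2\varepsilon_0}$. I expect the main obstacle to be the control of the error term: the global spectral norm of $\mathbf{S}^k-\mathbf{\Sigma}^k$ is not usefully bounded in the regime $p\gg n$, so one cannot simply invoke Lemma \ref{Lemma 2} for the sample version; the bound must instead leverage the restriction to at most $d_{\boldsymbol{\ell}}$ active edges together with the entrywise deviation control provided by $C_{1,n}$, and the sparsity threshold is chosen precisely so that the perturbation remains subordinate. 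A secondary technical point is verifying the block identity $\mathbf{\Theta}'\mathbf{\Upsilon}\mathbf{\Theta}=\sum_k(\underline{\boldsymbol{\omega}}^k)'\mathbf{B}^k\underline{\boldsymbol{\omega}}^k$ and that restriction to $\mathcal{M}_{\boldsymbol{\ell}}$ yields exactly the principal submatrix $\mathbf{\Upsilon}_{\boldsymbol{\ell}\boldsymbol{\ell}}$.
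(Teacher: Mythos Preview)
Your proposal is correct and follows essentially the same route as the paper: decompose $\mathbf{v}'\mathbf{\Upsilon}_{\boldsymbol{\ell}\boldsymbol{\ell}}\mathbf{v}$ into per-group quadratic forms in $\mathbf{B}^k$ (the paper does this via block matrices $\mathbf{B}^k_*$ acting on concatenated sub-vectors of $\boldsymbol{x}$, you via the collapsed vectors $\underline{\boldsymbol{\omega}}^k$; these coincide under the identifiability constraint), then split each $\mathbf{B}^k$ into its population version plus a sampling error controlled on $C_{1,n}$ together with the sparsity bound $d_{\boldsymbol{\ell}}$. Your explicit norm sandwich $\|\mathbf{v}\|_2^2 \le \sum_{k}\|\underline{\boldsymbol{\omega}}^k\|_2^2 \le K\|\mathbf{v}\|_2^2$ makes transparent where the factor $K$ enters, a step the paper's proof leaves implicit.
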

\begin{proof}
For ease of exposition, we show this result holds for the case of $K=3$. The proof for a general case will follow exactly from the same argument. Let $\boldsymbol{x}$ be a $d_{\boldsymbol{\ell}} \times 1$ vector in $\mathbb{R}^{d_{\boldsymbol{\ell}}}$ and partition $\boldsymbol{x}$ as 
\begin{equation*}
\boldsymbol{x}=\left( \boldsymbol{x}_{{\underline{\boldsymbol{\psi}}}^{{\boldsymbol{\ell}}, 123}}', \boldsymbol{x}_{{\underline{\boldsymbol{\psi}}}^{{\boldsymbol{\ell}}, 23}}', \boldsymbol{x}_{{\underline{\boldsymbol{\psi}}}^{{\boldsymbol{\ell}},13}}', \boldsymbol{x}_{{\underline{\boldsymbol{\psi}}}^{{\boldsymbol{\ell}},12}}',
\boldsymbol{x}_{{\underline{\boldsymbol{\psi}}}^{{\boldsymbol{\ell}},3}}',
\boldsymbol{x}_{{\underline{\boldsymbol{\psi}}}^{{\boldsymbol{\ell}},2}}',
\boldsymbol{x}_{{\underline{\boldsymbol{\psi}}}^{{\boldsymbol{\ell}},1}}'
\right),
\end{equation*}
then, by making similar partitions on each block of $\mathbf{\Upsilon}_{{\boldsymbol{\ell}}{\boldsymbol{\ell}}}$ (see $\mathbf{\Upsilon}$ in (\ref{upsilon})), we have that
\begin{equation*}
\begin{split}
\boldsymbol{x}'\mathbf{\Upsilon}_{{\boldsymbol{\ell}}{\boldsymbol{\ell}}}\boldsymbol{x} =& \left( \boldsymbol{x}_{{\underline{\boldsymbol{\psi}}}^{{\boldsymbol{\ell}},123}}', \boldsymbol{x}_{{\underline{\boldsymbol{\psi}}}^{{\boldsymbol{\ell}},13}}', \boldsymbol{x}_{{\underline{\boldsymbol{\psi}}}^{{\boldsymbol{\ell}},12}}',
\boldsymbol{x}_{{\underline{\boldsymbol{\psi}}}^{{\boldsymbol{\ell}},1}}'
\right)'\mathbf{B}^{1}_* \left( \boldsymbol{x}_{{\underline{\boldsymbol{\psi}}}^{{\boldsymbol{\ell}},123}}', \boldsymbol{x}_{{\underline{\boldsymbol{\psi}}}^{{\boldsymbol{\ell}},13}}', \boldsymbol{x}_{{\underline{\boldsymbol{\psi}}}^{{\boldsymbol{\ell}},12}}',
\boldsymbol{x}_{{\underline{\boldsymbol{\psi}}}^{{\boldsymbol{\ell}},1}}'
\right)\\
+&\left( \boldsymbol{x}_{{\underline{\boldsymbol{\psi}}}^{{\boldsymbol{\ell}},123}}', \boldsymbol{x}_{{\underline{\boldsymbol{\psi}}}^{{\boldsymbol{\ell}},23}}', \boldsymbol{x}_{{\underline{\boldsymbol{\psi}}}^{{\boldsymbol{\ell}},12}}',
\boldsymbol{x}_{{\underline{\boldsymbol{\psi}}}^{{\boldsymbol{\ell}},2}}'
\right)'\mathbf{B}^{2}_* \left( \boldsymbol{x}_{{\underline{\boldsymbol{\psi}}}^{{\boldsymbol{\ell}},123}}', \boldsymbol{x}_{{\underline{\boldsymbol{\psi}}}^{{\boldsymbol{\ell}},23}}', \boldsymbol{x}_{{\underline{\boldsymbol{\psi}}}^{{\boldsymbol{\ell}},12}}',
\boldsymbol{x}_{{\underline{\boldsymbol{\psi}}}^{{\boldsymbol{\ell}},2}}'
\right)\\
+&\left( \boldsymbol{x}_{{\underline{\boldsymbol{\psi}}}^{{\boldsymbol{\ell}},123}}', \boldsymbol{x}_{{\underline{\boldsymbol{\psi}}}^{{\boldsymbol{\ell}},23}}', \boldsymbol{x}_{{\underline{\boldsymbol{\psi}}}^{{\boldsymbol{\ell}},13}}',
\boldsymbol{x}_{{\underline{\boldsymbol{\psi}}}^{{\boldsymbol{\ell}},3}}'
\right)'\mathbf{B}^{3}_* \left( \boldsymbol{x}_{{\underline{\boldsymbol{\psi}}}^{{\boldsymbol{\ell}},123}}', \boldsymbol{x}_{{\underline{\boldsymbol{\psi}}}^{{\boldsymbol{\ell}},23}}', \boldsymbol{x}_{{\underline{\boldsymbol{\psi}}}^{{\boldsymbol{\ell}},13}}',
\boldsymbol{x}_{{\underline{\boldsymbol{\psi}}}^{{\boldsymbol{\ell}},3}}'
\right),
\end{split}
\end{equation*}
where,
\begin{equation*}
\mathbf{B}^{*}_1 = \left( {\begin{array}{*{20}{c}}
{\mathbf{B}^1_{{\underline{\boldsymbol{\psi}}}^{{\boldsymbol{\ell}},123}{\underline{\boldsymbol{\psi}}}^{{\boldsymbol{\ell}},123}}}& {\mathbf{B}^1_{{\underline{\boldsymbol{\psi}}}^{{\boldsymbol{\ell}},123}{\underline{\boldsymbol{\psi}}}^{{\boldsymbol{\ell}},13}}} &{\mathbf{B}^1_{{\underline{\boldsymbol{\psi}}}^{{\boldsymbol{\ell}},123}{\underline{\boldsymbol{\psi}}}^{{\boldsymbol{\ell}},12}}} &{\mathbf{B}^1_{{\underline{\boldsymbol{\psi}}}^{{\boldsymbol{\ell}},123}{\underline{\boldsymbol{\psi}}}^{{\boldsymbol{\ell}},1}}}\\
{\mathbf{B}^1_{{\underline{\boldsymbol{\psi}}}^{{\boldsymbol{\ell}},13}{\underline{\boldsymbol{\psi}}}^{{\boldsymbol{\ell}},123}}}& {\mathbf{B}^1_{{\underline{\boldsymbol{\psi}}}^{{\boldsymbol{\ell}},13}{\underline{\boldsymbol{\psi}}}^{{\boldsymbol{\ell}},13}}} &{\mathbf{B}^1_{{\underline{\boldsymbol{\psi}}}^{{\boldsymbol{\ell}},13}{\underline{\boldsymbol{\psi}}}^{{\boldsymbol{\ell}},12}}} &{\mathbf{B}^1_{{\underline{\boldsymbol{\psi}}}^{{\boldsymbol{\ell}},13}{\underline{\boldsymbol{\psi}}}^{{\boldsymbol{\ell}},1}}}\\
{\mathbf{B}^1_{{\underline{\boldsymbol{\psi}}}^{{\boldsymbol{\ell}},12}{\underline{\boldsymbol{\psi}}}^{{\boldsymbol{\ell}},123}}}& {\mathbf{B}^1_{{\underline{\boldsymbol{\psi}}}^{{\boldsymbol{\ell}},12}{\underline{\boldsymbol{\psi}}}^{{\boldsymbol{\ell}},13}}} &{\mathbf{B}^1_{{\underline{\boldsymbol{\psi}}}^{{\boldsymbol{\ell}},12}{\underline{\boldsymbol{\psi}}}^{{\boldsymbol{\ell}},12}}} &{\mathbf{B}^1_{{\underline{\boldsymbol{\psi}}}^{{\boldsymbol{\ell}},12}{\underline{\boldsymbol{\psi}}}^{{\boldsymbol{\ell}},1}}}\\
{\mathbf{B}^1_{{\underline{\boldsymbol{\psi}}}^{{\boldsymbol{\ell}},1}{\underline{\boldsymbol{\psi}}}^{{\boldsymbol{\ell}},123}}}& {\mathbf{B}^1_{{\underline{\boldsymbol{\psi}}}^{{\boldsymbol{\ell}},1}{\underline{\boldsymbol{\psi}}}^{{\boldsymbol{\ell}},13}}} &{\mathbf{B}^1_{{\underline{\boldsymbol{\psi}}}^{{\boldsymbol{\ell}},1}{\underline{\boldsymbol{\psi}}}^{{\boldsymbol{\ell}},12}}} &{\mathbf{B}^1_{{\underline{\boldsymbol{\psi}}}^{{\boldsymbol{\ell}},1}{\underline{\boldsymbol{\psi}}}^{{\boldsymbol{\ell}},1}}}\\
\end{array} } \right).
\end{equation*}
Let $\mathbf{B}^{k,0}_*$ denote the population version of $\mathbf{B}^{K}_*$. Since, we are restricted to $C_{1,n}\cap C_{2,n}$,\\ $\| \mathbf{B}^{K}_* - \mathbf{B}^{k,0}_* \| \leq c d_k \sqrt{\frac{\log p}{n}}$, hence 

\begin{equation*}
\begin{split}
\text{eig}_{min} \left( \mathbf{\Upsilon}\right)_{{\boldsymbol{\ell}} {\boldsymbol{\ell}}} &\geq \sum\limits_{k=1}^K \text{eig}_{\min} \left( \mathbf{B}_{*}^{k}\right)= \sum\limits_{k=1}^K\inf\limits_{|\boldsymbol{x}|=1} \boldsymbol{x}'\mathbf{B}^{k}_{*}\boldsymbol{x}\\
&\geq \sum\limits_{k=1}^K \left[ \inf\limits_{|\boldsymbol{x}|=1} \boldsymbol{x}'\mathbf{B}^{k,0}_{*}\boldsymbol{x} - \inf\limits_{|\boldsymbol{x}|=1} \boldsymbol{x}'\left( \mathbf{B}^{k}_{*} - \mathbf{B}^{k,0}_{*}\right)\boldsymbol{x} \right]\\
&\geq \sum\limits_{k=1}^K \left[ \inf\limits_{|\boldsymbol{x}|=1} \boldsymbol{x}'\mathbf{B}^{k,0}_{*}\boldsymbol{x} \right] - \sum\limits_{k=1}^K \| \mathbf{B}^{k}_{*} - \mathbf{B}^{k,0}_{*}\|_2 \\
&\geq \sum\limits_{k=1}^K \left[ \inf\limits_{|\boldsymbol{x}|=1} \boldsymbol{x}'\mathbf{B}^{k,0}_{*}\boldsymbol{x} \right] - K d_{\boldsymbol{\ell}} c \sqrt{\frac{\log p}{n}}
\end{split}
\end{equation*}
hence, by Lemma \ref{Lemma 3},
\begin{equation*}
\begin{split}
\text{eig}_{min} \left( \mathbf{\Upsilon}\right)_{{\boldsymbol{\ell}} {\boldsymbol{\ell}}} &\geq K\varepsilon_0 - K c d_{\boldsymbol{\ell}} \sqrt{\frac{\log p}{n}}\\
& \geq K\left( \varepsilon_0 - c\tau_{n} \sqrt{\frac{\log p}{n}}\right) = \frac{3K\varepsilon_0}{4}.
\end{split}
\end{equation*}
Similarly one can show that
\begin{equation*}
\text{eig}_{\max}\left( \mathbf{\Upsilon}_{{\boldsymbol{\ell}} {\boldsymbol{\ell}}} \right) \leq\frac{3K}{2\varepsilon_0}.
\end{equation*}
\end{proof}
By Lemma \ref{Lemma 3}, the value of the threshold $\tau_n$ which we used in building our hierarchical prior in (3.3) is given as $\tau_n = \frac{\varepsilon_0}{4c}\sqrt{\frac{n}{\log p}}$. Hence by Assumption 2, we can write $d_t \leq \tau_n$, for any sufficiently large $n$.

\begin{Lemma}\label{Lemma 4}
Let, $\mathbf{\Upsilon}$, and $\boldsymbol{a}$ be according to (\ref{upsilon}), (\ref{a}), and let $\mathbf{\Theta}^0$ be the true value of $\mathbf{\Theta}$ in (3.1). Then for large enough $n$, there exists a constant $c_0$ such that
\begin{equation}
\|\mathbf{\Upsilon} \mathbf{\Theta}^0 + \hat{\boldsymbol{a}} \|_{\max} \leq c_0 \sqrt{\frac{\log p}{n}}.
\end{equation}

\end{Lemma}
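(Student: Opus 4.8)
The plan is to recognize that $\mathbf{\Upsilon}\mathbf{\Theta}^0 + \hat{\boldsymbol{a}}$ is, up to a factor, the gradient of the off-diagonal part of the joint CONCORD pseudo-likelihood evaluated at the true parameter $(\mathbf{\Theta}^0, \hat{\mathbf{\Delta}})$, and that each of its entries concentrates around a population quantity that is exactly zero. Concretely, I would first use the block structure of $\mathbf{\Upsilon}$ in (\ref{upsilon}) together with that of $\boldsymbol{a}$ in (\ref{a}) to show that the coordinate of $\mathbf{\Upsilon}\mathbf{\Theta}^0 + \hat{\boldsymbol{a}}$ indexed by $\psi^{r}_{ij}$ decomposes as a sum over the groups containing $r$:
\begin{equation*}
\left(\mathbf{\Upsilon}\mathbf{\Theta}^0 + \hat{\boldsymbol{a}}\right)_{\psi^r_{ij}} = \sum_{k \in r} \left( \mathbf{B}^k \underline{\boldsymbol{\omega}}^{k,0} + \hat{\boldsymbol{a}}^k\right)_{(ij)},
\end{equation*}
where $\mathbf{B}^k$ denotes the blocks $\mathbf{B}_k$ appearing in (\ref{upsilon}). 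The key algebraic step is to collect, for each fixed $k$, all the $\mathbf{B}^k$-blocks in the relevant row of $\mathbf{\Upsilon}$; the pattern in (\ref{upsilon}) is arranged precisely so that, upon multiplying by $\mathbf{\Theta}^0$, the coefficients of $\mathbf{B}^k$ add up to $\sum_{s \in \vartheta_k}\underline{\boldsymbol{\psi}}^{s,0} = \underline{\boldsymbol{\omega}}^{k,0}$, the true off-diagonal vector of $\mathbf{\Omega}^{k,0}$ by the decomposition (\ref{eq21}).

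Next I would identify each single-group summand with an entry of a matrix product. Using the explicit entries of $\mathbf{B}^k$ in (\ref{structure of Bs}) and the definition $\hat{\boldsymbol{a}}^k_{(ij)} = s^k_{ij}(\hat{\psi}^k_{ii} + \hat{\psi}^k_{jj})$, a direct expansion gives
\begin{equation*}
\left( \mathbf{B}^k \underline{\boldsymbol{\omega}}^{k,0} + \boldsymbol{a}^k\right)_{(ij)} = \left( \mathbf{S}^k \mathbf{\Omega}^{k,0} + \mathbf{\Omega}^{k,0}\mathbf{S}^k\right)_{ij} = 2\left(\mathbf{S}^k \mathbf{\Omega}^{k,0}\right)_{ij},
\end{equation*}
where $\boldsymbol{a}^k$ uses the true diagonals. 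This is exactly the sample CONCORD score for group $k$ at edge $(i,j)$, whose population analogue is $(\mathbf{\Sigma}^k \mathbf{\Omega}^{k,0})_{ij} = \mathbf{I}_{ij} = 0$ for $i \neq j$, since $\mathbf{\Omega}^{k,0} = (\mathbf{\Sigma}^k)^{-1}$.

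I would then bound each summand on the event $C_{1,n}\cap C_{2,n}$. On $C_{2,n}$ (see (\ref{C2n})) we have $|(\mathbf{S}^k\mathbf{\Omega}^{k,0})_{ij}| = |(\mathbf{\Omega}^{k,0}_{:j})'\mathbf{S}^k_{:i}| < c\sqrt{\log p / n}$, which controls the true-diagonal term. The discrepancy between $\hat{\boldsymbol{a}}^k$ and $\boldsymbol{a}^k$ is $s^k_{ij}[(\hat{\psi}^k_{ii} - \psi^k_{ii}) + (\hat{\psi}^k_{jj} - \psi^k_{jj})]$; by Assumption 1 the diagonal errors are $O(\sqrt{\log p / n})$, while on $C_{1,n}$ (together with the bounded-eigenvalue assumption) $|s^k_{ij}|$ is uniformly bounded, so this extra term is also $O(\sqrt{\log p / n})$. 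Since each index $r$ has at most $K$ elements, the coordinate $(\mathbf{\Upsilon}\mathbf{\Theta}^0 + \hat{\boldsymbol{a}})_{\psi^r_{ij}}$ is a sum of at most $K$ terms, each $O(\sqrt{\log p/n})$; taking the maximum over all $r$ and all $(i,j)$ yields $\|\mathbf{\Upsilon}\mathbf{\Theta}^0 + \hat{\boldsymbol{a}}\|_{\max} \le c_0\sqrt{\log p/n}$ for a constant $c_0$ absorbing $K$, $c$, and $C$.

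The main obstacle I anticipate is the bookkeeping in the two algebraic identities rather than any analytic difficulty. Verifying the block-collection step requires reading off the correct pattern of $\mathbf{B}^k$-blocks from (\ref{upsilon}) for a general row and confirming that they recombine into $\underline{\boldsymbol{\omega}}^{k,0}$; and establishing the per-group identity requires carefully tracking the ordered index convention $a<b$ in (\ref{structure of Bs}) so that every off-diagonal cross term and every diagonal ($\boldsymbol{a}^k$) term of $(\mathbf{S}^k\mathbf{\Omega}^{k,0} + \mathbf{\Omega}^{k,0}\mathbf{S}^k)_{ij}$ is matched. Once these two identities are in place, the probabilistic bound follows immediately from $C_{2,n}$ and Assumption 1.
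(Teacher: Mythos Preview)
Your approach is essentially the paper's: decompose the $\psi^r_{ij}$-coordinate as $\sum_{k\in r}(\mathbf{B}^k\underline{\boldsymbol{\omega}}^{k,0}+\hat{\boldsymbol{a}}^k)_{(ij)}$, identify each group-$k$ summand with the $(i,j)$ entry of $\mathbf{S}^k\mathbf{\Omega}^{k,0}+\mathbf{\Omega}^{k,0}\mathbf{S}^k$, and bound it on $C_{2,n}$ together with Assumption~1 for the diagonal correction $\hat{\boldsymbol{a}}^k-\boldsymbol{a}^k$. One minor slip to fix: $\mathbf{S}^k\mathbf{\Omega}^{k,0}$ is not symmetric, so $(\mathbf{S}^k\mathbf{\Omega}^{k,0}+\mathbf{\Omega}^{k,0}\mathbf{S}^k)_{ij}=(\mathbf{S}^k\mathbf{\Omega}^{k,0})_{ij}+(\mathbf{S}^k\mathbf{\Omega}^{k,0})_{ji}$ rather than $2(\mathbf{S}^k\mathbf{\Omega}^{k,0})_{ij}$; since $C_{2,n}$ controls both terms this does not affect your bound.
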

\begin{proof}
Note that by the triangular inequality,
\begin{equation}\label{Lemma 4 proof 1}
\|\mathbf{\Upsilon} \mathbf{\Theta}^0 + \hat{\boldsymbol{a}} \|_{\max} \leq \|\mathbf{\Upsilon} \mathbf{\Theta}^0 + \boldsymbol{a} \|_{\max} + \| \hat{\boldsymbol{a}} - \boldsymbol{a}\|_{\max},
\end{equation}
where, $\hat{\boldsymbol{a}}$ is the estimate of $\boldsymbol{a}$ provided by Assumption 1.

Now, in view of (\ref{upsilon}), (\ref{a}), and (3.1), one can easily check that
\begin{equation*}
\begin{split}
\mathbf{\Upsilon} \mathbf{\Theta}^0 + \boldsymbol{a} =& \left( {\begin{array}{*{20}{c}}
{\mathbf{B}^1 {\underline{\boldsymbol{\omega}}}^{1,0} + \boldsymbol{a}^1 + \mathbf{B}^2 {\underline{\boldsymbol{\omega}}}^{2,0} + \boldsymbol{a}^2 + \mathbf{B}^3 {\underline{\boldsymbol{\omega}}}^{3,0} + \boldsymbol{a}^3} \\ 
{\mathbf{B}^2 {\underline{\boldsymbol{\omega}}}^{2,0} + \boldsymbol{a}^2 + \mathbf{B}^3 {\underline{\boldsymbol{\omega}}}^{3,0} + \boldsymbol{a}^3 }\\ 
{\mathbf{B}^1{\underline{\boldsymbol{\omega}}}^{1,0} + \boldsymbol{a}^1 + \mathbf{B}^3 {\underline{\boldsymbol{\omega}}}^{3,0} + \boldsymbol{a}^3}\\ 
{\mathbf{B}^1 {\underline{\boldsymbol{\omega}}}^{1,0} + \boldsymbol{a}^1 + \mathbf{B}^2 {\underline{\boldsymbol{\omega}}}^{2,0} + \boldsymbol{a}^2} \\ 
{\mathbf{B}^3 {\underline{\boldsymbol{\omega}}}^{3,0} + \boldsymbol{a}^3} \\ 
{\mathbf{B}^2 {\underline{\boldsymbol{\omega}}}^{2,0} + \boldsymbol{a}^2} \\ 
{\mathbf{B}^1 {\underline{\boldsymbol{\omega}}}^{1,0} + \boldsymbol{a}^1} \\ 
\end{array} } \right),
\end{split}
\end{equation*}
where ${\underline{\boldsymbol{\omega}}}^{1,0} = \left( {\underline{\boldsymbol{\psi}}}^{1,0} + {\underline{\boldsymbol{\psi}}}^{12,0} + {\underline{\boldsymbol{\psi}}}^{13,0} + {\underline{\boldsymbol{\psi}}}^{123,0}\right)$, ${\underline{\boldsymbol{\omega}}}_{2,0} = \left( {\underline{\boldsymbol{\psi}}}^{2,0} + {\underline{\boldsymbol{\psi}}}^{12,0} + {\underline{\boldsymbol{\psi}}}^{23,0} + {\underline{\boldsymbol{\psi}}}^{123,0}\right)$, and ${\underline{\boldsymbol{\omega}}}^{3,0} = \left( {\underline{\boldsymbol{\psi}}}^{3,0} + {\underline{\boldsymbol{\psi}}}^{13,0} + {\underline{\boldsymbol{\psi}}}^{23,0} + {\underline{\boldsymbol{\psi}}}^{123,0}\right)$. Furthermore,
\begin{equation*}
\mathbf{B}^k {\underline{\boldsymbol{\omega}}}^{k,0} + \boldsymbol{a}^k = \left( {\begin{array}{*{20}{c}}
{{\mathbf{\Omega}^{k,0}_{:1}}' {\mathbf{S}}^k_{:2} + {\mathbf{\Omega}^{k,0}_{:2}}' {\mathbf{S}}^k_{:1}}\\
{{\mathbf{\Omega}^{k,0}_{:1}}' {\mathbf{S}}^k_{:3} + {\mathbf{\Omega}^{k,0}_{:3}}' {\mathbf{S}}^k_{:1}}\\
{\vdots}\\
{{\mathbf{\Omega}^{k,0}_{:p-1}}' {\mathbf{S}}^k_{:p} + {\mathbf{\Omega}^{k,0}_{:p}}' {\mathbf{S}}^k_{:p-1}}
\end{array} } \right), \quad \quad k=1,2,3.
\end{equation*}
Now, we rewrite $\mathbf{\Upsilon} \mathbf{\Theta}^0 + \boldsymbol{a}$ as,
\begin{equation*}
\mathbf{\Upsilon} \mathbf{\Theta}^0 + \boldsymbol{a} = \left( {\begin{array}{*{20}{c}} 
{\mathbf{I}_{\frac{p(p-1)}{2}}} & {\mathbf{I}_{\frac{p(p-1)}{2}}} & {\mathbf{I}_{\frac{p(p-1)}{2}}} \\
{\mathbf{I}_{\frac{p(p-1)}{2}}} & {\mathbf{I}_{\frac{p(p-1)}{2}}} & {} \\
{\mathbf{I}_{\frac{p(p-1)}{2}}} & {\mathbf{0}} & {\mathbf{I}_{\frac{p(p-1)}{2}}} \\
{\mathbf{0}} & {\mathbf{I}_{\frac{p(p-1)}{2}}} & {\mathbf{I}_{\frac{p(p-1)}{2}}} \\
{\mathbf{I}_{\frac{p(p-1)}{2}}} & {\mathbf{0}} & {\mathbf{0}} \\
{\mathbf{0}} & {\mathbf{I}_{\frac{p(p-1)}{2}}} & {\mathbf{0}} \\
{\mathbf{0}} & {\mathbf{0}} & {\mathbf{I}_{\frac{p(p-1)}{2}}} \\
\end{array} } \right) \left( \begin{array}{*{20}{c}} {\mathbf{B}^3 {\underline{\boldsymbol{\omega}}}^{3,0} + \boldsymbol{a}^3}\\
{\mathbf{B}^2 {\underline{\boldsymbol{\omega}}}^{2,0} + \boldsymbol{a}^2}\\
{\mathbf{B}^1 {\underline{\boldsymbol{\omega}}}^{1,0} + \boldsymbol{a}^1} \end{array}\right).
\end{equation*}
The norm of the matrix in the right hand side of the above equation is equal to $\sqrt{K(2^K-1)}$, hence, by restricting to the event $C_{1,n} \cap C_{2,n}$, we have that

\begin{equation}\label{Lemma 4 proof 2}
\begin{split}
\| \mathbf{\Upsilon} \mathbf{\Theta}^0 + \boldsymbol{a} \|_{\max} &\le \sqrt{\sum\limits_{k=1}^{K} \| \mathbf{B}^k {\underline{\boldsymbol{\omega}}}^{k,0} + \boldsymbol{a}^k\|_{\max}^2}\\
& \le \sqrt{K \mathop{\max_{1 \le k\le K}}_{1 \le i < j \le p} \left( {\mathbf{\Omega}^{k,0}_{:i}}' {\mathbf{S}}^k_{:j}\right)^2}\\ 
& \le \sqrt{K}\mathop{\max_{1 \le k\le K}}_{1 \le i < j \le p} | {\mathbf{\Omega}^{k,0}_{:i}}' \left( {\mathbf{S}}^k_{:j} - \mathbf{\Sigma}^k_{:j}\right) | \\
& \le 2c \sqrt{K} \sqrt{\frac{\log p}{n}}.
\end{split}
\end{equation}
Moreover, by (\ref{a}), and Assumption 1, it is easy to see that
\begin{equation*}
\begin{split}
\| \hat{\boldsymbol{a}} - \boldsymbol{a}\|_{\max} \leq & 2 K C \| \mathbf{\Upsilon}\|_{\max} \sqrt{\frac{\log p}{n}}
\end{split}
\end{equation*}
also, $\|\mathbf{\Upsilon}\|_{\max} \leq \max\limits_{\boldsymbol{\ell}\in \mathcal{L}} \|\mathbf{\Upsilon}_{\boldsymbol{\ell}\boldsymbol{\ell}}\|_{\max}$ hence, by applying Lemma \ref{Lemma 3}, we have that
\begin{equation}\label{Lemma 4 proof 3}
\begin{split}
\| \hat{\boldsymbol{a}} - \boldsymbol{a}\|_{\max} & \leq  \frac{3CK^2}{\varepsilon_0}\sqrt{\frac{\log p}{n}}
\end{split}
\end{equation}
Therefore, by combining (\ref{Lemma 4 proof 1}), (\ref{Lemma 4 proof 2}), and (\ref{Lemma 4 proof 3}), 

\begin{equation*}
\|\mathbf{\Upsilon} \mathbf{\Theta}^0 + \hat{\boldsymbol{a}} \|_{\max} \leq \left( 2c \sqrt{K} + \frac{3CK^2}{\varepsilon_0}\right) \sqrt{\frac{\log p}{n}},
\end{equation*}

thus, the conclusion follows by letting $c_0 = 2c \sqrt{K} + \frac{3CK^2}{\varepsilon_0}$.
\end{proof}

For simplicity in writing, we denote the ratio of the posterior probabilities of any sparsity pattern/model $\boldsymbol{{\boldsymbol{\ell}}}$ and the true sparsity pattern/model $\boldsymbol{t}$,  by $PR\left( \boldsymbol{\ell}, \boldsymbol{t}\right)$, i.e.
\begin{equation}
PR\left( \boldsymbol{\ell}, \boldsymbol{t}\right) = \frac{P \left\{ \boldsymbol{{\boldsymbol{\ell}}} | \hat{\mathbf{\Delta}}, \mathcal{Y}\right\}}{P \left\{ \boldsymbol{t} | \hat{\mathbf{\Delta}}, \mathcal{Y}\right\}}, \quad \quad \text{for any sparsity pattern}\quad  \boldsymbol{\ell} \neq \boldsymbol{t}.
\end{equation}

\begin{Lemma} \label{Lemma 1 PR}
Under Assumption 1, the ratio of the posterior probabilities of any sparsity pattern/model $\boldsymbol{{\boldsymbol{\ell}}}$ and the true sparsity pattern/model $\boldsymbol{t}$ satisfies:

\begin{equation}\label{PR}
\begin{split}
PR(\boldsymbol{\ell}, \boldsymbol{t}) &= \frac{P \left\{ \boldsymbol{{\boldsymbol{\ell}}} | \hat{\mathbf{\Delta}}, \mathcal{Y} \right\}}{P \left\{ \boldsymbol{t} | \hat{\mathbf{\Delta}}, \mathcal{Y} \right\}} = \frac{\left[ q_1^{d_{\boldsymbol{{\boldsymbol{\ell}}}}}(1-q_1)^{\binom{p}{2} - d_{\boldsymbol{{\boldsymbol{\ell}}}}} I_{\left\{d_{\boldsymbol{\ell}}\leq \tau\right\}} + q_1^{d_{\boldsymbol{\ell}}}(1-q_1)^{\binom{p}{2} - d_{\boldsymbol{\ell}}} I_{\left\{d_{\boldsymbol{\ell}}\leq \tau\right\}}\right]}{\left[ q_1^{d_{\boldsymbol{t}}}(1-q_1)^{\binom{p}{2} - d_{\boldsymbol{t}}} I_{\left\{d_{\boldsymbol{t}}\leq \tau\right\}} + q_1^{d_{\boldsymbol{t}}}(1-q_1)^{\binom{p}{2} - d_{\boldsymbol{t}}} I_{\left\{d_{\boldsymbol{t}}\leq \tau\right\}}\right]}\\
&\times \frac{|\mathbf{\Lambda}_{\boldsymbol{\ell}\boldsymbol{\ell}}|^{\frac{1}{2}}}{|\mathbf{\Lambda}_{\boldsymbol{\ell}\boldsymbol{t}}|^{\frac{1}{2}}} \frac{| \left( n\mathbf{\Upsilon} + \mathbf{\Lambda}\right)_{tt}|^{\frac{1}{2}}}{| \left( n\mathbf{\Upsilon} + \mathbf{\Lambda}\right)_{{\boldsymbol{\ell}}{\boldsymbol{\ell}}}|^{\frac{1}{2}}} \frac{\exp\left\{ \frac{n^2}{2}\hat{\boldsymbol{a}}_{\boldsymbol{\ell}}\left( n\mathbf{\Upsilon} + \mathbf{\Lambda}\right)_{{\boldsymbol{\ell}}{\boldsymbol{\ell}}}^{-1}\hat{\boldsymbol{a}}_{\boldsymbol{\ell}}\right\}}{\exp\left\{ \frac{n^2}{2}\hat{\boldsymbol{a}}_t\left( n\mathbf{\Upsilon} + \mathbf{\Lambda}\right)_{tt}^{-1}\hat{\boldsymbol{a}}_t\right\}}.
\end{split}
\end{equation}
\end{Lemma}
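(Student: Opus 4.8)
The plan is to obtain a closed form for the marginal posterior mass $P\{\boldsymbol{\ell}\mid\hat{\mathbf{\Delta}},\mathcal{Y}\}$ of each sparsity pattern $\boldsymbol{\ell}$ by integrating $\mathbf{\Theta}$ out of the conditional posterior \eqref{posterior theta given delta} evaluated at $\mathbf{\Delta}=\hat{\mathbf{\Delta}}$ (so that $\boldsymbol{a}=\hat{\boldsymbol{a}}$), and then to form the ratio, in which the intractable normalizing constant cancels. First I would note that the term of the mixture in \eqref{posterior theta given delta} indexed by $\boldsymbol{\ell}$ is supported on $\mathcal{M}_{\boldsymbol{\ell}}$, where only the $d_{\boldsymbol{\ell}}$ nonzero coordinates $\mathbf{\Theta}_{\boldsymbol{\ell}}$ are free and the remaining coordinates are identically zero. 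Consequently the quadratic form in the exponent collapses to $\mathbf{\Theta}_{\boldsymbol{\ell}}'\left(n\mathbf{\Upsilon}+\mathbf{\Lambda}\right)_{\boldsymbol{\ell}\boldsymbol{\ell}}\mathbf{\Theta}_{\boldsymbol{\ell}}+2n\mathbf{\Theta}_{\boldsymbol{\ell}}'\hat{\boldsymbol{a}}_{\boldsymbol{\ell}}$, since every row and column indexed by a zero coordinate is annihilated; here $\hat{\boldsymbol{a}}_{\boldsymbol{\ell}}$ and $(n\mathbf{\Upsilon}+\mathbf{\Lambda})_{\boldsymbol{\ell}\boldsymbol{\ell}}$ are the subvector and principal submatrix associated with the support of $\boldsymbol{\ell}$.

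Next I would carry out the Gaussian integration over $\mathbf{\Theta}_{\boldsymbol{\ell}}\in\mathbb{R}^{d_{\boldsymbol{\ell}}}$. Writing $\mathbf{M}_{\boldsymbol{\ell}}=(n\mathbf{\Upsilon}+\mathbf{\Lambda})_{\boldsymbol{\ell}\boldsymbol{\ell}}$ and completing the square gives
\[
\mathbf{\Theta}_{\boldsymbol{\ell}}'\mathbf{M}_{\boldsymbol{\ell}}\mathbf{\Theta}_{\boldsymbol{\ell}}+2n\mathbf{\Theta}_{\boldsymbol{\ell}}'\hat{\boldsymbol{a}}_{\boldsymbol{\ell}}=\left(\mathbf{\Theta}_{\boldsymbol{\ell}}+n\mathbf{M}_{\boldsymbol{\ell}}^{-1}\hat{\boldsymbol{a}}_{\boldsymbol{\ell}}\right)'\mathbf{M}_{\boldsymbol{\ell}}\left(\mathbf{\Theta}_{\boldsymbol{\ell}}+n\mathbf{M}_{\boldsymbol{\ell}}^{-1}\hat{\boldsymbol{a}}_{\boldsymbol{\ell}}\right)-n^2\hat{\boldsymbol{a}}_{\boldsymbol{\ell}}'\mathbf{M}_{\boldsymbol{\ell}}^{-1}\hat{\boldsymbol{a}}_{\boldsymbol{\ell}},
\]
so that $\int_{\mathbb{R}^{d_{\boldsymbol{\ell}}}}\exp\{-\tfrac12[\cdots]\}\,d\mathbf{\Theta}_{\boldsymbol{\ell}}=(2\pi)^{d_{\boldsymbol{\ell}}/2}|\mathbf{M}_{\boldsymbol{\ell}}|^{-1/2}\exp\{\tfrac{n^2}{2}\hat{\boldsymbol{a}}_{\boldsymbol{\ell}}'\mathbf{M}_{\boldsymbol{\ell}}^{-1}\hat{\boldsymbol{a}}_{\boldsymbol{\ell}}\}$. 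Multiplying by the prefactor $|\mathbf{\Lambda}_{\boldsymbol{\ell}\boldsymbol{\ell}}|^{1/2}/(2\pi)^{d_{\boldsymbol{\ell}}/2}$ and the prior weight $\pi(\boldsymbol{\ell})$ from \eqref{hierarchical prior 2}, the $(2\pi)^{d_{\boldsymbol{\ell}}/2}$ factors cancel and yield $P\{\boldsymbol{\ell}\mid\hat{\mathbf{\Delta}},\mathcal{Y}\}\propto\pi(\boldsymbol{\ell})\,|\mathbf{\Lambda}_{\boldsymbol{\ell}\boldsymbol{\ell}}|^{1/2}|\mathbf{M}_{\boldsymbol{\ell}}|^{-1/2}\exp\{\tfrac{n^2}{2}\hat{\boldsymbol{a}}_{\boldsymbol{\ell}}'\mathbf{M}_{\boldsymbol{\ell}}^{-1}\hat{\boldsymbol{a}}_{\boldsymbol{\ell}}\}$, with the same (model-free) normalizing constant for every $\boldsymbol{\ell}$.

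Forming $PR(\boldsymbol{\ell},\boldsymbol{t})=P\{\boldsymbol{\ell}\mid\hat{\mathbf{\Delta}},\mathcal{Y}\}/P\{\boldsymbol{t}\mid\hat{\mathbf{\Delta}},\mathcal{Y}\}$ then cancels the common normalizer, leaving precisely the three factors displayed in \eqref{PR}: the ratio of prior weights, the determinant ratio $|\mathbf{\Lambda}_{\boldsymbol{\ell}\boldsymbol{\ell}}|^{1/2}|\mathbf{M}_{\boldsymbol{t}}|^{1/2}\big/\big(|\mathbf{\Lambda}_{\boldsymbol{t}\boldsymbol{t}}|^{1/2}|\mathbf{M}_{\boldsymbol{\ell}}|^{1/2}\big)$, and the ratio of the exponential score terms. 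The one point requiring care — and the only genuine obstacle — is justifying that $\mathbf{M}_{\boldsymbol{\ell}}$ is positive definite, so that the integral converges and $|\mathbf{M}_{\boldsymbol{\ell}}|$, $\mathbf{M}_{\boldsymbol{\ell}}^{-1}$ are well defined. This follows because $\mathbf{\Lambda}_{\boldsymbol{\ell}\boldsymbol{\ell}}$ is a positive diagonal matrix, while the block structure of $\mathbf{\Upsilon}$ in \eqref{upsilon} writes $\boldsymbol{x}'\mathbf{\Upsilon}_{\boldsymbol{\ell}\boldsymbol{\ell}}\boldsymbol{x}$ as a sum over $k$ of quadratic forms in the matrices $\mathbf{B}^k_*$, each positive semidefinite by Lemma \ref{Lemma 2} (since each $\mathbf{S}^k$ is); hence $n\mathbf{\Upsilon}_{\boldsymbol{\ell}\boldsymbol{\ell}}\succeq 0$ and $\mathbf{M}_{\boldsymbol{\ell}}\succ 0$. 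Assumption 1 enters only through the replacement of $\boldsymbol{a}$ by its estimate $\hat{\boldsymbol{a}}$, which renders $PR(\boldsymbol{\ell},\boldsymbol{t})$ a computable function of the data.
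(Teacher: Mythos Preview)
Your proposal is correct and follows essentially the same route as the paper's own proof: integrate the conditional posterior \eqref{posterior theta given delta} over $\mathcal{M}_{\boldsymbol{\ell}}$ using the multivariate normal integral, then take the ratio so that the common normalizing constant drops out. The paper's argument is simply a terser version of yours, appealing to ``properties of the multivariate normal distribution'' in place of your explicit completion of the square; your added justification that $\mathbf{M}_{\boldsymbol{\ell}}=(n\mathbf{\Upsilon}+\mathbf{\Lambda})_{\boldsymbol{\ell}\boldsymbol{\ell}}\succ 0$ via positive semidefiniteness of $\mathbf{\Upsilon}_{\boldsymbol{\ell}\boldsymbol{\ell}}$ plus the strictly positive diagonal $\mathbf{\Lambda}_{\boldsymbol{\ell}\boldsymbol{\ell}}$ is a useful point the paper does not make explicit here (it is handled implicitly later, e.g.\ in Lemmas~\ref{Lemma 3} and~\ref{Lemma 7}).
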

\begin{proof} 
We note that
\begin{equation*}
\begin{split}
P& \left\{ \boldsymbol{\ell} | \hat{\mathbf{\Delta}}, \mathcal{Y}\right\} = P \left\{ \mathbf{\Theta}\in \mathcal{M}_{\boldsymbol{\ell}}| \hat{\mathbf{\Delta}}, \mathcal{Y} \right\} = \int_{\mathcal{M}_{\boldsymbol{\ell}}} \pi \left(\mathbf{\Theta} | \hat{\mathbf{\Delta}}, \mathcal{Y}\right) d\mathbf{\Theta}, 
\end{split}
\end{equation*}
hence, in view of (4.4),
\begin{equation*}
\begin{split}
P& \left\{ \boldsymbol{\ell} | \hat{\mathbf{\Delta}}, \mathcal{Y} \right\} =C_0 \left[ q_1^{d_{\boldsymbol{\ell}}}(1-q_1)^{\binom{p}{2} - d_{\boldsymbol{\ell}}} I_{\left\{d_{\boldsymbol{\ell}}\leq \tau\right\}} + q_1^{d_{\boldsymbol{\ell}}}(1-q_1)^{\binom{p}{2} - d_{\boldsymbol{\ell}}} I_{\left\{d_{\boldsymbol{\ell}}\leq \tau\right\}}\right]\\
& \frac{|\mathbf{\Lambda}_{\boldsymbol{\ell}\boldsymbol{\ell}}|^{\frac{1}{2}}}{| \left( n\mathbf{\Upsilon} + \mathbf{\Lambda}\right)_{{\boldsymbol{\ell}}{\boldsymbol{\ell}}}|^{\frac{1}{2}}} \exp\left\{ \frac{n^2}{2}\hat{\boldsymbol{a}}_{\boldsymbol{\ell}}\left( n\mathbf{\Upsilon} + \mathbf{\Lambda}\right)_{{\boldsymbol{\ell}}{\boldsymbol{\ell}}}^{-1}\hat{\boldsymbol{a}}_{\boldsymbol{\ell}}\right\},\\
\end{split}
\end{equation*}
where the last equality is achieved using the properties of the multivariate normal distribution. 
\end{proof}
In the next series of lemmas, we will show that for any sparsity pattern $\boldsymbol{\ell} \in \mathcal{L}$, the posterior probability ratio $PR(\boldsymbol{\ell}, \boldsymbol{t})$ is approaching zero, as $n$ goes to $\infty$. Specifically, we consider four cases of underfitted (${\boldsymbol{\ell}} \subset \boldsymbol{t}$), overfitted (${\boldsymbol{t}} \subset \boldsymbol{\ell}$ with $d_{\boldsymbol{\ell}} < \tau_n$), unrealistically overfitted (${\boldsymbol{t}} \subset \boldsymbol{\ell}$ with $d_{\boldsymbol{\ell}} > \tau_n$), and non-inclusive ( ${\boldsymbol{t}} \not \subseteq \boldsymbol{\ell}$ and  ${\boldsymbol{\ell}} \not \subseteq \boldsymbol{t}$ ).

\begin{Lemma} \label{Lemma 5}
Suppose ${\boldsymbol{\ell}} \subset \boldsymbol{t}$ then, under Assumptions 1-6,
\begin{equation}
PR({\boldsymbol{\ell}}, \boldsymbol{t}) \to 0, \quad \quad \quad \text{as} \quad n \to \infty . 
\end{equation}
\end{Lemma}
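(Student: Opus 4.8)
The plan is to start from the closed-form expression for $PR(\boldsymbol{\ell},\boldsymbol{t})$ in Lemma \ref{Lemma 1 PR} and show that its logarithm diverges to $-\infty$ on the event $C_{1,n}\cap C_{2,n}$. Write $m = d_{\boldsymbol{t}} - d_{\boldsymbol{\ell}} \ge 1$; since $\boldsymbol{\ell}\subset\boldsymbol{t}$ we have $d_{\boldsymbol{\ell}} < d_{\boldsymbol{t}} \le \tau_n$, so both models sit in the sparse regime and only $q_1$ is active in the prior. I would split $\log PR$ into three pieces: the prior-ratio term, the determinant-ratio term, and the quadratic-form term $E_{\boldsymbol{\ell}} - E_{\boldsymbol{t}}$, where $E_{\boldsymbol{\ell}} = \frac{n^2}{2}\hat{\boldsymbol{a}}_{\boldsymbol{\ell}}'(n\mathbf{\Upsilon}+\mathbf{\Lambda})_{\boldsymbol{\ell}\boldsymbol{\ell}}^{-1}\hat{\boldsymbol{a}}_{\boldsymbol{\ell}}$ and $E_{\boldsymbol{t}}$ is defined analogously.

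First I would dispatch the two easy terms. The prior ratio equals $\left(\frac{1-q_1}{q_1}\right)^{m}$, so its logarithm is $m\log\frac{1-q_1}{q_1} \le m\, a_2 d_{\boldsymbol{t}}\log p\,(1+o(1))$ using the choice $q_1 = p^{-a_2 d_{\boldsymbol{t}}}$. For the determinant ratio, the nesting $\boldsymbol{\ell}\subset\boldsymbol{t}$ gives a Schur-complement factorization $|(n\mathbf{\Upsilon}+\mathbf{\Lambda})_{\boldsymbol{t}\boldsymbol{t}}| = |(n\mathbf{\Upsilon}+\mathbf{\Lambda})_{\boldsymbol{\ell}\boldsymbol{\ell}}|\cdot|\mathbf{C}|$, where $\mathbf{C}$ is the $m\times m$ Schur complement. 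Lemma \ref{Lemma 3} sandwiches the eigenvalues of $\mathbf{C}$ between constant multiples of $n$, so, combined with the $\mathbf{\Lambda}$-determinants, the determinant-ratio term contributes at most $\frac{m}{2}\log n + O(m)$ to $\log PR$.

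The crux is the quadratic-form gap, where I want $E_{\boldsymbol{t}} - E_{\boldsymbol{\ell}} \gtrsim a_1 n m s_n^2$. Two facts drive this. Reading $E_{\boldsymbol{\ell}}$ as the maximum over $\mathbf{\Theta}\in\mathcal{M}_{\boldsymbol{\ell}}$ of the concave quadratic $-\frac12\mathbf{\Theta}'(n\mathbf{\Upsilon}+\mathbf{\Lambda})\mathbf{\Theta} - n\hat{\boldsymbol{a}}'\mathbf{\Theta}$, the nesting $\mathcal{M}_{\boldsymbol{\ell}}\subset\mathcal{M}_{\boldsymbol{t}}$ forces $E_{\boldsymbol{t}}\ge E_{\boldsymbol{\ell}}$, and the gap is the maximized value of the same quadratic along the $\boldsymbol{t}\setminus\boldsymbol{\ell}$ directions after projecting out $\boldsymbol{\ell}$, namely $E_{\boldsymbol{t}} - E_{\boldsymbol{\ell}} = \frac{n^2}{2}\tilde{\boldsymbol{a}}'\mathbf{C}^{-1}\tilde{\boldsymbol{a}}$ with $\tilde{\boldsymbol{a}}$ the residual of $\hat{\boldsymbol{a}}$ on the extra coordinates. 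Then Lemma \ref{Lemma 4} gives $\hat{\boldsymbol{a}} \approx -\mathbf{\Upsilon}\mathbf{\Theta}^0$ up to $O(\sqrt{\log p/n})$ in sup-norm, so $\tilde{\boldsymbol{a}}$ equals, up to negligible error, $\mathbf{C}$ acting on $\mathbf{\Theta}^0_{\boldsymbol{t}\setminus\boldsymbol{\ell}}$; substituting and invoking the signal-strength lower bound $\|\mathbf{\Theta}^0_{\boldsymbol{t}\setminus\boldsymbol{\ell}}\|_2^2 \ge m s_n^2$ with the Lemma \ref{Lemma 3} eigenvalue bounds yields the claimed growth.

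Collecting the three bounds gives $\log PR(\boldsymbol{\ell},\boldsymbol{t}) \le m\big(a_2 d_{\boldsymbol{t}}\log p + \tfrac12\log n + O(1) - a_1 n s_n^2\big)$. Factoring out $a_1 n s_n^2$ and applying the Signal Strength Assumption \ref{Assumption5: Signal Strength}, which forces $\frac{\frac12\log n + a_2 d_{\boldsymbol{t}}\log p}{a_1 n s_n^2}\to 0$, makes the bracket eventually negative and divergent; since $m\ge 1$, $\log PR \to -\infty$, hence $PR(\boldsymbol{\ell},\boldsymbol{t})\to 0$. I expect the main obstacle to be the quadratic-form gap: rigorously passing from $\hat{\boldsymbol{a}}$ to the true signal via Lemma \ref{Lemma 4} while showing the $O(\sqrt{\log p/n})$ approximation errors are dominated by $s_n$ (again through signal strength), and pinning down the constant $a_1$ from the Schur-complement eigenvalue bounds.
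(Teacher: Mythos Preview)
Your proposal is correct and follows essentially the same route as the paper's proof: both use the Schur-complement identity to express $E_{\boldsymbol t}-E_{\boldsymbol\ell}=\tfrac{n^2}{2}\tilde{\boldsymbol a}'\mathbf C^{-1}\tilde{\boldsymbol a}$ with $\tilde{\boldsymbol a}=\hat{\boldsymbol a}_{\boldsymbol\ell^c}-n\mathbf\Upsilon_{\boldsymbol\ell^c\boldsymbol\ell}(n\mathbf\Upsilon+\mathbf\Lambda)_{\boldsymbol\ell\boldsymbol\ell}^{-1}\hat{\boldsymbol a}_{\boldsymbol\ell}$, replace $\hat{\boldsymbol a}$ by $-\mathbf\Upsilon\mathbf\Theta^0$ via Lemma~\ref{Lemma 4} to show $\tilde{\boldsymbol a}\approx -\tfrac{1}{n}\mathbf C\,\mathbf\Theta^0_{\boldsymbol t\setminus\boldsymbol\ell}$, and then balance the resulting $\asymp n\,m\,s_n^2$ exponent against the $m\big(a_2 d_{\boldsymbol t}\log p+\tfrac12\log n\big)$ from the prior and determinant ratios using Assumption~\ref{Assumption5: Signal Strength}. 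Your optimization reading of $E_{\boldsymbol\ell}$ as a constrained maximum is a clean way to see $E_{\boldsymbol t}\ge E_{\boldsymbol\ell}$, but it leads to the same algebra the paper carries out directly; the only cosmetic difference is that the paper lower-bounds $\|\tilde{\boldsymbol a}\|^2/\mathrm{eig}_{\max}(\mathbf C)$ whereas your substitution yields $(\mathbf\Theta^0_{\boldsymbol t\setminus\boldsymbol\ell})'\mathbf C\,\mathbf\Theta^0_{\boldsymbol t\setminus\boldsymbol\ell}$ directly, which is why the constant $a_1$ you would obtain may differ slightly from the paper's $\varepsilon_0^3/(768K)$.
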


\begin{proof}
By Assumption 2, $d_{\boldsymbol{t}} < \tau$, hence $d_{\boldsymbol{\ell}} < d_{\boldsymbol{t}} < \tau$. Now,

\begin{equation*}
\begin{split}
PR\left({\boldsymbol{\ell}}, \boldsymbol{t}\right) =& \frac{\| \mathbf{\Lambda}_{{\boldsymbol{\ell}} {\boldsymbol{\ell}}} \|^{\frac{1}{2}}}{\| \mathbf{\Lambda}_{\boldsymbol{t}\boldsymbol{t}} \|^{\frac{1}{2}}} \left( \frac{q_1}{1-q_1}\right)^{d_{\boldsymbol{\ell}} - d_{\boldsymbol{t}}} \frac{\|\left( n\mathbf{\Upsilon} + \mathbf{\Lambda}\right)_{\boldsymbol{t}\boldsymbol{t}} \|^{\frac{1}{2}}}{\|\left( n\mathbf{\Upsilon} + \mathbf{\Lambda}\right)_{{\boldsymbol{\ell}} {\boldsymbol{\ell}}} \|^{\frac{1}{2}}} \frac{\exp\left\{ \frac{n^2}{2} {\hat{\boldsymbol{a}}_{{\boldsymbol{\ell}}}}' \left( n\mathbf{\Upsilon} + \mathbf{\Lambda}\right)^{-1}_{{\boldsymbol{\ell}} {\boldsymbol{\ell}}} \hat{\boldsymbol{a}}_{\boldsymbol{\ell}} \right\}}{\exp\left\{ \frac{n^2}{2} {\hat{\boldsymbol{a}}_{\boldsymbol{t}}}' \left( n\mathbf{\Upsilon} + \mathbf{\Lambda}\right)^{-1}_{\boldsymbol{t}\boldsymbol{t}} \hat{\boldsymbol{a}}_{\boldsymbol{t}}\right\}}\\
= & \frac{\| \mathbf{\Lambda}_{{\boldsymbol{\ell}} {\boldsymbol{\ell}}} \|^{\frac{1}{2}}}{\| \mathbf{\Lambda}_{\boldsymbol{t}\boldsymbol{t}} \|^{\frac{1}{2}}} \left( \frac{q_1}{1-q_1}\right)^{d_{\boldsymbol{\ell}} - d_{\boldsymbol{t}}} \frac{\|\left( n\mathbf{\Upsilon} + \mathbf{\Lambda}\right)_{\boldsymbol{t}\boldsymbol{t}} \|^{\frac{1}{2}}}{\|\left( n\mathbf{\Upsilon} + \mathbf{\Lambda}\right)_{{\boldsymbol{\ell}} {\boldsymbol{\ell}}} \|^{\frac{1}{2}}}\\
&\exp\left\{- \frac{n^2}{2}\left[\hat{\boldsymbol{a}}_{{\boldsymbol{\ell}}^c} - n \mathbf{\Upsilon}_{{\boldsymbol{\ell}}^c{\boldsymbol{\ell}}}\left( n\mathbf{\Upsilon} + \mathbf{\Lambda}\right)_{{\boldsymbol{\ell}}{\boldsymbol{\ell}}}^{-1}\hat{\boldsymbol{a}}_{\boldsymbol{\ell}}\right]'\left( n\mathbf{\Upsilon} + \mathbf{\Lambda}\right)_{\boldsymbol{t}|{\boldsymbol{\ell}}}^{-1} \left[\hat{\boldsymbol{a}}_{{\boldsymbol{\ell}}^c} - n \mathbf{\Upsilon}_{{\boldsymbol{\ell}}^c{\boldsymbol{\ell}}}\left( n\mathbf{\Upsilon} + \mathbf{\Lambda}\right)_{{\boldsymbol{\ell}}{\boldsymbol{\ell}}}^{-1}\hat{\boldsymbol{a}}_{\boldsymbol{\ell}}\right]\right\},
\end{split}
\end{equation*}
that is,
\begin{equation*}
\begin{split}
PR\left({\boldsymbol{\ell}}, \boldsymbol{t}\right) \leq& \frac{\| \mathbf{\Lambda}_{{\boldsymbol{\ell}} {\boldsymbol{\ell}}} \|^{\frac{1}{2}}}{\| \mathbf{\Lambda}_{\boldsymbol{t}\boldsymbol{t}} \|^{\frac{1}{2}}} \left( \frac{q_1}{1-q_1}\right)^{d_{\boldsymbol{\ell}} - d_{\boldsymbol{t}}} \frac{\|\left( n\mathbf{\Upsilon} + \mathbf{\Lambda}\right)_{\boldsymbol{t}\boldsymbol{t}} \|^{\frac{1}{2}}}{\|\left( n\mathbf{\Upsilon} + \mathbf{\Lambda}\right)_{{\boldsymbol{\ell}} {\boldsymbol{\ell}}} \|^{\frac{1}{2}}} \exp \left\{ - \frac{n^2 \| \hat{\boldsymbol{a}}_{{\boldsymbol{\ell}}^c} - n \mathbf{\Upsilon}_{{\boldsymbol{\ell}}^c{\boldsymbol{\ell}}}\left( n\mathbf{\Upsilon} + \mathbf{\Lambda}\right)_{{\boldsymbol{\ell}}{\boldsymbol{\ell}}}^{-1}\hat{\boldsymbol{a}}_{\boldsymbol{\ell}}\|^2}{2 \text{eig}_{max}\left(n\mathbf{\Upsilon} + \mathbf{\Lambda} \right)_{\boldsymbol{t}\boldsymbol{t}}}\right\},
\end{split}
\end{equation*}

Now, by the triangular inequality,
\begin{equation}\label{Lemma 5.1}
\begin{split}
\| \hat{\boldsymbol{a}}_{{\boldsymbol{\ell}}^c} - n \mathbf{\Upsilon}_{{\boldsymbol{\ell}}^c{\boldsymbol{\ell}}}&\left( n\mathbf{\Upsilon} + \mathbf{\Lambda}\right)_{{\boldsymbol{\ell}}{\boldsymbol{\ell}}}^{-1}\hat{\boldsymbol{a}}_{\boldsymbol{\ell}}\| \\
\geq &\| \boldsymbol{a}_{{\boldsymbol{\ell}}^c} - n \mathbf{\Upsilon}_{{\boldsymbol{\ell}}^c{\boldsymbol{\ell}}}\left( n\mathbf{\Upsilon} + \mathbf{\Lambda}\right)_{{\boldsymbol{\ell}}{\boldsymbol{\ell}}}^{-1}\boldsymbol{a}_{\boldsymbol{\ell}}\|\\
&-\|\left( \hat{\boldsymbol{a}}_{{\boldsymbol{\ell}}^c} - \boldsymbol{a}_{{\boldsymbol{\ell}}^c}\right) - n \mathbf{\Upsilon}_{{\boldsymbol{\ell}}^c{\boldsymbol{\ell}}}\left( n\mathbf{\Upsilon} + \mathbf{\Lambda}\right)_{{\boldsymbol{\ell}}{\boldsymbol{\ell}}}^{-1}\left( \hat{\boldsymbol{a}}_{{\boldsymbol{\ell}}} - \boldsymbol{a}_{{\boldsymbol{\ell}}} \right) \|\\
=&\| \left( \pm \mathbf{\Upsilon} \mathbf{\Theta}^0 + \boldsymbol{a}\right)_{{\boldsymbol{\ell}}^c} - n \mathbf{\Upsilon}_{{\boldsymbol{\ell}}^c{\boldsymbol{\ell}}}\left( n\mathbf{\Upsilon} + \mathbf{\Lambda}\right)_{{\boldsymbol{\ell}}{\boldsymbol{\ell}}}^{-1}\left( \pm \mathbf{\Upsilon} \mathbf{\Theta}^0 + \boldsymbol{a} \right)_{{\boldsymbol{\ell}}}\| \\
&-\|\left( \hat{\boldsymbol{a}}_{{\boldsymbol{\ell}}^c} - \boldsymbol{a}_{{\boldsymbol{\ell}}^c}\right) - n \mathbf{\Upsilon}_{{\boldsymbol{\ell}}^c{\boldsymbol{\ell}}}\left( n\mathbf{\Upsilon} + \mathbf{\Lambda}\right)_{{\boldsymbol{\ell}}{\boldsymbol{\ell}}}^{-1}\left( \hat{\boldsymbol{a}}_{{\boldsymbol{\ell}}} - \boldsymbol{a}_{{\boldsymbol{\ell}}} \right) \|\\
\ge & \| \left( \mathbf{\Upsilon} \mathbf{\Theta}^0\right)_{{\boldsymbol{\ell}}^c} - n \mathbf{\Upsilon}_{{\boldsymbol{\ell}}^c{\boldsymbol{\ell}}}\left( n\mathbf{\Upsilon} + \mathbf{\Lambda}\right)_{{\boldsymbol{\ell}}{\boldsymbol{\ell}}}^{-1}\left( \mathbf{\Upsilon} \mathbf{\Theta}^0 \right)_{{\boldsymbol{\ell}}}\| \\
&- \| \left( \mathbf{\Upsilon} \mathbf{\Theta}^0 + \boldsymbol{a}\right)_{{\boldsymbol{\ell}}^c} - n \mathbf{\Upsilon}_{{\boldsymbol{\ell}}^c{\boldsymbol{\ell}}}\left( n\mathbf{\Upsilon} + \mathbf{\Lambda}\right)_{{\boldsymbol{\ell}}{\boldsymbol{\ell}}}^{-1}\left( \mathbf{\Upsilon} \mathbf{\Theta}^0 + \boldsymbol{a}\right)_{{\boldsymbol{\ell}}}\| \\
&-\|\left( \hat{\boldsymbol{a}}_{{\boldsymbol{\ell}}^c} - \boldsymbol{a}_{{\boldsymbol{\ell}}^c}\right) - n \mathbf{\Upsilon}_{{\boldsymbol{\ell}}^c{\boldsymbol{\ell}}}\left( n\mathbf{\Upsilon} + \mathbf{\Lambda}\right)_{{\boldsymbol{\ell}}{\boldsymbol{\ell}}}^{-1}\left( \hat{\boldsymbol{a}}_{{\boldsymbol{\ell}}} - \boldsymbol{a}_{{\boldsymbol{\ell}}} \right) \|.
\end{split}
\end{equation}
Now, by appropriately partitioning $\mathbf{\Upsilon}$, we can write $\left( \mathbf{\Upsilon} \mathbf{\Theta}^0\right)_{{\boldsymbol{\ell}}^c} = \mathbf{\Upsilon}_{{\boldsymbol{\ell}}^c{\boldsymbol{\ell}}} \mathbf{\Theta}^0_{{\boldsymbol{\ell}}} + \mathbf{\Upsilon}_{{\boldsymbol{\ell}}^c{\boldsymbol{\ell}}^c} \mathbf{\Theta}^0_{{\boldsymbol{\ell}}^c}$ and $\left( \mathbf{\Upsilon} \mathbf{\Theta}^0\right)_{{\boldsymbol{\ell}}} = \mathbf{\Upsilon}_{{\boldsymbol{\ell}}{\boldsymbol{\ell}}} \mathbf{\Theta}^0_{{\boldsymbol{\ell}}} + \mathbf{\Upsilon}_{{\boldsymbol{\ell}}{\boldsymbol{\ell}}^c} \mathbf{\Theta}^0_{{\boldsymbol{\ell}}^c}$. Hence, for large enough $n$,

\begin{equation}\label{Lemma 5.2}
\begin{split}
\| \left(\mathbf{\Upsilon} \mathbf{\Theta}^0\right)_{{\boldsymbol{\ell}}^c} - n \mathbf{\Upsilon}_{{\boldsymbol{\ell}}^c{\boldsymbol{\ell}}}\left( n\mathbf{\Upsilon} + \mathbf{\Lambda}\right)_{{\boldsymbol{\ell}}{\boldsymbol{\ell}}}^{-1}\left( \mathbf{\Upsilon} \mathbf{\Theta}^0 \right)_{{\boldsymbol{\ell}}}\| &= \| \frac{1}{n}\left( n\mathbf{\Upsilon} + \mathbf{\Lambda}\right)_{\boldsymbol{t}|{\boldsymbol{\ell}}}\mathbf{\Theta}_{{\boldsymbol{\ell}}^c}^0 - \mathbf{\Upsilon}_{{\boldsymbol{\ell}}^c{\boldsymbol{\ell}}}\left( n\mathbf{\Upsilon} + \mathbf{\Lambda}\right)_{{\boldsymbol{\ell}}{\boldsymbol{\ell}}}^{-1}\mathbf{\Lambda}_{{\boldsymbol{\ell}}{\boldsymbol{\ell}}}\mathbf{\Theta}_{{\boldsymbol{\ell}}}^0\| \\
&\geq \| \frac{1}{n}\left( n\mathbf{\Upsilon} + \mathbf{\Lambda}\right)_{\boldsymbol{t}|{\boldsymbol{\ell}}}\mathbf{\Theta}_{{\boldsymbol{\ell}}^c}^0 \| - \| \mathbf{\Upsilon}_{{\boldsymbol{\ell}}^c{\boldsymbol{\ell}}}\left( n\mathbf{\Upsilon} + \mathbf{\Lambda}\right)_{{\boldsymbol{\ell}}{\boldsymbol{\ell}}}^{-1}\mathbf{\Lambda}_{{\boldsymbol{\ell}}{\boldsymbol{\ell}}}\mathbf{\Theta}_{{\boldsymbol{\ell}}}^0\| \\
&\geq \| \frac{1}{n}\left( n\mathbf{\Upsilon} + \mathbf{\Lambda}\right)_{\boldsymbol{t}|{\boldsymbol{\ell}}}\mathbf{\Theta}_{{\boldsymbol{\ell}}^c}^0 \| - \frac{\text{eig}_{\min}\left( \mathbf{\Upsilon}_{{\boldsymbol{\ell}}^c{\boldsymbol{\ell}}}\right) \| \mathbf{\Lambda}_{{\boldsymbol{\ell}}{\boldsymbol{\ell}}}\mathbf{\Theta}_{{\boldsymbol{\ell}}}^0\|}{\text{eig}_{\min}\left( n\mathbf{\Upsilon} + \mathbf{\Lambda}\right)_{{\boldsymbol{\ell}} {\boldsymbol{\ell}}}} \\
&\geq \| \frac{1}{n}\left( n\mathbf{\Upsilon} + \mathbf{\Lambda}\right)_{\boldsymbol{t}|{\boldsymbol{\ell}}}\mathbf{\Theta}_{{\boldsymbol{\ell}}^c}^0 \| - \frac{2 \| \mathbf{\Lambda}_{{\boldsymbol{\ell}}{\boldsymbol{\ell}}}\mathbf{\Theta}_{{\boldsymbol{\ell}}}^0\|}{n\varepsilon_0^2} \\
& \geq \frac{1}{2} \| \frac{1}{n}\left( n\mathbf{\Upsilon} + \mathbf{\Lambda}\right)_{\boldsymbol{t}|{\boldsymbol{\ell}}}\mathbf{\Theta}_{{\boldsymbol{\ell}}^c}^0 \| \\
&\ge \frac{1}{2} \frac{1}{n} \text{eig}_{\text{min}}\left( n\mathbf{\Upsilon} + \mathbf{\Lambda}\right)_{\boldsymbol{t}\boldsymbol{t}}s_n\sqrt{(d_{\boldsymbol{t}} - d_{\boldsymbol{\ell}})} \\
&\ge \frac{1}{2} \frac{1}{n} n\text{eig}_{\text{min}}\left(\mathbf{\Upsilon} \right)_{tt}s_n\sqrt{(d_{\boldsymbol{t}} - d_{\boldsymbol{\ell}})} \\
&\ge \frac{1}{2} \varepsilon_0 s_n \sqrt{(d_{\boldsymbol{t}} - d_{\boldsymbol{\ell}})}
\end{split}
\end{equation} 
Moving onto the second term in the right hand side of (\ref{Lemma 5.1}),
\begin{equation}\label{Lemma 5.3}
\begin{split}
& \| \left( \mathbf{\Upsilon} \mathbf{\Theta}^0 + \boldsymbol{a}\right)_{{\boldsymbol{\ell}}^c} - n \mathbf{\Upsilon}_{{\boldsymbol{\ell}}^c{\boldsymbol{\ell}}}\left( n\mathbf{\Upsilon} + \mathbf{\Lambda}\right)_{{\boldsymbol{\ell}}{\boldsymbol{\ell}}}^{-1}\left( \mathbf{\Upsilon} \mathbf{\Theta}^0 + \boldsymbol{a}\right)_{{\boldsymbol{\ell}}}\| \\
& \le \| \left( \mathbf{\Upsilon} \mathbf{\Theta}^0 + \boldsymbol{a}\right)_{{\boldsymbol{\ell}}^c} \| + \| n \mathbf{\Upsilon}_{{\boldsymbol{\ell}}^c{\boldsymbol{\ell}}}\left( n\mathbf{\Upsilon} + \mathbf{\Lambda}\right)_{{\boldsymbol{\ell}}{\boldsymbol{\ell}}}^{-1}\left( \mathbf{\Upsilon} \mathbf{\Theta}^0 + \boldsymbol{a}\right)_{{\boldsymbol{\ell}}}\|\\
& \le \| \left( \mathbf{\Upsilon} \mathbf{\Theta}^0 + \boldsymbol{a}\right)_{{\boldsymbol{\ell}}^c} \| + \frac{n\text{eig}_{\max} \left(\mathbf{\Upsilon}_{\boldsymbol{\ell}^c \boldsymbol{\ell}}\right) \|\left( \mathbf{\Upsilon} \mathbf{\Theta}^0 + \boldsymbol{a}\right)_{{\boldsymbol{\ell}}}\| }{\text{eig}_{\min} \left( n\mathbf{\Upsilon} + \mathbf{\Lambda}\right)_{{\boldsymbol{\ell}} {\boldsymbol{\ell}}} }\\
& \le \| \left( \mathbf{\Upsilon} \mathbf{\Theta}^0 + \boldsymbol{a}\right)_{{\boldsymbol{\ell}}^c} \| + \frac{2\|\left( \mathbf{\Upsilon} \mathbf{\Theta}^0 + \boldsymbol{a}\right)_{{\boldsymbol{\ell}}}\|}{\varepsilon_0^2} \\
& \leq c_0\sqrt{\frac{\log p}{n}} \left(\sqrt{d_{\boldsymbol{t}} - d_{\boldsymbol{\ell}}} + \frac{2\sqrt{d_{\boldsymbol{\ell}}}}{\varepsilon_0^2} \right),
\end{split}
\end{equation}
where the last equality was achieved by Lemma \ref{Lemma 4}. 
Further, regarding the third term in right hand side of (\ref{Lemma 5.1}) we can write 
\begin{equation}\label{Lemma 5.4}
\begin{split}
\|\left( \hat{\boldsymbol{a}}_{{\boldsymbol{\ell}}^c} - \boldsymbol{a}_{{\boldsymbol{\ell}}^c}\right) - n \mathbf{\Upsilon}_{{\boldsymbol{\ell}}^c{\boldsymbol{\ell}}}\left( n\mathbf{\Upsilon} + \mathbf{\Lambda}\right)_{{\boldsymbol{\ell}}{\boldsymbol{\ell}}}^{-1}\left( \hat{\boldsymbol{a}}_{{\boldsymbol{\ell}}} - \boldsymbol{a}_{{\boldsymbol{\ell}}} \right) \| & \leq \| \hat{\boldsymbol{a}}_{{\boldsymbol{\ell}}^c} - \boldsymbol{a}_{{\boldsymbol{\ell}}^c} \| + \frac{n\text{eig}_{\max} \left(\mathbf{\Upsilon}_{\boldsymbol{\ell}^c \boldsymbol{\ell}}\right) \|\left( \hat{\boldsymbol{a}}_{{\boldsymbol{\ell}}} - \boldsymbol{a}_{{\boldsymbol{\ell}}} \right) \| }{\text{eig}_{\min} \left( n\mathbf{\Upsilon} + \mathbf{\Lambda}\right)_{{\boldsymbol{\ell}} {\boldsymbol{\ell}}} }\\
& \leq \frac{3CK^2}{\varepsilon_0} \sqrt{\frac{\log p}{n}}  \left(\sqrt{d_{\boldsymbol{t}} - d_{\boldsymbol{\ell}}} + \frac{2\sqrt{d_{\boldsymbol{\ell}}}}{\varepsilon_0^2} \right),
\end{split}
\end{equation}
hence, by combining (\ref{Lemma 5.1}), (\ref{Lemma 5.2}), (\ref{Lemma 5.3}), and (\ref{Lemma 5.4}), for sufficiently large $n$, we have that

\begin{equation*}
\begin{split}
\| \hat{\boldsymbol{a}}_{{\boldsymbol{\ell}}^c} - n \mathbf{\Upsilon}_{{\boldsymbol{\ell}}^c{\boldsymbol{\ell}}}\left( n\mathbf{\Upsilon} + \mathbf{\Lambda}\right)_{{\boldsymbol{\ell}}{\boldsymbol{\ell}}}^{-1}\hat{\boldsymbol{a}}_{\boldsymbol{\ell}}\| \geq & \frac{1}{2} \varepsilon_0 s_n \sqrt{(d_{\boldsymbol{t}} - d_{\boldsymbol{\ell}})} \\
&- c_0\sqrt{\frac{\log p}{n}}  \left(\sqrt{d_{\boldsymbol{t}} - d_{\boldsymbol{\ell}}} + \frac{2\sqrt{d_{\boldsymbol{\ell}}}}{\varepsilon_0^2} \right) \\
&- \frac{3CK^2}{\varepsilon_0} \sqrt{\frac{\log p}{n}}  \left(\sqrt{d_{\boldsymbol{t}} - d_{\boldsymbol{\ell}}} + \frac{2\sqrt{d_{\boldsymbol{\ell}}}}{\varepsilon_0^2} \right)\\
\geq & \frac{1}{2} \varepsilon_0 s_n \sqrt{(d_{\boldsymbol{t}} - d_{\boldsymbol{\ell}})} \\
& - \left(c_0 +  \frac{3CK^2}{\varepsilon_0} \right) \sqrt{\frac{\log p}{n}}  \left(\sqrt{d_{\boldsymbol{t}} - d_{\boldsymbol{\ell}}} + \frac{2\sqrt{d_{\boldsymbol{\ell}}}}{\varepsilon_0^2} \right)\\
\geq & \frac{1}{2} \varepsilon_0 s_n - \left(c_0 +  \frac{3CK^2}{\varepsilon_0} \right) \sqrt{\frac{\log p}{n}}  \left(\frac{2\sqrt{d_{\boldsymbol{t}}}}{\varepsilon_0^2} \right), \\
\end{split}
\end{equation*}
in view of Assumption 5, $\frac{ \frac{1}{2} \varepsilon_0 s_n }{ \left(c_0 +  \frac{3CK^2}{\varepsilon_0} \right) \sqrt{\frac{\log p}{n}}  \left(\frac{2\sqrt{d_{\boldsymbol{t}}}}{\varepsilon_0^2} \right)} \to \infty$, as $n \to \infty$, hence, for all large $n$, we can write,
\begin{equation*}
\| \hat{\boldsymbol{a}}_{{\boldsymbol{\ell}}^c} - n \mathbf{\Upsilon}_{{\boldsymbol{\ell}}^c{\boldsymbol{\ell}}}\left( n\mathbf{\Upsilon} + \mathbf{\Lambda}\right)_{{\boldsymbol{\ell}}{\boldsymbol{\ell}}}^{-1}\hat{\boldsymbol{a}}_{\boldsymbol{\ell}}\| \geq \frac{1}{4} \varepsilon_0 s_n 
\end{equation*}
Now, once again by Lemma \ref{Lemma 4}
\begin{equation*}
\begin{split}
PR({\boldsymbol{\ell}}, \boldsymbol{t}) & \leq \frac{\| \mathbf{\Lambda}_{{\boldsymbol{\ell}} {\boldsymbol{\ell}}}\|^{\frac{1}{2}}}{\| \mathbf{\Lambda}_{\boldsymbol{t}\boldsymbol{t}}\|^{\frac{1}{2}}} (2q_1)^{d_{\boldsymbol{\ell}} - d_{\boldsymbol{t}}} n^{\frac{d_{\boldsymbol{t}} - d_{\boldsymbol{\ell}} }{2}} \exp \left\{-\frac{n^2 \frac{1}{64} \varepsilon_0^2s_n^2}{6Kn\varepsilon_0^{-1}} \right\} \\
& = \frac{\| \mathbf{\Lambda}_{{\boldsymbol{\ell}} {\boldsymbol{\ell}}}\|^{\frac{1}{2}}}{\| \mathbf{\Lambda}_{\boldsymbol{t}\boldsymbol{t}}\|^{\frac{1}{2}}} 2^{d_{{\boldsymbol{\ell}}} - d_{\boldsymbol{t}}} \left( \frac{\sqrt{n}}{q_1} \exp \left\{ -\frac{n\varepsilon_0^3 s_n^2}{384K}\right\} \right)^{d_{\boldsymbol{t}} - d_{\boldsymbol{\ell}} }\\
&= \frac{\| \mathbf{\Lambda}_{{\boldsymbol{\ell}} {\boldsymbol{\ell}}}\|^{\frac{1}{2}}}{\| \mathbf{\Lambda}_{\boldsymbol{t}\boldsymbol{t}}\|^{\frac{1}{2}}} 2^{d_{{\boldsymbol{\ell}}} - d_{\boldsymbol{t}}} \left( \frac{\sqrt{n}}{q_1} \exp \left\{ -2a_1 n s_n^2\right\} \right)^{d_{\boldsymbol{t}} - d_{\boldsymbol{\ell}} }
\end{split}
\end{equation*}
by Assumption 5, for all large $n$,
\begin{equation*}
\begin{split}
PR({\boldsymbol{\ell}}, \boldsymbol{t}) & \leq \frac{\| \mathbf{\Lambda}_{{\boldsymbol{\ell}} {\boldsymbol{\ell}}}\|^{\frac{1}{2}}}{\| \mathbf{\Lambda}_{\boldsymbol{t}\boldsymbol{t}}\|^{\frac{1}{2}}} 2^{d_{{\boldsymbol{\ell}}} - d_{\boldsymbol{t}}} \left( \frac{\sqrt{n}}{q_1} \exp \left\{ -\log n -2a_2d_{\boldsymbol{t}\log p}\right\} \right)^{d_{\boldsymbol{t}} - d_{\boldsymbol{\ell}} }\\
&= \frac{\| \mathbf{\Lambda}_{{\boldsymbol{\ell}} {\boldsymbol{\ell}}}\|^{\frac{1}{2}}}{\| \mathbf{\Lambda}_{\boldsymbol{t}\boldsymbol{t}}\|^{\frac{1}{2}}} 2^{d_{{\boldsymbol{\ell}}} - d_{\boldsymbol{t}}} \left( \frac{p^{-2a_2 d_{\boldsymbol{t}}}}{\sqrt{n}q_1}  \right)^{d_{\boldsymbol{t}} - d_{\boldsymbol{\ell}} }\\
&= \frac{\| \mathbf{\Lambda}_{{\boldsymbol{\ell}} {\boldsymbol{\ell}}}\|^{\frac{1}{2}}}{\| \mathbf{\Lambda}_{\boldsymbol{t}\boldsymbol{t}}\|^{\frac{1}{2}}} 2^{d_{{\boldsymbol{\ell}}} - d_{\boldsymbol{t}}} \left( \frac{p^{-a_2 d_{\boldsymbol{t}}}}{\sqrt{n}}  \right)^{d_{\boldsymbol{t}} - d_{\boldsymbol{\ell}} } \to 0 \quad \text{as} \quad n \to \infty.
\end{split}
\end{equation*}
\end{proof}

\begin{Lemma} \label{Lemma 6}
Suppose ${\boldsymbol{\ell}} \supset \boldsymbol{t}$, and $d_{\boldsymbol{\ell}} < \tau_n$ then, under Assumptions 1-6,
\begin{equation*}
PR({\boldsymbol{\ell}}, \boldsymbol{t}) \to 0, \quad \quad \quad \text{as} \quad n \to \infty . 
\end{equation*}
\end{Lemma}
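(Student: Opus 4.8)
The plan is to reuse the explicit factorization of $PR(\boldsymbol{\ell},\boldsymbol{t})$ from Lemma \ref{Lemma 1 PR}, but with the advantage and the burden interchanged relative to the underfitted case of Lemma \ref{Lemma 5}: here the prior ratio works \emph{for} us, and the task is to show the likelihood (exponential) factor cannot overcome it. Set $\boldsymbol{m}:=\boldsymbol{\ell}\setminus\boldsymbol{t}$, so $|\boldsymbol{m}|=d_{\boldsymbol{\ell}}-d_{\boldsymbol{t}}\ge 1$. By hypothesis $d_{\boldsymbol{\ell}}<\tau_n$ and, by Assumption 2, $d_{\boldsymbol{t}}<\tau_n$, so both models lie in the sparse branch of the prior and the prior ratio collapses to $\bigl(q_1/(1-q_1)\bigr)^{d_{\boldsymbol{\ell}}-d_{\boldsymbol{t}}}\le (2q_1)^{d_{\boldsymbol{\ell}}-d_{\boldsymbol{t}}}$, with $q_1=p^{-a_2 d_{\boldsymbol{t}}}$ by Assumption 6.

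For the determinant factor I would use that $\mathbf{\Lambda}$ is diagonal, so $|\mathbf{\Lambda}_{\boldsymbol{\ell}\boldsymbol{\ell}}|^{1/2}/|\mathbf{\Lambda}_{\boldsymbol{t}\boldsymbol{t}}|^{1/2}=|\mathbf{\Lambda}_{\boldsymbol{m}\boldsymbol{m}}|^{1/2}$, a product of $d_{\boldsymbol{\ell}}-d_{\boldsymbol{t}}$ shrinkage parameters, and that $|(n\mathbf{\Upsilon}+\mathbf{\Lambda})_{\boldsymbol{t}\boldsymbol{t}}|^{1/2}/|(n\mathbf{\Upsilon}+\mathbf{\Lambda})_{\boldsymbol{\ell}\boldsymbol{\ell}}|^{1/2}$ is the inverse square root of the determinant of the Schur complement of the $\boldsymbol{t}\boldsymbol{t}$ block in $(n\mathbf{\Upsilon}+\mathbf{\Lambda})_{\boldsymbol{\ell}\boldsymbol{\ell}}$. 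By Lemma \ref{Lemma 3} this Schur complement, being a compression of $n\mathbf{\Upsilon}+\mathbf{\Lambda}$, has eigenvalues of order $n$, bounded below by $3Kn\varepsilon_0/4$, so this factor is of order $n^{-(d_{\boldsymbol{\ell}}-d_{\boldsymbol{t}})/2}$.

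The crux is the exponential factor. Via the block-inverse (Schur) identity its exponent equals $\tfrac{n^2}{2}\,\boldsymbol{r}'(n\mathbf{\Upsilon}+\mathbf{\Lambda})_{\boldsymbol{\ell}/\boldsymbol{t}}^{-1}\boldsymbol{r}$, where $\boldsymbol{r}:=\hat{\boldsymbol{a}}_{\boldsymbol{m}}-n\mathbf{\Upsilon}_{\boldsymbol{m}\boldsymbol{t}}(n\mathbf{\Upsilon}+\mathbf{\Lambda})_{\boldsymbol{t}\boldsymbol{t}}^{-1}\hat{\boldsymbol{a}}_{\boldsymbol{t}}$, and is at most $\tfrac{n^2}{2}\|\boldsymbol{r}\|^2/\text{eig}_{\min}(n\mathbf{\Upsilon}+\mathbf{\Lambda})_{\boldsymbol{\ell}/\boldsymbol{t}}$. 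The key point is that $\boldsymbol{r}$ is small. Since $\boldsymbol{t}$ is the true support, $\mathbf{\Theta}^0$ vanishes on $\boldsymbol{m}$; writing $\hat{\boldsymbol{a}}_{\boldsymbol{m}}=(\hat{\boldsymbol{a}}+\mathbf{\Upsilon}\mathbf{\Theta}^0)_{\boldsymbol{m}}-\mathbf{\Upsilon}_{\boldsymbol{m}\boldsymbol{t}}\mathbf{\Theta}^0_{\boldsymbol{t}}$ and $\hat{\boldsymbol{a}}_{\boldsymbol{t}}=(\hat{\boldsymbol{a}}+\mathbf{\Upsilon}\mathbf{\Theta}^0)_{\boldsymbol{t}}-\mathbf{\Upsilon}_{\boldsymbol{t}\boldsymbol{t}}\mathbf{\Theta}^0_{\boldsymbol{t}}$, and using the algebraic identity $n(n\mathbf{\Upsilon}+\mathbf{\Lambda})_{\boldsymbol{t}\boldsymbol{t}}^{-1}\mathbf{\Upsilon}_{\boldsymbol{t}\boldsymbol{t}}=\mathbf{I}-(n\mathbf{\Upsilon}+\mathbf{\Lambda})_{\boldsymbol{t}\boldsymbol{t}}^{-1}\mathbf{\Lambda}_{\boldsymbol{t}\boldsymbol{t}}$, the two leading $\mathbf{\Upsilon}_{\boldsymbol{m}\boldsymbol{t}}\mathbf{\Theta}^0_{\boldsymbol{t}}$ contributions cancel exactly. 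What remains is a combination of $(\hat{\boldsymbol{a}}+\mathbf{\Upsilon}\mathbf{\Theta}^0)$-terms, controlled coordinatewise by $c_0\sqrt{\log p/n}$ through Lemma \ref{Lemma 4}, together with a negligible $O(1/n)$ term of the form $\mathbf{\Upsilon}_{\boldsymbol{m}\boldsymbol{t}}(n\mathbf{\Upsilon}+\mathbf{\Lambda})_{\boldsymbol{t}\boldsymbol{t}}^{-1}\mathbf{\Lambda}_{\boldsymbol{t}\boldsymbol{t}}\mathbf{\Theta}^0_{\boldsymbol{t}}$, bounded through Lemma \ref{Lemma 3} and the boundedness of the true entries. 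Feeding the resulting coordinatewise $O(\sqrt{\log p/n})$ bound on $\boldsymbol{r}$ and the order-$n$ lower bound on $\text{eig}_{\min}$ back in shows the exponent grows at most polynomially in $p$, at rate $O((d_{\boldsymbol{\ell}}-d_{\boldsymbol{t}})\log p)$, so the exponential factor is at most $p^{\,b(d_{\boldsymbol{\ell}}-d_{\boldsymbol{t}})}$ for a fixed constant $b$.

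Assembling the three factors yields a per-added-parameter bound of the shape $PR(\boldsymbol{\ell},\boldsymbol{t})\le\bigl(\,\mathrm{const}\cdot q_1\,n^{-1/2}p^{\,b}\,\bigr)^{d_{\boldsymbol{\ell}}-d_{\boldsymbol{t}}}$; substituting $q_1=p^{-a_2 d_{\boldsymbol{t}}}$ and invoking that the decay rate $a_2$ of Assumption 6 is prescribed large enough that the net exponent of $p$ inside the parentheses is negative, each factor tends to $0$ (the spare $n^{-1/2}$ only helps), so $PR(\boldsymbol{\ell},\boldsymbol{t})\to 0$. \textbf{The main obstacle} is precisely the exponential factor: establishing the exact cancellation that makes the Schur residual $\boldsymbol{r}$ of order $\sqrt{\log p/n}$ rather than $O(1)$, and then verifying that the prescribed prior decay $q_1=p^{-a_2 d_{\boldsymbol{t}}}$ dominates the resulting polynomial-in-$p$ likelihood gain contributed by every spurious coordinate.
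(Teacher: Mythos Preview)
Your overall strategy is sound and closely parallels the paper's: isolate the prior ratio $(q_1/(1-q_1))^{d_{\boldsymbol{\ell}}-d_{\boldsymbol{t}}}$, control the determinant ratio via Lemma~\ref{Lemma 3}, and show the exponential (likelihood) gain is only polynomial in $p$ so that the prior decay of Assumption~6 wins. Your Schur-complement decomposition of the exponential factor and the cancellation yielding
\[
\boldsymbol{r}=(\hat{\boldsymbol{a}}+\mathbf{\Upsilon}\mathbf{\Theta}^0)_{\boldsymbol{m}}
-n\mathbf{\Upsilon}_{\boldsymbol{m}\boldsymbol{t}}(n\mathbf{\Upsilon}+\mathbf{\Lambda})_{\boldsymbol{t}\boldsymbol{t}}^{-1}(\hat{\boldsymbol{a}}+\mathbf{\Upsilon}\mathbf{\Theta}^0)_{\boldsymbol{t}}
-\mathbf{\Upsilon}_{\boldsymbol{m}\boldsymbol{t}}(n\mathbf{\Upsilon}+\mathbf{\Lambda})_{\boldsymbol{t}\boldsymbol{t}}^{-1}\mathbf{\Lambda}_{\boldsymbol{t}\boldsymbol{t}}\mathbf{\Theta}^0_{\boldsymbol{t}}
\]
are correct. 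The paper instead recenters both quadratic forms around $\mathbf{\Theta}^0$ and simply drops the non-negative $\boldsymbol{t}$-part, bounding the exponent by $\tfrac{1}{2}\bigl[(n\mathbf{\Upsilon}+\mathbf{\Lambda})_{\boldsymbol{\ell}\boldsymbol{\ell}}\mathbf{\Theta}^0_{\boldsymbol{\ell}}+n\hat{\boldsymbol{a}}_{\boldsymbol{\ell}}\bigr]'(n\mathbf{\Upsilon}+\mathbf{\Lambda})_{\boldsymbol{\ell}\boldsymbol{\ell}}^{-1}[\cdots]$; each of the $d_{\boldsymbol{\ell}}$ entries is then bounded by $2nc\sqrt{\log p/n}$ via Lemma~\ref{Lemma 4}, giving exponent $\le \tfrac{4c^2}{K\varepsilon_0}d_{\boldsymbol{\ell}}\log p$.

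There is, however, a real gap in your final assembly. You assert a \emph{coordinatewise} $O(\sqrt{\log p/n})$ bound on $\boldsymbol{r}$ and hence an exponent of order $(d_{\boldsymbol{\ell}}-d_{\boldsymbol{t}})\log p$, which would give the clean per-added-parameter bound $\bigl(\mathrm{const}\cdot q_1\,n^{-1/2}p^{\,b}\bigr)^{d_{\boldsymbol{\ell}}-d_{\boldsymbol{t}}}$. But the middle term of $\boldsymbol{r}$ is a matrix of bounded operator norm applied to a vector of $\ell_2$ norm $\sqrt{d_{\boldsymbol{t}}}\,c_0\sqrt{\log p/n}$, so its entries can be as large as $O(\sqrt{d_{\boldsymbol{t}}\log p/n})$, not $O(\sqrt{\log p/n})$. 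The resulting bound on $\|\boldsymbol{r}\|^2$ is $O(d_{\boldsymbol{\ell}}\log p/n)$ (the same as the paper's cruder route), so the exponential factor is $p^{\,b\,d_{\boldsymbol{\ell}}}$, \emph{not} $p^{\,b(d_{\boldsymbol{\ell}}-d_{\boldsymbol{t}})}$. Consequently the product $q_1^{d_{\boldsymbol{\ell}}-d_{\boldsymbol{t}}}p^{\,b\,d_{\boldsymbol{\ell}}}$ does not factor as you wrote, and ``$a_2$ large enough'' is not by itself enough: when $d_{\boldsymbol{\ell}}$ is close to $d_{\boldsymbol{t}}$ the prior penalty has only $d_{\boldsymbol{\ell}}-d_{\boldsymbol{t}}$ factors while the likelihood gain has $d_{\boldsymbol{\ell}}$. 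The paper closes this by the short case split $d_{\boldsymbol{\ell}}>2d_{\boldsymbol{t}}$ versus $d_{\boldsymbol{t}}<d_{\boldsymbol{\ell}}\le 2d_{\boldsymbol{t}}$, using the specific value $q_1=p^{-a_2 d_{\boldsymbol{t}}}$ with $a_2=8c^2/(K\varepsilon_0)$ to show in either case that $q_1^{d_{\boldsymbol{\ell}}-d_{\boldsymbol{t}}}p^{2c^2 d_{\boldsymbol{\ell}}/(K\varepsilon_0)}\le \bigl(p^{-4c^2/(K\varepsilon_0)}\bigr)^{(d_{\boldsymbol{\ell}}-d_{\boldsymbol{t}})/2}$. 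Adding that two-line case analysis to your argument repairs the gap.
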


\begin{proof}
In this case,
\begin{equation*}
\begin{split}
PR(\boldsymbol{\ell}, \boldsymbol{t}) &\leq \frac{\|\mathbf{\Lambda}_{{\boldsymbol{\ell}} {\boldsymbol{\ell}}} \|^{\frac{1}{2}}}{\| \mathbf{\Lambda}_{\boldsymbol{t}\boldsymbol{t}}\|^{\frac{1}{2}}} \frac{\left(2q_1\right)^{d_{\boldsymbol{\ell}} - d_{\boldsymbol{t}}}}{\| \left( n\mathbf{\Upsilon} + \mathbf{\Lambda}\right)_{{\boldsymbol{\ell}} | \boldsymbol{t}}\|^{\frac{1}{2}}}\\
&\exp \left\{ \frac{1}{2} \left[ \left( n\mathbf{\Upsilon + \mathbf{\Lambda}}\right)_{{\boldsymbol{\ell}} {\boldsymbol{\ell}}} \mathbf{\Theta}_{{\boldsymbol{\ell}}}^0 + n{\hat{\boldsymbol{a}}}_{\boldsymbol{\ell}} \right]' \left(n\mathbf{\Upsilon} + \mathbf{\Lambda} \right)_{{\boldsymbol{\ell}} {\boldsymbol{\ell}}}^{-1} \left[ \left( n\mathbf{\Upsilon + \mathbf{\Lambda}}\right)_{{\boldsymbol{\ell}} {\boldsymbol{\ell}}} \mathbf{\Theta}_{{\boldsymbol{\ell}}}^0 + n{\hat{\boldsymbol{a}}}_{\boldsymbol{\ell}} \right] \right\}.
\end{split}
\end{equation*}
Now, we note that
\begin{equation*}
\left( n\mathbf{\Upsilon + \mathbf{\Lambda}}\right)_{{\boldsymbol{\ell}} {\boldsymbol{\ell}}} \mathbf{\Theta}_{{\boldsymbol{\ell}}}^0 + n{\hat{\boldsymbol{a}}}_{\boldsymbol{\ell}} = n\left( \mathbf{\Upsilon} \mathbf{\Theta}^0 + {{\boldsymbol{a}}}\right)_{\boldsymbol{\ell}} + \mathbf{\Lambda}_{{\boldsymbol{\ell}} {\boldsymbol{\ell}}} \mathbf{\Theta}_{{\boldsymbol{\ell}}}^0, + n \left( {\hat{\boldsymbol{a}}} - {{\boldsymbol{a}}}\right)_{\boldsymbol{\ell}}
\end{equation*}
each entry of the above vector in absolute value is smaller than 
\begin{equation*}
nc\sqrt{\frac{\log p}{n}} + \frac{\| \mathbf{\Lambda} \|_{\max}}{\varepsilon}_0 + \frac{3CK^2}{\varepsilon_0} \sqrt{\frac{\log p}{n}}\leq 2nc \sqrt{\frac{\log p}{n}},
\end{equation*}
hence, by Lemma \ref{Lemma 3},
\begin{equation*}
\begin{split}
&\left[ \left( n\mathbf{\Upsilon + \mathbf{\Lambda}}\right)_{{\boldsymbol{\ell}} {\boldsymbol{\ell}}} \mathbf{\Theta}_{{\boldsymbol{\ell}}}^0 + n{\hat{\boldsymbol{a}}}_{\boldsymbol{\ell}} \right]' \left(n\mathbf{\Upsilon} + \mathbf{\Lambda} \right)_{{\boldsymbol{\ell}} {\boldsymbol{\ell}}}^{-1} \left[ \left( n\mathbf{\Upsilon + \mathbf{\Lambda}}\right)_{{\boldsymbol{\ell}} {\boldsymbol{\ell}}} \mathbf{\Theta}_{{\boldsymbol{\ell}}}^0 + n{\hat{\boldsymbol{a}}}_{\boldsymbol{\ell}} \right] \\
& \leq \frac{1}{nK\varepsilon_0} d_{\boldsymbol{\ell}} \frac{4n^2 c^2 \log p}{n} = \frac{4c^2 d_{\boldsymbol{\ell}} \log p}{K\varepsilon_0}.
\end{split}
\end{equation*}
Hence,
\begin{equation}\label{Lemma S6 proof 1}
\begin{split}
PR(\boldsymbol{\ell}, \boldsymbol{t}) &\leq \frac{\|\mathbf{\Lambda}_{{\boldsymbol{\ell}} {\boldsymbol{\ell}}} \|^{\frac{1}{2}}}{\| \mathbf{\Lambda}_{\boldsymbol{t}\boldsymbol{t}}\|^{\frac{1}{2}}} \frac{\left(2q_1\right)^{d_{\boldsymbol{\ell}} - d_t}}{\left(\frac{nK\varepsilon_0}{2}\right)^{\frac{d_{\boldsymbol{\ell}}}{2}}} \exp \left\{\frac{2c^2}{K\varepsilon_0}d_{\boldsymbol{\ell}} \log p \right\} = \frac{2^{d_{\boldsymbol{\ell}}}\|\mathbf{\Lambda}_{{\boldsymbol{\ell}} {\boldsymbol{\ell}}} \|^{\frac{1}{2}}}{\| \mathbf{\Lambda}_{\boldsymbol{t}\boldsymbol{t}}\|^{\frac{1}{2}}} \frac{q_1^{d_{\boldsymbol{\ell}} - d_t} p^{\frac{2c^2 d_{\boldsymbol{\ell}}}{K\varepsilon_0}}}{\left(\frac{nK\varepsilon_0}{2}\right)^{\frac{d_{\boldsymbol{\ell}}}{2}}}.
\end{split} 
\end{equation}
Note that by Assumption 6, for all large $n$, $q_1 = p^{-\frac{8c^2d_{\boldsymbol{t}}}{K\varepsilon_0}} \leq p^{-\frac{8c^2}{K\varepsilon_0}}$. Therefore, if $ d_{\boldsymbol{\ell}} > 2d_{\boldsymbol{t}}$,
\begin{equation}\label{Lemma S6 proof 2}
\begin{split}
q_1^{d_{\boldsymbol{\ell}} - d_t} p^{\frac{2c^2 d_{\boldsymbol{\ell}}}{K\varepsilon_0}} \leq q_1^{\frac{d_{\boldsymbol{\ell}}}{2} } p^{\frac{2c^2 d_{\boldsymbol{\ell}}}{K\varepsilon_0}} \leq \left( p^{\frac{-8c^2 }{K\varepsilon_0}} p^{\frac{4c^2 }{K\varepsilon_0}} \right)^{\frac{d_{\boldsymbol{\ell}}}{2}} \leq \left( p^{\frac{-4c^2 }{K\varepsilon_0}}  \right)^{\frac{d_{\boldsymbol{\ell}}}{2}} \leq \left( p^{\frac{-4c^2 }{K\varepsilon_0}}  \right)^{\frac{d_{\boldsymbol{\ell}} - d_{\boldsymbol{t}}}{2}}
\end{split}
\end{equation}
and, in the case of $d_{\boldsymbol{t}} < d_{\boldsymbol{\ell}} < 2d_{\boldsymbol{t}}$,
\begin{equation} \label{Lemma S6 proof 3}
\begin{split}
q_1^{d_{\boldsymbol{\ell}} - d_t} p^{\frac{2c^2 d_{\boldsymbol{\ell}}}{K\varepsilon_0}} \leq q_1 p^{\frac{2c^2 d_{\boldsymbol{\ell}}}{K\varepsilon_0}} \leq p^{-\frac{8c^2d_{\boldsymbol{t}}}{K\varepsilon_0}} p^{\frac{2c^2 d_{\boldsymbol{\ell}}}{K\varepsilon_0}} \leq p^{-\frac{8c^2d_{\boldsymbol{t}}}{K\varepsilon_0}} p^{\frac{4c^2 d_{\boldsymbol{t}}}{K\varepsilon_0}} \leq p^{-\frac{4c^2d_{\boldsymbol{t}}}{K\varepsilon_0}} \leq \left( p^{\frac{-4c^2 }{K\varepsilon_0}}  \right)^{\frac{d_{\boldsymbol{\ell}} - d_{\boldsymbol{t}}}{2}}
\end{split}
\end{equation}
thus, by (\ref{Lemma S6 proof 1}), (\ref{Lemma S6 proof 1}), and (\ref{Lemma S6 proof 3}) we have that
\begin{equation*}
\begin{split}
PR(\boldsymbol{\ell}, \boldsymbol{t}) &\leq \frac{2^{d_{\boldsymbol{\ell}}}\|\mathbf{\Lambda}_{{\boldsymbol{\ell}} {\boldsymbol{\ell}}} \|^{\frac{1}{2}}}{\| \mathbf{\Lambda}_{\boldsymbol{t}\boldsymbol{t}}\|^{\frac{1}{2}}} \frac{q_1^{d_{\boldsymbol{\ell}} - d_t} p^{\frac{2c^2 d_{\boldsymbol{\ell}}}{K\varepsilon_0}}}{\left(\frac{nK\varepsilon_0}{2}\right)^{\frac{d_{\boldsymbol{\ell}}}{2}}} \leq \frac{2^{\frac{3d_{\boldsymbol{\ell}}}{2}}\|\mathbf{\Lambda}_{{\boldsymbol{\ell}} {\boldsymbol{\ell}}} \|^{\frac{1}{2}}}{\| \mathbf{\Lambda}_{\boldsymbol{t}\boldsymbol{t}}\|^{\frac{1}{2}}} \left( \frac{ p^{\frac{-4c^2 }{K\varepsilon_0}}}{nK\varepsilon_0} \right)^{\frac{d_{\boldsymbol{\ell}} - d_{\boldsymbol{t}}}{2}} \to 0 \quad \text{as} \quad n \to 0.
\end{split} 
\end{equation*}
\end{proof}

\begin{Lemma} \label{Lemma 7}
Suppose ${\boldsymbol{\ell}} \supset \boldsymbol{t}$, and $d_{\boldsymbol{\ell}} > \tau_n$ then, under Assumptions 1-6,
\begin{equation*}
PR({\boldsymbol{\ell}}, \boldsymbol{t}) \to 0, \quad \quad \quad \text{as} \quad n \to \infty . 
\end{equation*}
\end{Lemma}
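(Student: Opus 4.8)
The plan is to follow exactly the blueprint of the proof of Lemma \ref{Lemma 6}, but to replace the eigenvalue control supplied by Lemma \ref{Lemma 3} (which is no longer available once $d_{\boldsymbol{\ell}} > \tau_n$) by the universal, dimension-free lower bound coming from the positive semi-definiteness of $\mathbf{\Upsilon}$, and then to let the much faster prior decay rate $q_2$ absorb the resulting weaker likelihood bound. Write $M_{\boldsymbol{\ell}} := (n\mathbf{\Upsilon}+\mathbf{\Lambda})_{\boldsymbol{\ell}\boldsymbol{\ell}}$ and $M_{\boldsymbol{t}} := (n\mathbf{\Upsilon}+\mathbf{\Lambda})_{\boldsymbol{t}\boldsymbol{t}}$. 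Starting from the expression for $PR(\boldsymbol{\ell},\boldsymbol{t})$ in Lemma \ref{Lemma 1 PR}, I would use that $d_{\boldsymbol{\ell}} > \tau_n$ forces the dense branch of the prior (\ref{hierarchical prior 2}), so the numerator of the prior ratio carries $q_2^{d_{\boldsymbol{\ell}}}(1-q_2)^{\binom{p}{2}-d_{\boldsymbol{\ell}}}$, while $d_{\boldsymbol{t}} < \tau_n$ keeps $\boldsymbol{t}$ in the sparse branch with $q_1$. As in Lemma \ref{Lemma 6}, I factor the determinant ratio through the Schur complement, $|M_{\boldsymbol{\ell}}| = |M_{\boldsymbol{t}}|\,|M_{\boldsymbol{\ell}|\boldsymbol{t}}|$, so that the ratio of the two determinant terms in Lemma \ref{Lemma 1 PR} collapses to $|M_{\boldsymbol{\ell}|\boldsymbol{t}}|^{-1/2}$.

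For the exponential factor I would reuse the algebraic identity from the proof of Lemma \ref{Lemma 6}: since $\boldsymbol{t}\subset\boldsymbol{\ell}$ and $\mathbf{\Theta}^0$ is supported on $\boldsymbol{t}$, one has $M_{\boldsymbol{\ell}}\mathbf{\Theta}^0_{\boldsymbol{\ell}} + n\hat{\boldsymbol{a}}_{\boldsymbol{\ell}} = n(\mathbf{\Upsilon}\mathbf{\Theta}^0 + \boldsymbol{a})_{\boldsymbol{\ell}} + \mathbf{\Lambda}_{\boldsymbol{\ell}\boldsymbol{\ell}}\mathbf{\Theta}^0_{\boldsymbol{\ell}} + n(\hat{\boldsymbol{a}}-\boldsymbol{a})_{\boldsymbol{\ell}} =: \boldsymbol{z}_{\boldsymbol{\ell}}$, and by Lemma \ref{Lemma 4}, the boundedness of $\mathbf{\Lambda}\mathbf{\Theta}^0$, and Assumption 1, every coordinate of $\boldsymbol{z}_{\boldsymbol{\ell}}$ is at most $2c\sqrt{n\log p}$ in absolute value for large $n$; the exponential ratio is then dominated by $\exp\{\tfrac12\boldsymbol{z}_{\boldsymbol{\ell}}' M_{\boldsymbol{\ell}}^{-1}\boldsymbol{z}_{\boldsymbol{\ell}}\}$. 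The one genuine departure from Lemma \ref{Lemma 6} occurs here: I cannot invoke Lemma \ref{Lemma 3} to make $\text{eig}_{\min}(M_{\boldsymbol{\ell}})$ a constant multiple of $n$, because $d_{\boldsymbol{\ell}}$ exceeds $\tau_n$. Instead I observe that $\mathbf{\Upsilon}$ is positive semi-definite, being the $\mathbf{\Theta}$-block of the manifestly nonnegative quadratic form in (\ref{Theta'UpsilonTheta}), so that $M_{\boldsymbol{\ell}} \succeq \mathbf{\Lambda}_{\boldsymbol{\ell}\boldsymbol{\ell}} \succeq \lambda I$ and hence $\text{eig}_{\min}(M_{\boldsymbol{\ell}}) \geq \lambda$. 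This gives $\tfrac12\boldsymbol{z}_{\boldsymbol{\ell}}'M_{\boldsymbol{\ell}}^{-1}\boldsymbol{z}_{\boldsymbol{\ell}} \leq \|\boldsymbol{z}_{\boldsymbol{\ell}}\|^2/(2\lambda) \leq 2c^2 n\, d_{\boldsymbol{\ell}}\log p/\lambda$, i.e. a likelihood exponent of order $n\, d_{\boldsymbol{\ell}}\log p$ rather than the order $d_{\boldsymbol{\ell}}\log p$ available in the sparse regime.

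It then remains to show the prior penalty overwhelms this exponent. By Assumption 6, $q_2^{d_{\boldsymbol{\ell}}} = p^{-a_3 n d_{\boldsymbol{\ell}}} = \exp(-\tfrac{16c^2}{\lambda}\,n\, d_{\boldsymbol{\ell}}\log p)$, while the remaining factors are harmless: operator-monotonicity of the Schur complement together with $M_{\boldsymbol{\ell}} \succeq \mathbf{\Lambda}_{\boldsymbol{\ell}\boldsymbol{\ell}}$ gives $|M_{\boldsymbol{\ell}|\boldsymbol{t}}| \geq \lambda^{d_{\boldsymbol{\ell}}-d_{\boldsymbol{t}}}$, so the ratio $|\mathbf{\Lambda}_{\boldsymbol{\ell}\boldsymbol{\ell}}|^{1/2}/|M_{\boldsymbol{\ell}|\boldsymbol{t}}|^{1/2}$ is at most $\exp(O(d_{\boldsymbol{\ell}}))$, and the sparse-branch normalizer $q_1^{d_{\boldsymbol{t}}}(1-q_1)^{\binom{p}{2}-d_{\boldsymbol{t}}}$ in the denominator does not vanish fast enough to matter. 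Collecting terms, $PR(\boldsymbol{\ell},\boldsymbol{t}) \leq \exp\{(\tfrac{2c^2}{\lambda}-\tfrac{16c^2}{\lambda})\,n\, d_{\boldsymbol{\ell}}\log p + O(d_{\boldsymbol{\ell}})\} = \exp\{-\tfrac{14c^2}{\lambda}\,n\, d_{\boldsymbol{\ell}}\log p + O(d_{\boldsymbol{\ell}})\} \to 0$ as $n\to\infty$, which is why the choice $a_3 = 16c^2/\lambda$ in Assumption 6 is calibrated precisely to let $q_2$ defeat the super-dense likelihood gain. The main obstacle, and the only new point relative to Lemma \ref{Lemma 6}, is exactly the loss of the $O(n)$ eigenvalue floor; the resolution is to trade it for the constant floor $\lambda$ furnished by $\mathbf{\Upsilon}\succeq 0$ and to rely on the deliberately accelerated rate $q_2$ (rather than $q_1$) to carry the argument through.
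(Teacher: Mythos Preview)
Your argument is correct and tracks the paper's proof almost step for step: both recognize that Lemma \ref{Lemma 3} is unavailable once $d_{\boldsymbol{\ell}}>\tau_n$, fall back to the dimension-free floor $\text{eig}_{\min}(n\mathbf{\Upsilon}+\mathbf{\Lambda})_{\boldsymbol{\ell}\boldsymbol{\ell}}\geq\|\mathbf{\Lambda}\|_{\min}$ supplied by $\mathbf{\Upsilon}\succeq 0$, accept the resulting likelihood exponent of order $n\,d_{\boldsymbol{\ell}}\log p$, and let the accelerated decay $q_2=p^{-a_3 n}$ defeat it.

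The only substantive difference is how the prior ratio is handled. You retain the full $q_2^{d_{\boldsymbol{\ell}}}$ and dismiss the denominator $q_1^{d_{\boldsymbol{t}}}(1-q_1)^{\binom{p}{2}-d_{\boldsymbol{t}}}$ as negligible. That is ultimately true, but note that $1/q_1^{d_{\boldsymbol{t}}}=p^{a_2 d_{\boldsymbol{t}}^2}=\exp\{o(n)\}$, which is \emph{not} $O(d_{\boldsymbol{\ell}})$ as your final display asserts; the conclusion survives only because $n\,d_{\boldsymbol{\ell}}\log p\geq n\tau_n\log p\gg n$, so your remainder term is understated though harmless. The paper avoids this loose end by rewriting the prior ratio as
\[
\Bigl(\tfrac{q_2}{1-q_2}\Bigr)^{d_{\boldsymbol{\ell}}-d_{\boldsymbol{t}}}\cdot\frac{q_2^{d_{\boldsymbol{t}}}(1-q_2)^{\binom{p}{2}-d_{\boldsymbol{t}}}}{q_1^{d_{\boldsymbol{t}}}(1-q_1)^{\binom{p}{2}-d_{\boldsymbol{t}}}},
\]
observing that the second factor is at most $1$ since $q\mapsto q^{d_{\boldsymbol{t}}}(1-q)^{\binom{p}{2}-d_{\boldsymbol{t}}}$ is increasing on $[0,d_{\boldsymbol{t}}/\binom{p}{2}]$ and $q_2<q_1<d_{\boldsymbol{t}}/\binom{p}{2}$, and then using $d_{\boldsymbol{\ell}}-d_{\boldsymbol{t}}\geq d_{\boldsymbol{\ell}}/2$ (from $d_{\boldsymbol{t}}<\tau_n/2$) to keep $(2q_2)^{d_{\boldsymbol{\ell}}/2}$ as the effective penalty. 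This is a cleaner way to close the same argument.
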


\begin{proof}
When ${\boldsymbol{\ell}} \supset t$, 
\begin{equation*}
\begin{split}
PR\left({\boldsymbol{\ell}}, \boldsymbol{t}\right) &\leq \frac{\| \mathbf{\Lambda}_{{\boldsymbol{\ell}} {\boldsymbol{\ell}}}\|^{\frac{1}{2}}}{\| \mathbf{\Lambda}_{tt}\|^{\frac{1}{2}}} \frac{q_2^{d_{\boldsymbol{\ell}}}\left(1-q_2\right)^{\binom{p}{2} - d_{\boldsymbol{\ell}}}}{q_1^{d_{\boldsymbol{t}}}\left(1-q_1\right)^{\binom{p}{2} - d_{\boldsymbol{t}}}} \frac{1}{\| \left(n \mathbf{\Upsilon} + \mathbf{\Lambda}\right)_{{\boldsymbol{\ell}} | \boldsymbol{t}}\|^{\frac{1}{2}}} \\
&\exp \left\{ \frac{1}{2} \left[ \left( n\mathbf{\Upsilon + \mathbf{\Lambda}}\right)_{{\boldsymbol{\ell}} {\boldsymbol{\ell}}} \mathbf{\Theta}_{{\boldsymbol{\ell}}}^0 + n{\hat{\boldsymbol{a}}}_{\boldsymbol{\ell}} \right]' \left(n\mathbf{\Upsilon} + \mathbf{\Lambda} \right)_{{\boldsymbol{\ell}} {\boldsymbol{\ell}}}^{-1} \left[ \left( n\mathbf{\Upsilon + \mathbf{\Lambda}}\right)_{{\boldsymbol{\ell}} {\boldsymbol{\ell}}} \mathbf{\Theta}_{{\boldsymbol{\ell}}}^0 + n{\hat{\boldsymbol{a}}}_{\boldsymbol{\ell}} \right] \right\},
\end{split}
\end{equation*}
similar to the argument in Lemma \ref{Lemma 6}, each entry of the vector $ \left( n\mathbf{\Upsilon + \mathbf{\Lambda}}\right)_{{\boldsymbol{\ell}} {\boldsymbol{\ell}}} \mathbf{\Theta}_{{\boldsymbol{\ell}}}^0 + n{\hat{\boldsymbol{a}}}_{\boldsymbol{\ell}}$, in absolute value, is smaller than $2nc \sqrt{\frac{\log p}{n}}$. Now, since $\mathbf{\Upsilon}$ is non-negative definite (note that in the case of $d_{\boldsymbol{\ell}} > \tau_n$, $\mathbf{\Upsilon}$ is not necessarily positive definite) we have that $\text{eig}_{\min}\left( n\mathbf{\Upsilon} + \mathbf{\Lambda}\right)_{{\boldsymbol{\ell}}{\boldsymbol{\ell}}} \geq \text{eig}_{\min}\left(\mathbf{\Lambda} \right) = \| \mathbf{\Lambda}\|_{\min}$, hence for large enough $n$
\begin{equation*}
\begin{split}
PR\left({\boldsymbol{\ell}}, \boldsymbol{t} \right) &\leq \frac{\| \mathbf{\Lambda}_{{\boldsymbol{\ell}} {\boldsymbol{\ell}}}\|^{\frac{1}{2}}}{\| \mathbf{\Lambda}_{\boldsymbol{t}\boldsymbol{t}}\|^{\frac{1}{2}}} \left( \frac{q_2}{1-q_2}\right)^{d_{\boldsymbol{\ell}} - d_{\boldsymbol{t}}} \frac{q_2^{d_{\boldsymbol{t}}}\left(1-q_2\right)^{\binom{p}{2} - d_t}}{q_1^{d_{\boldsymbol{t}}}\left(1-q_1\right)^{\binom{p}{2} - d_{\boldsymbol{t}}}} \exp \left\{ \frac{4 c^2 n^2 d_{\boldsymbol{\ell}} \log p}{n \|\mathbf{\Lambda} \|_{\min}}\right\},
\end{split}
\end{equation*}
now, since the function $q^{d_{\boldsymbol{t}}}(1-q)^{\binom{p}{2} - d_{\boldsymbol{t}}}$ is globally maximized at $\hat{q}=\frac{d_{\boldsymbol{t}}}{\binom{p}{2}}$ and $q_2 < q_1 < \hat{q}$,
\begin{equation*}
\frac{q_2^{d_{\boldsymbol{t}}}\left(1-q_2\right)^{\binom{p}{2} - d_{\boldsymbol{t}}}}{q_1^{d_{\boldsymbol{t}}}\left(1-q_1\right)^{\binom{p}{2} - d_{\boldsymbol{t}}}} \leq 1,
\end{equation*}
hence,
\begin{equation*}
\begin{split}
PR\left({\boldsymbol{\ell}}, \boldsymbol{t} \right) &\leq \frac{\| \mathbf{\Lambda}_{{\boldsymbol{\ell}} {\boldsymbol{\ell}}}\|^{\frac{1}{2}}}{\| \mathbf{\Lambda}_{\boldsymbol{t}\boldsymbol{t}}\|^{\frac{1}{2}}} \left( 2q_2\right)^{d_{\boldsymbol{\ell}} - d_{\boldsymbol{t}}}  \exp \left\{ \frac{4 c^2 n d_{\boldsymbol{\ell}} \log p}{ \|\mathbf{\Lambda} \|_{\min}}\right\},
\end{split}
\end{equation*}
since $d_{\boldsymbol{\ell}} > \tau$, and by Assumption 2 $d_{\boldsymbol{t}} < \frac{\tau}{2}$ , we have that,
\begin{equation*}
d_{\boldsymbol{\ell}} - d_{\boldsymbol{t}} \geq \frac{d_{\boldsymbol{\ell}}}{2},
\end{equation*}
hence,
\begin{equation*}
\begin{split}
PR\left({\boldsymbol{\ell}}, \boldsymbol{t} \right) & \leq \frac{\| \mathbf{\Lambda}_{{\boldsymbol{\ell}} {\boldsymbol{\ell}}}\|^{\frac{1}{2}}}{\| \mathbf{\Lambda}_{\boldsymbol{t}\boldsymbol{t}}\|^{\frac{1}{2}}} \left( 2q_2\right)^{\frac{d_{\boldsymbol{\ell}}}{2}}  \exp \left\{ \frac{4 c^2 n d_{\boldsymbol{\ell}} \log p}{ \|\mathbf{\Lambda} \|_{\min}}\right\}\\
& \leq \frac{2^{\frac{d_{\boldsymbol{\ell}}}{2}}\| \mathbf{\Lambda}_{{\boldsymbol{\ell}} {\boldsymbol{\ell}}}\|^{\frac{1}{2}}}{\| \mathbf{\Lambda}_{\boldsymbol{t}\boldsymbol{t}}\|^{\frac{1}{2}}} \left[ q_2  \exp \left\{ \frac{8 c^2 n \log p }{\|\mathbf{\Lambda} \|_{\min}}\right\}\right]^{\frac{d_{\boldsymbol{\ell}}}{2}} \\
& \leq \frac{2^{\frac{d_{\boldsymbol{\ell}}}{2}}\| \mathbf{\Lambda}_{{\boldsymbol{\ell}} {\boldsymbol{\ell}}}\|^{\frac{1}{2}}}{\| \mathbf{\Lambda}_{\boldsymbol{t}\boldsymbol{t}}\|^{\frac{1}{2}}} \left[  p^{-\frac{16nc^2}{\|\mathbf{\Lambda}\|_{\min}}} p^{\frac{8nc^2}{\|\mathbf{\Lambda}\|_{\min}}} \right]^{\frac{d_{\boldsymbol{\ell}}}{2}}\\
& \leq \frac{2^{\frac{d_{\boldsymbol{\ell}}}{2}}\| \mathbf{\Lambda}_{{\boldsymbol{\ell}} {\boldsymbol{\ell}}}\|^{\frac{1}{2}}}{\| \mathbf{\Lambda}_{\boldsymbol{t}\boldsymbol{t}}\|^{\frac{1}{2}}} \left[  p^{-\frac{8nc^2}{\|\mathbf{\Lambda}\|_{\min}}} \right]^{\frac{d_{\boldsymbol{\ell}}}{2}} \\
&\leq \frac{2^{\frac{d_{\boldsymbol{\ell}}}{2}}\| \mathbf{\Lambda}_{{\boldsymbol{\ell}} {\boldsymbol{\ell}}}\|^{\frac{1}{2}}}{\| \mathbf{\Lambda}_{\boldsymbol{t}\boldsymbol{t}}\|^{\frac{1}{2}}} \left[  p^{-\frac{8nc^2}{\|\mathbf{\Lambda}\|_{\min}}} \right]^{\frac{d_{\boldsymbol{\ell}}- d_{\boldsymbol{t}}}{2}}\to 0, \quad \quad \text{as} \quad n \to \infty.
\end{split}
\end{equation*}

\end{proof}
Now, let,
\begin{equation}\label{f_n}
\begin{split}
f_n = &\max\limits_{\boldsymbol{\ell}\in \mathcal{L}}\left\{ 2\frac{\| \mathbf{\Lambda}_{{\boldsymbol{\ell}} {\boldsymbol{\ell}}}\|^{\frac{1}{2}}}{\| \mathbf{\Lambda}_{\boldsymbol{t}\boldsymbol{t}}\|^{\frac{1}{2}}}, 2^{\frac{3d_{\boldsymbol{\ell}}}{2}}\frac{\| \mathbf{\Lambda}_{{\boldsymbol{\ell}} {\boldsymbol{\ell}}}\|^{\frac{1}{2}}}{\| \mathbf{\Lambda}_{\boldsymbol{t}\boldsymbol{t}}\|^{\frac{1}{2}}}, 2^{3\frac{3d_{\boldsymbol{\ell}}}{2}}\frac{\| \mathbf{\Lambda}_{{\boldsymbol{\ell}} {\boldsymbol{\ell}}}\|^{\frac{1}{2}}}{\| \mathbf{\Lambda}_{\boldsymbol{t}\boldsymbol{t}}\|^{\frac{1}{2}}} \right\} \\
\times &\max \left\{  \left( \frac{p^{- \frac{8c^2 d_{\boldsymbol{t}}}{K\varepsilon_0}}}{\sqrt{n}}  \right), \left(  p^{-\frac{8nc^2}{\|\mathbf{\Lambda}\|_{\min}}} \right)^{\frac{1}{2}},  \left( \frac{ p^{\frac{-4c^2 }{K\varepsilon_0}}}{nK\varepsilon_0} \right)^{\frac{1}{2}}\right\}.
\end{split}
\end{equation}
\begin{Lemma}\label{Lemma 8}
Let $\boldsymbol{\ell} \in \mathcal{L}$ such that $\boldsymbol{\ell} \not\subset \boldsymbol{t}$, $\boldsymbol{t} \not\subset \boldsymbol{\ell}$, and $\boldsymbol{\ell} \neq \boldsymbol{t}$, then under Assumptions 1-6, for sufficiently large n, 
\begin{equation*}
PR({\boldsymbol{\ell}}, \boldsymbol{t}) \to 0, \quad \quad \quad \text{as} \quad n \to \infty . 
\end{equation*}
\end{Lemma}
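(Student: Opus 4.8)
The plan is to reduce the non-inclusive case to the underfitted and overfitted cases already handled in Lemmas \ref{Lemma 5}, \ref{Lemma 6} and \ref{Lemma 7} by inserting the union model $\boldsymbol{u}=\boldsymbol{\ell}\cup\boldsymbol{t}$ as an intermediate. Since $PR$ is a ratio of posterior probabilities it telescopes,
\[
PR(\boldsymbol{\ell},\boldsymbol{t})=\frac{P\{\boldsymbol{\ell}\mid\hat{\mathbf{\Delta}},\mathcal{Y}\}}{P\{\boldsymbol{u}\mid\hat{\mathbf{\Delta}},\mathcal{Y}\}}\cdot\frac{P\{\boldsymbol{u}\mid\hat{\mathbf{\Delta}},\mathcal{Y}\}}{P\{\boldsymbol{t}\mid\hat{\mathbf{\Delta}},\mathcal{Y}\}}=PR(\boldsymbol{\ell},\boldsymbol{u})\cdot PR(\boldsymbol{u},\boldsymbol{t}).
\]
Because $\boldsymbol{\ell}\not\subseteq\boldsymbol{t}$ and $\boldsymbol{t}\not\subseteq\boldsymbol{\ell}$, both inclusions $\boldsymbol{\ell}\subsetneq\boldsymbol{u}$ and $\boldsymbol{t}\subsetneq\boldsymbol{u}$ are strict, so each factor is a genuine comparison of one of the two types treated previously, and I would bound them separately.

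For the second factor $\boldsymbol{u}\supsetneq\boldsymbol{t}$, so $PR(\boldsymbol{u},\boldsymbol{t})$ is an overfitted ratio and I would apply Lemma \ref{Lemma 6} when $d_{\boldsymbol{u}}<\tau_n$ and Lemma \ref{Lemma 7} when $d_{\boldsymbol{u}}>\tau_n$. Those arguments go through verbatim with $\boldsymbol{u}$ in place of the dense model, since they rely only on $\boldsymbol{t}\subseteq\boldsymbol{u}$ (so that $\mathbf{\Theta}^0$ has no coordinates outside $\boldsymbol{u}$ and Lemma \ref{Lemma 4} controls $n(\mathbf{\Upsilon}\mathbf{\Theta}^0+\hat{\boldsymbol{a}})_{\boldsymbol{u}}$), on the positive semidefiniteness of $\mathbf{\Upsilon}$, and on the prior penalties $q_1,q_2$; this yields $PR(\boldsymbol{u},\boldsymbol{t})\to 0$, dominated by the overfitted/super-overfitted contributions collected in $f_n$. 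For the first factor $\boldsymbol{\ell}\subsetneq\boldsymbol{u}$ with $\boldsymbol{u}\supseteq\boldsymbol{t}$, so $PR(\boldsymbol{\ell},\boldsymbol{u})$ is underfitted: passing from $\boldsymbol{u}$ down to $\boldsymbol{\ell}$ deletes exactly the coordinates $\boldsymbol{u}\setminus\boldsymbol{\ell}=\boldsymbol{t}\setminus\boldsymbol{\ell}$, which are true edges and therefore carry signal of magnitude at least $s_n$, whence $\|\mathbf{\Theta}^0_{\boldsymbol{u}\setminus\boldsymbol{\ell}}\|\ge s_n\sqrt{d_{\boldsymbol{u}}-d_{\boldsymbol{\ell}}}$ with $d_{\boldsymbol{u}}-d_{\boldsymbol{\ell}}\ge 1$. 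I would then repeat the estimate of Lemma \ref{Lemma 5} with $\boldsymbol{u}$ acting as the reference model: since $\boldsymbol{u}\supseteq\boldsymbol{t}$ forces $\mathbf{\Theta}^0_{\boldsymbol{u}^c}=0$, the identity $(\mathbf{\Upsilon}\mathbf{\Theta}^0)_{\boldsymbol{u}}=\mathbf{\Upsilon}_{\boldsymbol{u}\boldsymbol{u}}\mathbf{\Theta}^0_{\boldsymbol{u}}$ holds, Lemma \ref{Lemma 4}'s $\ell_\infty$ bound applies to the $\boldsymbol{u}$-block, and the same chain of triangle inequalities bounds the relevant norm below by $\tfrac14\varepsilon_0 s_n$, producing the Lemma \ref{Lemma 5} decay and $PR(\boldsymbol{\ell},\boldsymbol{u})\to 0$.

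Combining the two estimates gives $PR(\boldsymbol{\ell},\boldsymbol{t})\le f_n^2\,(\le f_n$ for large $n)$, each factor being dominated by one of the pieces assembled in the definition of $f_n$, and since $f_n\to 0$ the conclusion follows. The step I expect to be the main obstacle is the underfitted factor when the union is dense, $d_{\boldsymbol{u}}>\tau_n$, because there Lemma \ref{Lemma 3} no longer guarantees positive definiteness of $\mathbf{\Upsilon}_{\boldsymbol{u}\boldsymbol{u}}$ and the eigenvalue bounds underpinning the Lemma \ref{Lemma 5} estimate are unavailable. I would resolve this by a sub-case split. When $d_{\boldsymbol{\ell}}<\tau_n$, then $d_{\boldsymbol{u}}\le d_{\boldsymbol{\ell}}+d_{\boldsymbol{t}}<\tfrac32\tau_n$ since $d_{\boldsymbol{t}}=o(\tau_n)$ by Assumption 2, and the proof of Lemma \ref{Lemma 3} still gives $\text{eig}_{\min}(\mathbf{\Upsilon}_{\boldsymbol{u}\boldsymbol{u}})\ge\tfrac58 K\varepsilon_0>0$, so the argument above applies with only the constants changed. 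When instead $d_{\boldsymbol{\ell}}>\tau_n$, I would bound $PR(\boldsymbol{\ell},\boldsymbol{t})$ directly, using positive semidefiniteness of $\mathbf{\Upsilon}$ (so that $(n\mathbf{\Upsilon}+\mathbf{\Lambda})_{\boldsymbol{\ell}\boldsymbol{\ell}}^{-1}\preceq\|\mathbf{\Lambda}\|_{\min}^{-1}I$) together with the exponentially small super-dense prior weight $q_2^{d_{\boldsymbol{\ell}}}=p^{-a_3 n d_{\boldsymbol{\ell}}}$, exactly as in Lemma \ref{Lemma 7}, which overwhelms any $\exp\{O(n d_{\boldsymbol{\ell}})\}$ arising from the quadratic form once $\log p$ is large.
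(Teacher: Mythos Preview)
Your decomposition is genuinely different from the paper's: the paper telescopes through the \emph{intersection} $\boldsymbol{h}=\boldsymbol{\ell}\cap\boldsymbol{t}$, writing $PR(\boldsymbol{\ell},\boldsymbol{t})=PR(\boldsymbol{\ell},\boldsymbol{h})\cdot PR(\boldsymbol{h},\boldsymbol{t})$, whereas you go through the \emph{union} $\boldsymbol{u}=\boldsymbol{\ell}\cup\boldsymbol{t}$. In the paper's route the ``clean'' factor is the underfitted one, $PR(\boldsymbol{h},\boldsymbol{t})$ with $\boldsymbol{h}\subset\boldsymbol{t}$ (Lemma~\ref{Lemma 5} applies verbatim), and the overfitted factor $PR(\boldsymbol{\ell},\boldsymbol{h})$ is handled by Lemmas~\ref{Lemma 6}/\ref{Lemma 7}. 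In your route the clean factor is the overfitted one, $PR(\boldsymbol{u},\boldsymbol{t})$ with $\boldsymbol{t}\subset\boldsymbol{u}$, and the underfitted factor $PR(\boldsymbol{\ell},\boldsymbol{u})$ is what needs adaptation. The practical advantage of the intersection route is that $d_{\boldsymbol{h}}\le d_{\boldsymbol{t}}<\tau_n$ automatically, so the eigenvalue control of Lemma~\ref{Lemma 3} is always available and no sub-case analysis on the size of the intermediate model is needed; the whole proof is three lines. Your union route forces you to worry about $d_{\boldsymbol{u}}$ possibly exceeding $\tau_n$, which is exactly the complication you spend most of the proposal on. When $d_{\boldsymbol{\ell}}<\tau_n$ your argument is fine: $d_{\boldsymbol{u}}\le d_{\boldsymbol{\ell}}+d_{\boldsymbol{t}}<\tfrac32\tau_n$, the Lemma~\ref{Lemma 3} bound survives with a slightly worse constant, and the Lemma~\ref{Lemma 5} chain goes through with $\boldsymbol{u}$ in place of $\boldsymbol{t}$ because $\boldsymbol{u}\setminus\boldsymbol{\ell}=\boldsymbol{t}\setminus\boldsymbol{\ell}$ carries signal and $\mathbf{\Theta}^0$ is supported in $\boldsymbol{u}$.

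There is, however, a real gap in your $d_{\boldsymbol{\ell}}>\tau_n$ sub-case. You propose to bound $PR(\boldsymbol{\ell},\boldsymbol{t})$ ``exactly as in Lemma~\ref{Lemma 7}'', asserting that the quadratic form contributes only $\exp\{O(nd_{\boldsymbol{\ell}})\}$. But the entrywise bound $|(n\mathbf{\Upsilon}+\mathbf{\Lambda})_{\boldsymbol{\ell}\boldsymbol{\ell}}\mathbf{\Theta}^0_{\boldsymbol{\ell}}+n\hat{\boldsymbol{a}}_{\boldsymbol{\ell}}|\le 2nc\sqrt{\log p/n}$ used in Lemma~\ref{Lemma 7} relies on $\boldsymbol{t}\subset\boldsymbol{\ell}$, via the identity $\mathbf{\Upsilon}_{\boldsymbol{\ell}\boldsymbol{\ell}}\mathbf{\Theta}^0_{\boldsymbol{\ell}}=(\mathbf{\Upsilon}\mathbf{\Theta}^0)_{\boldsymbol{\ell}}$. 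In the non-inclusive case $\boldsymbol{t}\not\subset\boldsymbol{\ell}$, so that identity fails and one picks up the extra term $-n\mathbf{\Upsilon}_{\boldsymbol{\ell},\,\boldsymbol{t}\setminus\boldsymbol{\ell}}\mathbf{\Theta}^0_{\boldsymbol{t}\setminus\boldsymbol{\ell}}$, which is $O(n)$ entrywise; the resulting quadratic form is then $O(n^2 d_{\boldsymbol{\ell}}/\|\mathbf{\Lambda}\|_{\min})$ rather than $O(nd_{\boldsymbol{\ell}}\log p/\|\mathbf{\Lambda}\|_{\min})$, and $q_2^{d_{\boldsymbol{\ell}}}=p^{-a_3 n d_{\boldsymbol{\ell}}}$ does \emph{not} overwhelm $\exp\{O(n^2 d_{\boldsymbol{\ell}})\}$ unless $\log p\gg n$. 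If you wish to salvage the union route here, you should not abandon the telescoping: keep $PR(\boldsymbol{\ell},\boldsymbol{t})=PR(\boldsymbol{\ell},\boldsymbol{u})\cdot PR(\boldsymbol{u},\boldsymbol{t})$, note that $Q_{\boldsymbol{\ell}}-Q_{\boldsymbol{u}}\le 0$ so the exponential in $PR(\boldsymbol{\ell},\boldsymbol{u})$ is harmless, and then check that the large prior factor $q_2^{-(d_{\boldsymbol{u}}-d_{\boldsymbol{\ell}})}=q_2^{-(d_{\boldsymbol{t}}-d_{\boldsymbol{h}})}$ is beaten by the Lemma~\ref{Lemma 7} bound on $PR(\boldsymbol{u},\boldsymbol{t})$, which has exponent $(d_{\boldsymbol{u}}-d_{\boldsymbol{t}})/2=(d_{\boldsymbol{\ell}}-d_{\boldsymbol{h}})/2\gg d_{\boldsymbol{t}}-d_{\boldsymbol{h}}$ since $d_{\boldsymbol{\ell}}>\tau_n$ and $d_{\boldsymbol{t}}=o(\tau_n)$. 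Either that, or switch to the paper's intersection decomposition, which sidesteps the issue entirely.
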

\begin{proof}
Denote the shared part between $\boldsymbol{\ell}$ and $\boldsymbol{t}$ by $\boldsymbol{h} = \boldsymbol{\ell} \cap \boldsymbol{t}$. Then,
\begin{equation*}
PR({\boldsymbol{\ell}}, \boldsymbol{t}) = PR({\boldsymbol{\ell}},\boldsymbol{h})\times PR(\boldsymbol{h}, {\boldsymbol{t}}),
\end{equation*}
Since, $\boldsymbol{h} \subset \boldsymbol{\ell}$ and $\boldsymbol{h} \subset \boldsymbol{t}$, by Lemma \ref{Lemma 5}, \ref{Lemma 6}, and \ref{Lemma 7}, we have that 
\begin{equation*}
\begin{split}
PR({\boldsymbol{\ell}}, \boldsymbol{t}) &\leq f_n^{d_{\boldsymbol{\ell}} - d_{\boldsymbol{h}}} f_n^{d_{\boldsymbol{t}} - d_{\boldsymbol{h}}} = f_{n}^{d_{\boldsymbol{\ell}} + d_{\boldsymbol{t}} - 2 d_{\boldsymbol{h}}} \to 0, \quad \text{as} \quad n \to \infty.
\end{split}
\end{equation*}
\end{proof}

\begin{Corollary}
Let $\boldsymbol{\ell} \in \mathcal{L}$ such that $\boldsymbol{\ell} \not\subset \boldsymbol{t}$, $\boldsymbol{t} \not\subset \boldsymbol{\ell}$, and $\boldsymbol{\ell} \neq \boldsymbol{t}$. Denote the total number of disagreements by $D(\boldsymbol{\ell}, \boldsymbol{t})$. Then, under Assumptions 1-6, for sufficiently large n,
\begin{equation*}
PR(\boldsymbol{\ell}, \boldsymbol{t}) \leq f_n^{D(\boldsymbol{\ell}, \boldsymbol{t})}, \quad \forall \boldsymbol{\ell} \in \mathcal{L}.
\end{equation*}
\end{Corollary}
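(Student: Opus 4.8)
The plan is to read the corollary as a uniform repackaging of Lemmas \ref{Lemma 5}--\ref{Lemma 8}, with the common exponent identified as the Hamming distance $D(\boldsymbol{\ell},\boldsymbol{t})$ between the two sparsity patterns. First I would record the key combinatorial fact: because every block $\boldsymbol{\theta}_{ij}$ of a pattern in $\mathcal{L}$ carries at most one nonzero coordinate, the intersection pattern $\boldsymbol{h}:=\boldsymbol{\ell}\cap\boldsymbol{t}$ (the pattern whose support is the coordinatewise minimum of $\boldsymbol{\ell}$ and $\boldsymbol{t}$) again has at most one nonzero per block and is therefore a legitimate member of $\mathcal{L}$, with $\boldsymbol{h}\subset\boldsymbol{\ell}$ and $\boldsymbol{h}\subset\boldsymbol{t}$.

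Then, for the non-inclusive case stated in the hypothesis, I would reuse the factorization already established in the proof of Lemma \ref{Lemma 8}, namely $PR(\boldsymbol{\ell},\boldsymbol{t})=PR(\boldsymbol{\ell},\boldsymbol{h})\,PR(\boldsymbol{h},\boldsymbol{t})$, which is pure algebra since the posterior mass $P\{\boldsymbol{h}\mid\hat{\mathbf{\Delta}},\mathcal{Y}\}$ cancels. The factor $PR(\boldsymbol{h},\boldsymbol{t})$ is an underfitted ratio ($\boldsymbol{h}\subset\boldsymbol{t}$), so Lemma \ref{Lemma 5} gives $PR(\boldsymbol{h},\boldsymbol{t})\le f_n^{\,d_{\boldsymbol{t}}-d_{\boldsymbol{h}}}$, while $PR(\boldsymbol{\ell},\boldsymbol{h})$ is an overfitted ratio ($\boldsymbol{h}\subset\boldsymbol{\ell}$) handled by Lemma \ref{Lemma 6} when $d_{\boldsymbol{\ell}}\le\tau_n$ and by Lemma \ref{Lemma 7} when $d_{\boldsymbol{\ell}}>\tau_n$, each yielding $PR(\boldsymbol{\ell},\boldsymbol{h})\le f_n^{\,d_{\boldsymbol{\ell}}-d_{\boldsymbol{h}}}$. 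Multiplying gives $PR(\boldsymbol{\ell},\boldsymbol{t})\le f_n^{\,d_{\boldsymbol{\ell}}+d_{\boldsymbol{t}}-2d_{\boldsymbol{h}}}$, exactly the bound of Lemma \ref{Lemma 8}.

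The remaining step is to show $d_{\boldsymbol{\ell}}+d_{\boldsymbol{t}}-2d_{\boldsymbol{h}}=D(\boldsymbol{\ell},\boldsymbol{t})$, which I would verify by a per-block accounting over the edges $(i,j)$: a block contributes $0$ when $\boldsymbol{\ell}$ and $\boldsymbol{t}$ agree there (both zero, or both nonzero in the same location), contributes $1$ when exactly one of them is nonzero, and contributes $2$ when both are nonzero but in different coordinates, and in each case this matches the number of coordinate disagreements in that block. Summing over blocks gives the identity, hence $PR(\boldsymbol{\ell},\boldsymbol{t})\le f_n^{D(\boldsymbol{\ell},\boldsymbol{t})}$. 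Finally, to obtain the bound for every $\boldsymbol{\ell}\in\mathcal{L}$ I would observe that in the inclusive cases the supports are nested, so $D(\boldsymbol{\ell},\boldsymbol{t})=|d_{\boldsymbol{\ell}}-d_{\boldsymbol{t}}|$ and Lemmas \ref{Lemma 5}--\ref{Lemma 7} deliver the same bound directly.

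I expect the main obstacle to be bookkeeping rather than analysis: one must confirm that the overfitting estimates of Lemmas \ref{Lemma 6} and \ref{Lemma 7} remain valid with the intersection pattern $\boldsymbol{h}$ playing the role of the reference model in place of $\boldsymbol{t}$. This is legitimate because those arguments invoke only (i) the inclusion $\boldsymbol{h}\subset\boldsymbol{\ell}$, (ii) the eigenvalue control of $\mathbf{\Upsilon}_{\boldsymbol{\ell}\boldsymbol{\ell}}$ from Lemma \ref{Lemma 3}, and (iii) the magnitude bound on $\mathbf{\Upsilon}\mathbf{\Theta}^0+\hat{\boldsymbol{a}}$ from Lemma \ref{Lemma 4}, the latter two being properties of the data and of the true $\mathbf{\Theta}^0$ (supported on $\boldsymbol{t}\supseteq\boldsymbol{h}$) rather than of the particular comparison model; moreover $d_{\boldsymbol{h}}\le d_{\boldsymbol{t}}\le\tau_n$ keeps the sparse/dense thresholding intact, so the constituent $f_n$-bounds transfer without change.
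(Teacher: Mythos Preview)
Your proposal is correct and follows exactly the route the paper takes: the paper's own proof is the single sentence ``straightforward application of Lemmas \ref{Lemma 5}, \ref{Lemma 6}, \ref{Lemma 7} and \ref{Lemma 8}'', and you have simply spelled out what that application entails, including the intersection-pattern factorization already used in Lemma \ref{Lemma 8} and the identity $d_{\boldsymbol{\ell}}+d_{\boldsymbol{t}}-2d_{\boldsymbol{h}}=D(\boldsymbol{\ell},\boldsymbol{t})$, which the paper leaves implicit. Your closing remark that Lemmas \ref{Lemma 6}--\ref{Lemma 7} transfer with $\boldsymbol{h}$ in place of $\boldsymbol{t}$ is also the unstated step behind the paper's invocation of those lemmas inside the proof of Lemma \ref{Lemma 8}.
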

\begin{proof}
The proof is straightforward application of Lemmas \ref{Lemma 5}, \ref{Lemma 6}, \ref{Lemma 7} and \ref{Lemma 8}.
\end{proof}

\begin{proof}[Proof of Theorem 1]
\begin{equation}
\begin{split}
\frac{1 - P \left\{ \mathbf{\Theta}\in\mathcal{M}_{\boldsymbol{t}} | \hat{\mathbf{\Delta}}, \mathcal{Y}\right\}}{P \left\{ \mathbf{\Theta}\in\mathcal{M}_{\boldsymbol{t}} | \hat{\mathbf{\Delta}}, \mathcal{Y}\right\}}& = \sum\limits_{\boldsymbol{\ell}\neq \boldsymbol{t}} PR(\boldsymbol{\ell}, \boldsymbol{t}) \\
&= \sum\limits_{\boldsymbol{\ell}\neq \boldsymbol{t}} \sum\limits_{j=1}^{\binom{p}{2}} PR(\boldsymbol{\ell}, \boldsymbol{t})I_{\{D(\boldsymbol{\ell}, \boldsymbol{t})=j\}}\\
&\leq \sum\limits_{j=1}^{\binom{p}{2}} \binom{\binom{p}{2}}{j} f_n^{j}\\
& \leq  \sum\limits_{j=1}^{\binom{p}{2}} {\binom{p}{2}}^j f_n^{j}\\
& \leq \sum\limits_{j=1}^{p^2} \left(p^2 f_n\right)^j \\
& \leq \frac{p^2 f_n}{1 - p^2 f_n} \to 0, \quad \text{as} \quad n \to \infty.
\end{split}
\end{equation}
The last two inequalities follow from the fact that $p^2 f_n < 1$ and $p^2 f_n \to 0$. Which follows from (\ref{f_n}) and choice of $\varepsilon_0 = \frac{c^2}{K\tilde{\varepsilon_0}}$. 
\end{proof}

\begin{proof}[Proof of Theorem 2]
For simplicity in notation, let $\epsilon_n = \sqrt{\frac{d_{\boldsymbol{t}} \log p}{n}}$.  First note that for any constant $G$,

\begin{equation*}
\begin{split}
\mathbb{E}_0& \left[ P  \left(\| \mathbf{\Theta} - \mathbf{\Theta}^0\|_2 > G \epsilon_n |  \hat{\mathbf{\Delta}}, \mathcal{Y} \right)\right] \\
&= \sum\limits_{\boldsymbol{\ell} \in \mathcal{L}} \mathbb{E}_0 \left[ P \left(\| \mathbf{\Theta} - \mathbf{\Theta}^0\|_2 > G \epsilon_n | \boldsymbol{\ell}, \hat{\mathbf{\Delta}}, \mathcal{Y} \right) P\left( \boldsymbol{\ell}| \hat{\mathbf{\Delta}}, \mathcal{Y} \right)\right] \\
& \leq \mathbb{E}_0 \left[ P \left(\| \mathbf{\Theta} - \mathbf{\Theta}^0\|_2 > G \epsilon_n | \boldsymbol{t}, \hat{\mathbf{\Delta}}, \mathcal{Y} \right)\right] + \mathbb{E}_0 \left[ \sum\limits_{\boldsymbol{\ell} \neq \boldsymbol{t}} P \left(\boldsymbol{\ell}| \hat{\mathbf{\Delta}}, \mathcal{Y}\right)\right].
\end{split}
\end{equation*}
By Theorem 1, it is sufficient to prove $\mathbb{E}_0 \left[ P \left(\| \mathbf{\Theta} - \mathbf{\Theta}^0\|_2 > G \epsilon_n | \boldsymbol{t}, \hat{\mathbf{\Delta}}, \mathcal{Y} \right)\right] \quad \to 0$ as $n \to \infty$. First note that
\begin{equation*}
\begin{split}
& P\left(\| \mathbf{\Theta} - \mathbf{\Theta}^0\|_2 > G \epsilon_n | \boldsymbol{t}, \hat{\mathbf{\Delta}}, \mathcal{Y} \right) = P \left(\| \mathbf{\Theta}_{\boldsymbol{t}} - \mathbf{\Theta}^0_{\boldsymbol{t}} \|_2 > G \epsilon_n | \boldsymbol{t}, \hat{\mathbf{\Delta}}, \mathcal{Y}\right),
\end{split}
\end{equation*}
now, from (4.4), it is easy to see that
\begin{equation} \label{Proof of theorem 2, 1}
\begin{split}
\left(\mathbf{\Theta}_{\boldsymbol{t}}  | \boldsymbol{t}, \hat{\mathbf{\Delta}}, \mathcal{Y}\right)\sim MVN\left[\mathfrak{M} , \left(n \mathbf{\Upsilon} + \mathbf{\Lambda}\right)_{\boldsymbol{t}\boldsymbol{t}}^{-1}\right],
\end{split}
\end{equation}
with $\mathfrak{M} = - n \left(n \mathbf{\Upsilon} + \mathbf{\Lambda}\right)_{\boldsymbol{t}\boldsymbol{t}}^{-1} \boldsymbol{a}_{\boldsymbol{t}}$. Hence,
\begin{equation}\label{Proof of theorem 2, 2}
\begin{split}
\mathbb{E}_{0} \left[P \left(\| \mathbf{\Theta} - \mathbf{\Theta}^0\|_2 > G \epsilon_n | \hat{\mathbf{\Delta}}, \mathcal{Y} \right) \right] &\leq \mathbb{E}_{0} \left[P \left(\| \mathbf{\Theta}_{\boldsymbol{t}} - \mathfrak{M}_{\boldsymbol{t}} \|_2 > \frac{G \epsilon_n }{2}| \boldsymbol{t}, \hat{\mathbf{\Delta}}, \mathcal{Y}\right) \right]\\
& + \mathbb{E}_{0} \left[P \left(\| \mathfrak{M}_{\boldsymbol{t}} - \mathbf{\Theta}^0_{\boldsymbol{t}} \|_2 > \frac{G \epsilon_n }{2}| \boldsymbol{t}, \hat{\mathbf{\Delta}}, \mathcal{Y}\right) \right].
\end{split}
\end{equation}
Now,
\begin{equation*}
\begin{split}
 P \left(\| \mathbf{\Theta}_{\boldsymbol{t}} - \mathfrak{M}_{\boldsymbol{t}} \|_2 > \frac{G \epsilon_n }{2}| \boldsymbol{t}, \hat{\mathbf{\Delta}}, \mathcal{Y}\right) &\leq  P \left( \max\limits_{1 \leq l \leq d_{d_{\boldsymbol{t}}}} | \left( \mathbf{\Theta}_{\boldsymbol{t}} - \mathfrak{M}_{\boldsymbol{t}} \right)_l | > \frac{G \epsilon_n }{2\sqrt{d_{\boldsymbol{t}}}}| \boldsymbol{t}, \hat{\mathbf{\Delta}}, \mathcal{Y} \right) \\
& \leq \sum\limits_{l=1}^{d_{\boldsymbol{t}}} P\left(  | \left( \mathbf{\Theta}_{\boldsymbol{t}} - \mathfrak{M}_{\boldsymbol{t}} \right)_l | > \frac{G \epsilon_n }{2\sqrt{d_{\boldsymbol{t}}}}| \boldsymbol{t}, \hat{\mathbf{\Delta}}, \mathcal{Y} \right),
\end{split}
\end{equation*}
denoting the $l^{\text{th}}$ element of $\left( \mathbf{\Theta}_{\boldsymbol{t}} - \mathfrak{M}_{\boldsymbol{t}} \right)$ by $\left( \mathbf{\Theta}_{\boldsymbol{t}} - \mathfrak{M}_{\boldsymbol{t}} \right)_l$ and in view of (\ref{Proof of theorem 2, 1}),

$$\left( \mathbf{\Theta}_{\boldsymbol{t}} - \mathfrak{M}_{\boldsymbol{t}} \right)_l \sim N(0, \rho_{l}),$$ where by Lemma \ref{Lemma 3}, $\rho_l \leq \text{eig}_{max}\left(n \mathbf{\Upsilon} + \mathbf{\Lambda}\right)_{\boldsymbol{t}\boldsymbol{t}}^{-1} = \frac{1}{\text{eig}_{min}\left(n \mathbf{\Upsilon}_{\boldsymbol{t}\boldsymbol{t}}\right)} \leq \frac{2}{nK\varepsilon_0}$. Therefore,

\begin{equation}\label{Proof of theorem 2, 3}
\begin{split}
 P \left(\| \mathbf{\Theta}_{\boldsymbol{t}} - \mathfrak{M}_{\boldsymbol{t}} \|_2 > \frac{G \epsilon_n }{2}| \boldsymbol{t}, \hat{\mathbf{\Delta}}, \mathcal{Y} \right) & \leq \sum\limits_{l=1}^{d_{\boldsymbol{t}}} P \left(  | \left( \mathbf{\Theta}_{\boldsymbol{t}} - \mathfrak{M}_{\boldsymbol{t}} \right)_l | > \frac{G \epsilon_n }{2\sqrt{d_{\boldsymbol{t}}}}| \boldsymbol{t}, \hat{\mathbf{\Delta}}, \mathcal{Y} \right) \\
& \leq d_{\boldsymbol{t}} P \left(  | \left( \mathbf{\Theta}_{\boldsymbol{t}} - \mathfrak{M}_{\boldsymbol{t}} \right)_1 | > \frac{G \epsilon_n }{2\sqrt{d_{\boldsymbol{t}}}}| \boldsymbol{t}, \hat{\mathbf{\Delta}}, \mathcal{Y} \right)\\
& \leq d_{\boldsymbol{t}} P\left(  |Z| > \frac{\sqrt{nK\varepsilon_0}G\epsilon_n}{2\sqrt{2d_{\boldsymbol{t}}}} \right)\\
& \leq d_{{\boldsymbol{t}}} e^{-\frac{nK\varepsilon_0 G^2 \epsilon_n^2}{8d_{\boldsymbol{t}}}},
\end{split}
\end{equation}
where the last inequality was achieved using the Mills ratio inequality. Moving onto the second term in (\ref{Proof of theorem 2, 1}), and recalling that we are restricted to the even $C_{1,n} \cap C_{2,n}$, for all large $n$ we have that
\begin{equation*}
\begin{split}
\| \mathfrak{M}_{\boldsymbol{t}} - \mathbf{\Theta}^0_{\boldsymbol{t}} \|_2 &= \|n \left(n \mathbf{\Upsilon} + \mathbf{\Lambda}\right)_{\boldsymbol{t}\boldsymbol{t}}^{-1} \boldsymbol{a}_{\boldsymbol{t}} + \mathbf{\Theta}^0_{\boldsymbol{t}} \|_2 \\
& = \|\left(\mathbf{\Upsilon} + \frac{1}{n}\mathbf{\Lambda}\right)_{\boldsymbol{t}\boldsymbol{t}}^{-1} \boldsymbol{a}_{\boldsymbol{t}} + \mathbf{\Theta}^0_{\boldsymbol{t}} \|_2 \\
& \leq \frac{1}{\text{eig}_{\min}\left(\mathbf{\Upsilon} + \frac{1}{n}\mathbf{\Lambda}\right)_{\boldsymbol{t}\boldsymbol{t}}}\| \boldsymbol{a}_{\boldsymbol{t}} + \left(\mathbf{\Upsilon} + \frac{1}{n}\mathbf{\Lambda}\right)_{tt}\mathbf{\Theta}^0_{\boldsymbol{t}}\|_2\\
&\leq \frac{4}{3K\varepsilon_0}\|\mathbf{\Upsilon}_{\boldsymbol{t}\boldsymbol{t}}\mathbf{\Theta}^0_{\boldsymbol{t}} + \boldsymbol{a}_{\boldsymbol{t}} + \frac{1}{n}\mathbf{\Lambda}_{\boldsymbol{t}\boldsymbol{t}}\mathbf{\Theta}^0_{\boldsymbol{t}}\|_2 \\ 
& \leq \frac{4}{3K\varepsilon_0} \left( \sqrt{d_{\boldsymbol{t}}}\|\mathbf{\Upsilon}_{\boldsymbol{t}\boldsymbol{t}}\mathbf{\Theta}^0_{\boldsymbol{t}} + \boldsymbol{a}_{\boldsymbol{t}}\|_{\max} + \|\frac{1}{n}\mathbf{\Lambda}_{\boldsymbol{t}\boldsymbol{t}}\mathbf{\Theta}^0_{\boldsymbol{t}}\|_2 \right) \\
& \leq \frac{4}{3K\varepsilon_0} \left(c_0 \sqrt{\frac{d_{\boldsymbol{t}} \log p}{n}} + \frac{\sqrt{d_{\boldsymbol{t}} }R_n\|\mathbf{\Lambda}\|_{\max}}{n} \right)\\
& \leq \frac{4}{3K\varepsilon_0} \left(c_0 \sqrt{\frac{d_{\boldsymbol{t}} \log p}{n}} + \frac{\sqrt{d_{\boldsymbol{t}} }\sqrt{n \log p}\|\mathbf{\Lambda}\|_{\max}}{n} \right)\\
& = \frac{4(c_0 + \|\mathbf{\Lambda}\|_{\max})}{3K\varepsilon_0} \sqrt{\frac{d_{\boldsymbol{t}} \log p}{n}},
\end{split}
\end{equation*}
where we used Lemma \ref{Lemma 3} to get the second inequality, used Lemma \ref{Lemma 4} with $c_0 = 2c \sqrt{K} + \frac{3CK^2}{\varepsilon_0}$ to get the fourth inequality and the last inequality is a direct application of the assumption. Hence, by taking $G = \frac{4(c_0 + \|\mathbf{\Lambda}\|_{\max})}{3K\varepsilon_0}$, we have that
\begin{equation}\label{Proof of theorem 2, 4}
\begin{split}
P \left(\| \mathfrak{M}_{\boldsymbol{t}} - \mathbf{\Theta}^0_{\boldsymbol{t}} \|_2 > \frac{G \epsilon_n }{2}| \boldsymbol{t}, \hat{\mathbf{\Delta}}, \mathcal{Y}\right) &= P\left(\| \mathfrak{M}_{\boldsymbol{t}} - \mathbf{\Theta}^0_{\boldsymbol{t}} \|_2 > \frac{G \epsilon_n }{2} \right)\\
&\leq 1- P\left(C_{1,n}\cap C_{2,n} \right) \to 0.
\end{split}
\end{equation}
Thus, by combining (\ref{Proof of theorem 2, 1}), (\ref{Proof of theorem 2, 2}), (\ref{Proof of theorem 2, 3}), and (\ref{Proof of theorem 2, 4}), we have that
\begin{equation}
\begin{split}
\mathbb{E}_{0} & \left[P\left(\| \mathbf{\Theta} - \mathbf{\Theta}^0\|_2 > G \epsilon_n | \hat{\mathbf{\Delta}}, \mathcal{Y} \right) \right] \\
&\leq  d_{{\boldsymbol{t}}} e^{-\frac{nK\varepsilon_0 G^2 \epsilon_n^2}{8d_{\boldsymbol{t}}}} + 0 = d_{{\boldsymbol{t}}} e^{-\frac{32c_0^2 \log p}{9 K \varepsilon_0}} \to 0 \quad \text{as} \quad n \to \infty,
\end{split}
\end{equation}
which completes the proof.
\end{proof}

\newpage
\newpage
\small

\end{document}